\newcommand{\doi}[1]{doi: \href{https://doi.org/#1}{\nolinkurl{#1}}}
\def\namedlabel#1#2{\begingroup
   \def\@currentlabel{#2}%
   \label{#1}\endgroup
}
\title{The Price of Upwardness\,\thanks{This work was initiated at the Dagstuhl Seminar ``Beyond-Planar Graphs: Models, Structures and Geometric Representations'' (No.\ 24062), February 2024.}}
\author[Patrizio Angelini et al.]{
  Patrizio Angelini\affiliationmark{1}
  \and Therese Biedl\affiliationmark{2}\thanks{Research was supported by NSERC FRN RGPIN-2020-03958.}\;
  \and Markus Chimani\affiliationmark{3}
  \and Sabine Cornelsen\affiliationmark{4}
  \and Giordano Da Lozzo\affiliationmark{5}\thanks{Supported, in part, by MUR of Italy (PRIN Project no.~2022ME9Z78 ~-- NextGRAAL and PRIN Project no.~2022TS4Y3N~-- EXPAND).}\;
  \and Seok-Hee Hong\affiliationmark{6}
  \and Giuseppe Liotta\affiliationmark{7}$^\ddagger$\thanks{Supported in part by MUR PON Proj. ARS01\_00540}\;
  \and Maurizio~Patrignani$^\ddagger$\affiliationmark{5}
  \and Sergey~Pupyrev\affiliationmark{8}
  \and Ignaz~Rutter\affiliationmark{9}\thanks{Supported by the Deutsche Forschungsgemeinschaft (DFG, German Research Foundation) -- 541433306.}\;
  \and Alexander~Wolff\affiliationmark{{10}}
}
\affiliation{
  ${}^{\phantom{0}1}$ John Cabot University, Rome, Italy\\
  ${}^{\phantom{0}2}$ University of Waterloo, Waterloo, Canada\\
  ${}^{\phantom{0}3}$ Universit\"at Osnabr\"uck, Osnabr\"uck, Germany\\
  ${}^{\phantom{0}4}$ University of Konstanz, Konstanz, Germany\\
  ${}^{\phantom{0}5}$ Roma Tre University, Rome, Italy \\
  ${}^{\phantom{0}6}$ University of Sydney, Sydney, Australia\\
  ${}^{\phantom{0}7}$ University of Perugia, Perugia, Italy\\
  ${}^{\phantom{0}8}$ Menlo Park, CA, U.S.A.\\
  ${}^{\phantom{0}9}$ University of Passau, Passau, Germany\\
  ${}^{10}$ Universität Würzburg, Würzburg, Germany
}
\keywords{Upward drawings, beyond planarity, upward $k$-planarity, upward outer-1-planarity}
\theoremstyle{plain}
\newtheorem{theorem}{Theorem} 
\newtheorem{lemma}{Lemma}
\newtheorem{proposition}{Proposition}
\newtheorem{corollary}{Corollary}
\newtheorem{observation}{Observation}
\definecolor{defblue}{rgb}{0.121,0.47,0.705}
\DeclareMathOperator{\bw}{bw}
\DeclareMathOperator{\lcr}{lcr^\uparrow}
\DeclareMathOperator{\skel}{skel}
\DeclareMathOperator{\refn}{refn}
\DeclareMathOperator{\expn}{expn}
\begin{document}

\publicationdata{vol.~27:3}{2025}{5}{10.46298/dmtcs.15222}{2025-02-11; 2025-07-10}{2025-07-17}

\maketitle

\begin{abstract}
        Not every directed acyclic graph (DAG) whose underlying undirected graph is planar admits an upward planar drawing. We are interested in pushing the notion of upward drawings beyond planarity
        by considering upward $k$-planar drawings of DAGs in which the edges are monotonically increasing in a common direction and every edge is crossed at most $k$ times for some integer $k \ge 1$. 
        We show that the number of crossings per edge in a monotone drawing is in general unbounded for the class of bipartite outerplanar, cubic, or bounded pathwidth DAGs. However, it is at most two for outerpaths and it is at most quadratic in the bandwidth in general. From the computational point of view, we prove that testing upward $k$-planarity is NP-complete already for $k =1$ and even for restricted instances for which upward planarity testing is polynomial. 
        On the positive side, we can decide in linear time whether a single-source DAG admits an upward $1$-planar drawing in which all vertices are incident to the outer face.
\end{abstract}

\renewcommand{\emph}[1]{{\color{defblue}\em #1}}

\section{Introduction}

Graph drawing ``beyond planarity'' studies the combinatorial and algorithmic questions related to representations of graphs where edges can cross but some crossing configurations are forbidden. Depending on the forbidden crossing configuration, different beyond-planar types of drawings can be defined including, for example,  RAC, $k$-planar, fan planar, and quasi planar drawings \citep*[see][for surveys and books]{DBLP:journals/csur/DidimoLM19,DBLP:books/sp/20/HT2020,DBLP:journals/csr/KobourovLM17}.

While most of the literature about beyond planar graph drawing has focused on undirected graphs (one of the few exceptions being the paper of \cite*{DBLP:journals/jgaa/AngeliniCDFBKS11}, which studies RAC upward drawings), we study \emph{upward $k$-planar drawings} of acyclic digraphs (DAGs), i.e.,  drawings of DAGs where the edges monotonically increase in $y$-direction 
and each edge can be crossed at most $k$ times.
The minimum $k$ such that a DAG admits an upward $k$-planar drawing is called its 
\emph{upward local crossing number}.
We focus on values of $k= 1,2$ and investigate both combinatorial properties and complexity questions. Our research is motivated by the observation that well-known DAGs that are not \emph{upward planar}, i.e., not upward $0$-planar, do admit a drawing where every edge is crossed at most a constant number of
times; see, e.g.,~\cref{fig:upward-planar-1-planar}.

\begin{figure}[htb]
        \centering
	\begin{subfigure}[b]{0.25\textwidth}
		\centering
		\includegraphics[page=2]{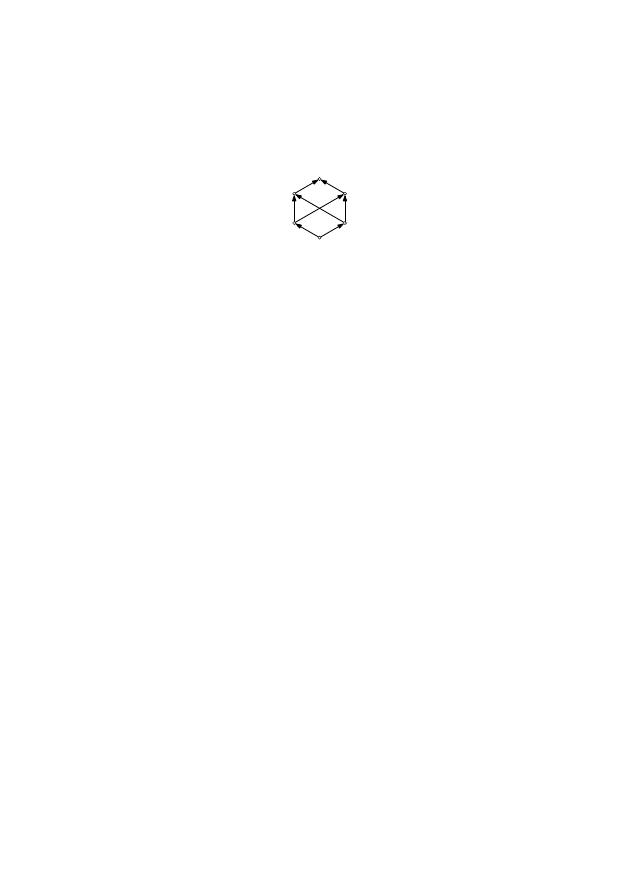}
		\subcaption{}
		\label{fig:first-figure-a}
	\end{subfigure}
        \qquad
	\begin{subfigure}[b]{0.2\textwidth}
		\centering
		\includegraphics[page=1]{first-figure}
		\subcaption{}
		\label{fig:first-figure-b}
	\end{subfigure} 

	\caption{A graph that is not upward planar but admits an upward 1-planar drawing.}
	\label{fig:upward-planar-1-planar}
\end{figure}

\paragraph*{Our contribution.}
\begin{itemize}
\item A graph is an \emph{outerpath} if it has a planar drawing in which each vertex is incident to the outer face and the internal
  faces induce a path in the dual graph.
  \cite{p-uptod-GD94} observed that there is a directed acyclic
8-vertex outerpath that is not upward-planar (see \cref{subfig:papakostas}). We strengthen this observation by showing that there exists a directed acyclic fan (that is, a very specific outerpath) that has no upward-planar drawing (\cref{prop:fan}).
On the other hand, we show that 
every directed acyclic outerpath is upward 2-planar (\cref{thm:outerpath}) and that the upward local crossing number is quadratic in the bandwidth (\cref{thm:bandwidth}). However, the upward local crossing number of bipartite outerplanar DAGs (\cref{thm:outer-unbounded}), bipartite DAGs with bounded pathwidth (\cref{cor:BipartitePathwidth}), and cubic DAGs (\cref{obs:cubic}) is in general unbounded. 
\item We show that upward $1$-planarity testing is NP-complete, even for graph families where upward planarity testing can be solved in polynomial time. These include: single-source single-sink series-parallel DAGs with a fixed rotation system; single-source two-sink series-parallel DAGs where the rotation system is not fixed; and  single-source single-sink DAGs without fixed rotation system that can be obtained from a $K_4$ by replacing the edges with series-parallel DAGs (\cref{th:st-negative}). 
\item Finally, following a common trend in the study of beyond planar graph representations,  we consider the \emph{outer model}, in which all vertices are required to lie on a common face while maintaining the original requirements~\citep*{DBLP:journals/csur/DidimoLM19,DBLP:books/sp/20/HT2020,DBLP:journals/csr/KobourovLM17}. 
We prove that testing whether a single-source DAG admits an upward outer-1-planar drawing can be done in linear time (\cref{th:outer-testing-short}).  
\end{itemize}

\paragraph*{Related Work.} 

A drawing of a graph is \emph{monotone} if all edges are drawn monotone with respect to some direction, e.g., a drawing is \emph{y-monotone} or \emph{upward}, if each edge intersects each horizontal line at most once.  
The corresponding crossing number is introduced and studied by~\cite{Valtr2007} and \citet*{FPSS2013}.
\cite{schaefer:17-22} mentions the upward crossing number and the local crossing number but not their combination.
\citet[p.~64]{schaefer:17-22} also showed that a drawing with the minimum number of crossings per edge can require incident edges that cross.  
The edges of the provided 4-planar example graph can be oriented such that the resulting directed graph admits an upward 4-planar drawing. 
Thus, also an upward drawing that achieves the minimum local crossing number can require incident edges that cross.
Also, the so-called strong Hanani--Tutte theorem carries over to directed graphs:  
\citet[Theorem~3.1]{FPSS2013} showed that every undirected graph that has a monotone drawing where any pair of
independent edges crosses an even number of times also has a planar monotone drawing with the same vertex positions.
This implies 
that in any upward drawing of a graph that is not upward-planar there must be a pair of independent edges that crosses an odd number of times.

Upward drawings of directed acyclic graphs have been studied in the context of (upward) book embeddings. In that
model the vertices are drawn on a vertical line (a spine) following a topological order of the graph,
while all edges are pointing upwards. To reduce the edge crossings, edges are partitioned into the fewest number of 
crossing-free subsets (pages). Studying upward book embeddings is a popular topic, which is usually centered
around determining the smallest number of pages for various graph classes \citep*{HeathPT99,DBLP:conf/gd/FratiFR11,FratiFR13,NollenburgP23,JungeblutMU23} 
or deciding whether a graph admits
an upward drawing with a given number of pages
\citep*{BinucciLGDMP19,BhoreLMN23,BekosLFGMR23}.
Our model is related to \emph{topological book embeddings}~\citep*{MS2009,GiordanoLMSW15}, which are a relaxed version of 2-page book embeddings in which edges are allowed to cross the spine. While papers about topological  book embeddings insist on planar drawings and minimize the number of spine crossings, we do allow crossings and want to bound the maximum
number of crossings per edge (ignoring the spine).

\section{Preliminaries}\label{sec:preli}

A \emph{drawing} $\Gamma$ of a graph~$G$ maps the vertices of~$G$ to
distinct points in the plane and the edges of $G$ to open Jordan curves
connecting their respective endpoints but not containing any other vertex point.
A \emph{crossing} between two edges is a common point of their curves, other than a common end point.
A drawing is \emph{simple} if (a) there are no self-crossings (the Jordan curves are simple), (b) no two edge curves share more than one point, and (c) no three edge curves share a common internal point.

For a vertex~$v$ of~$G$ and a drawing $\Gamma$ of $G$, let $x_{\Gamma}(v)$ and $y_{\Gamma}(v)$ denote the x- and y-coordinates of~$v$ 
in $\Gamma$, respectively; when $\Gamma$ is clear from the context, we may omit it and simply use the notation $x(v)$ and $y(v)$.  
A \emph{face} of $\Gamma$ is a region of the plane delimited by maximal uncrossed segments of the edges of $G$. 
The unique unbounded face of $\Gamma$ is its \emph{outer face}, the other faces are its \emph{internal faces}. 
An \emph{outer edge} is one incident to the outer face; all other edges are \emph{inner edges}. 
The \emph{rotation} of a vertex~$v$ in~$\Gamma$ is the counterclockwise cyclic order of the edges incident to~$v$. 
The \emph{rotation system} of~$\Gamma$ is the set of rotations of its vertices. 

The drawing $\Gamma$ is \emph{planar} if no two of its edges cross; it
is \emph{$k$-planar} if each edge is crossed at most $k$ times.  A
graph is \emph{($k$-)planar} if it admits a ($k$-)planar drawing; it
is \emph{outer ($k$-)planar} if it admits a ($k$-)planar drawing where
all vertices are incident to the outer face.
A \emph{planar embedding} $\mathcal{E}$ of a planar graph $G$ is an equivalence class of planar drawings of~$G$, namely those that have the same set of faces. Each face can be described as a sequence of edges and vertices of~$G$ which bound the corresponding region in the plane; each such sequence is a face of~$G$ in the embedding~$\mathcal{E}$. A planar embedding~$\mathcal{E}$ of a connected graph can also be described by specifying the rotation system and the outer face associated with any drawing of~$\mathcal{E}$.

Let $\Gamma$ be a non-planar drawing of a graph~$G$; the \emph{planarization} of~$\Gamma$ is the planar drawing~$\Gamma'$ of the \emph{planarized graph}~$G'$ obtained by replacing each crossing of $\Gamma$ with a dummy vertex. If $\Gamma$ is 1-planar, the planarization can be obtained as follows.
Let $uv$ and $wz$ be any two edges that cross in $\Gamma$; they are replaced in $\Gamma'$ by the edges $ux$, $xv$, $wx$ and $xz$, where $x$ is the dummy vertex. Two non-planar drawings of a graph $G$ have the same \emph{embedding} if their planarizations have the same planar embedding.
An embedding $\mathcal{E}$ of $G$ can also be described by specifying the planarized graph $G'$ and one of its planar embeddings. A planar graph with a given planar embedding is also called \emph{plane graph}. 
An \emph{outerplane graph} is a plane graph whose vertices are all incident to the outer face.
A \emph{fan} is a maximal outerpath that contains a vertex $c$ that is adjacent to
all other vertices; we call $c$ the \emph{central vertex} of the fan.
A \emph{$2$-tree} is a planar graph that can be reduced to an edge by iteratively removing a degree-two vertex that closes a $3$-cycle. A \emph{series-parallel graph} is a graph that can be augmented to a $2$-tree by adding edges (and no vertices).

A \emph{(simple, finite) directed graph} (\emph{digraph} for short) $G$ consists of a finite set $V(G)$ of \emph{vertices} and a finite set $E(G) \subseteq \{(u,v) \mid u,v \in V(G), u \ne v\}$ of ordered pairs of vertices, which are called \emph{edges}.  A \emph{source} (resp.\ \emph{sink}) of $G$ is a vertex with no incoming (resp.\ no outgoing) edges. A \emph{single-source graph} is a digraph with a single source and, possibly, multiple sinks. A digraph $G$ is an \emph{st-graph} if: (i) it is acyclic and (ii) it has a single source $s$ and a single sink $t$. 
An st-graph is a \emph{planar st-graph} if it admits a planar embedding with $s$ and $t$ on the outer face. 
We say that a drawing of a digraph $G$ is \emph{upward} if every (directed) edge~$(u,v)$ of~$G$ is mapped to a $y$-monotone Jordan arc with $y(u) < y(v)$.
Clearly, a digraph admits an upward drawing only if it does not contain a directed cycle.  Therefore, we assume for the rest of the paper that the input graph is a \emph{DAG}, a directed acyclic graph. 
Such a graph has a \emph{linear extension}, i.e., a vertex order $v_1,\dots,v_n$ such that, for any directed edge $(v_i,v_j)$, we have $i<j$. 
We say that a DAG is planar, outerplanar, or series-parallel if its underlying undirected graph is planar, outerplanar, or series-parallel, respectively.

Let $\Gamma$ be an upward drawing of a DAG $G$. By the upwardness, the
rotation system of $\Gamma$ is such that for every vertex $v$ of
$\Gamma$ the rotation of $v$ has only one maximal subsequence of
outgoing (incoming) edges. We call such a rotation system a \emph{bimodal rotation system}.
An \emph{upward embedding} of a DAG $G$ is an embedding of $G$ arising from an upward drawing; it naturally has a bimodal rotation system. 
The minimum $k$ such that a digraph $G$ admits an upward $k$-planar drawing is called its
\emph{upward local crossing number} and denoted by $\lcr(G)$.

For any positive integer $k$, we use $[k]$ as shorthand for
$\{1,2,\dots,k\}$.  A \emph{path-decomposition} of a graph $G$ is a
sequence $P=\left<X_1,\dots,X_\ell\right>$ of subsets of $V(G)$,
called \emph{buckets}, such that (1)~for each edge~$e$ of~$G$ there is
a bucket that contains both end vertices of $e$, and (2)~for every
vertex~$v$ of~$G$, the set of buckets that contain~$v$ form a
contiguous subsequence of~$P$. The \emph{width} of a
path-decomposition is one less than the size of the largest
bucket. The \emph{pathwidth} of the graph $G$ is the width of a path
decomposition of~$G$ with the smallest width.
\section{Lower Bounds}\label{sec:lower-bounds}

We start with a negative result that shows that even very special directed acyclic outerpaths may not admit upward-planar drawings, thus strengthening Papakostas' observation~(\citeyear{p-uptod-GD94}).

\begin{proposition}
  \label{prop:fan}
  Not every directed acyclic fan is upward-planar. 
\end{proposition}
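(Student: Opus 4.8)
The plan is to exhibit a single directed acyclic fan no planar embedding of which is bimodal. Since an upward-planar drawing is in particular a planar drawing, its embedding is a planar embedding of the fan, and by the observation recorded in the preliminaries this embedding is bimodal; hence a fan with no bimodal planar embedding cannot be upward-planar. I would take the smallest fan that can work: the central vertex $c$ together with a four-vertex path $v_1v_2v_3v_4$, and orient the spokes so that their directions alternate around $c$, namely $c\to v_1$, $v_2\to c$, $c\to v_3$, $v_4\to c$. The path edges are then oriented so as to keep the graph acyclic---for instance $v_2\to v_1$, $v_2\to v_3$, $v_4\to v_3$, which makes $v_2,v_4$ the sources and $v_1,v_3$ the sinks; a single linear extension, e.g.\ $v_2,v_4,c,v_1,v_3$, certifies acyclicity, and each of the three triangles $cv_iv_{i+1}$ is transitive.

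First I would show that the rotation at $c$ is the same, up to reversal, in \emph{every} planar embedding of the fan. The point is that each path edge $v_iv_{i+1}$, together with the spokes $cv_i$ and $cv_{i+1}$, bounds a triangular region of the plane that contains no other spoke, so $cv_i$ and $cv_{i+1}$ must be consecutive in the cyclic order of edges around $c$. These $n-1$ consecutiveness constraints chain together along the path and pin down the cyclic order of the spokes at $c$ to be exactly $v_1,v_2,\dots,v_n$ or its reverse. Consequently, the directions of the spokes, read around $c$, form the cyclic pattern $\mathrm{out},\mathrm{in},\mathrm{out},\mathrm{in}$ (or its reverse) in any planar embedding.

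Then I would simply observe that this pattern has four alternations between incoming and outgoing edges, whereas a bimodal rotation admits at most two (one maximal run of incoming and one maximal run of outgoing edges). Hence $c$ is not bimodal in any planar embedding, so the fan has no bimodal planar embedding and is therefore not upward-planar. I would also remark that this is best possible in size: forcing a fully alternating pattern needs an \emph{even} number of spokes, so at least four path vertices, i.e.\ a fan on five vertices.

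The step I expect to be most delicate is the claim that the rotation at $c$ is forced. A fan is only $2$-connected (not $3$-connected), so Whitney's theorem does not immediately give uniqueness of the embedding, and one must rule out the flips around separation pairs $\{c,v_i\}$ that a $2$-connected graph generally allows. The consecutiveness argument above is exactly what closes this gap: because each $v_i$ is adjacent only to $c$ and to its path-neighbours, any such flip would place some spoke $cv_i$ between two spokes that a path edge forces to be adjacent, which is impossible. Everything else---acyclicity, transitivity of the triangular faces, and the counting of alternations---is routine. (Note that the argument genuinely uses crossing-freeness: in a drawing that allows crossings the path edges may cross the spokes, the rotation at $c$ is no longer forced, and one can reorder the spokes into a bimodal pattern, which is consistent with the fan being upward $k$-planar for some $k>0$.)
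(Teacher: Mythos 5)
Your argument has a genuine gap, and it is fatal: the rigidity claim on which everything rests is false. A fan is only $2$-connected, and the separation pairs $\{c,v_i\}$ allow flips that reorder the spokes at $c$. Concretely, the cycle $cv_iv_{i+1}$ bounds a region of the plane, but nothing forces that region to be a face of the embedding: an entire subfan can be embedded inside it. For your graph, flip $v_4$ (with its edges $cv_4$ and $v_3v_4$) into the interior of the triangle $cv_2v_3$. This yields a planar embedding in which the rotation at $c$ is $v_1,v_2,v_4,v_3$, so the path-adjacent spokes $cv_2$ and $cv_3$ are \emph{not} consecutive, contradicting your consecutiveness claim; moreover the direction pattern around $c$ becomes out, in, in, out, which is bimodal. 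So the intended obstruction vanishes.

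Worse, your specific fan is in fact upward planar, so no correct argument can prove the proposition from this example. Place $v_2=(0,0)$, $v_4=(1.5,1)$, $c=(0.5,2)$, $v_1=(-1,3)$, $v_3=(3,3)$; draw all edges straight except $v_2\to v_3$, which is routed as the $y$-monotone polyline through $(0,0)$, $(2,0.5)$, $(3,3)$, passing below $v_4$. One checks that all seven edges are $y$-monotone in the correct direction and pairwise non-crossing. This also explains why the paper works harder: since in small fans bimodality can always be repaired by such flips, the paper's proof does not argue via rotation systems at all. It takes a $7$-vertex fan (three ``through'' triangles $\triangle cv_1v_2$, $\triangle cv_3v_4$, $\triangle cv_5v_6$, each with one endpoint forced below $c$ and one above), assumes an upward planar drawing exists, classifies each such triangle as positive or negative according to which side of $c$ its region covers, and derives a crossing in every case by containment arguments that exploit the edge directions. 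Any repair of your approach would have to deal with embedding flexibility in this geometric way rather than by counting alternations at $c$.
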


\begin{proof}
  Consider the 7-vertex fan $F$ depicted in \cref{fig:non-upward-planar-fan-a}.
  Suppose for a contradiction that $F$ is upward planar, that is, $F$
  admits an upward planar drawing~$\Gamma$.
  Let $c$ be the central vertex of~$F$.
  We assume that $c$ is placed at the origin.  
  We say that a triangle of~$F$ is \emph{positive} (\emph{negative}, respectively) if the
  corresponding region of the plane in~$\Gamma$ contains the point
  $(\varepsilon,0)$ ($(-\varepsilon,0)$, respectively) for a
  sufficiently small value $\varepsilon>0$.
  The triangles that have one vertex below~$c$ and one vertex
  above~$c$ (namely $t_1=\triangle cv_1v_2$, $t_3=\triangle cv_3v_4$,
  and $t_5=\triangle cv_5v_6$) are either positive or negative.

\begin{figure}[tbh]
  \begin{subfigure}[b]{.23\textwidth}
    \centering
    \includegraphics[page=1]{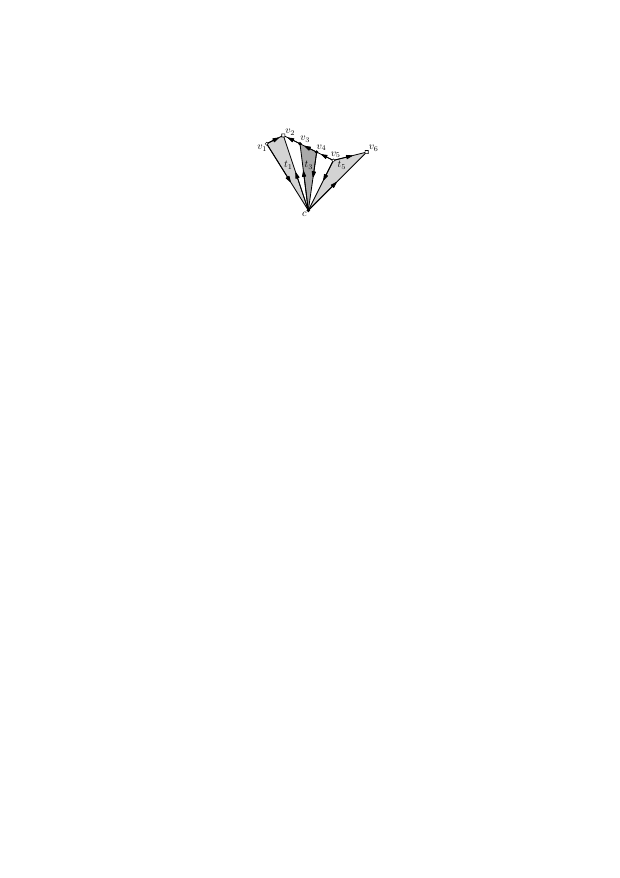}
    \subcaption{the fan $F$}
    \label{fig:non-upward-planar-fan-a}
  \end{subfigure}
  \hfill
  \begin{subfigure}[b]{.28\textwidth}
    \centering
    \includegraphics[page=2]{non-planar-fan}
    \subcaption{case: $t_5$ is contained in $t_1$}
    \label{fig:non-upward-planar-fan-b}
  \end{subfigure}
  \hfill
  \begin{subfigure}[b]{.34\textwidth}
    \centering
    \includegraphics[page=3]{non-planar-fan}
    \subcaption{%
    	a 1-planar upward drawing of $F$}
    \label{fig:non-upward-planar-fan-c}
  \end{subfigure}
  \caption{A directed acyclic fan $F$ that does not admit an upward planar
    drawing.}
  \label{fig:non-upward-planar-fan}
\end{figure}

  If both 
  $t_1$ and $t_5$
  are positive, then one must contain the other
  in~$\Gamma$, say, $t_1$ contains $t_5$; see
  \cref{fig:non-upward-planar-fan-b}.
  But then vertices~$v_3$ and~$v_4$ must also lie inside $t_1$.
  If both lie inside $t_5$, then the edge $(v_3,v_2)$ intersects an
  edge of~$t_5$.  If one of them lies inside~$t_5$ and one does not, then the edge $(v_4,v_3)$ intersects an
  edge of~$t_5$.  So both must lie outside $t_5$.
  But~$v_4$ lies on one hand above~$v_5$ and on the other hand below
  $c$ and, thus, below~$v_6$. So the edge $(v_4,c)$ intersects the
  edge $(v_5,v_6)$.  (If~$t_5$ contains~$t_1$, the edge $(c,v_3)$ intersects the edge $(v_1,v_2)$.)

  By symmetry, not both $t_1$ and $t_5$ can be negative, so exactly one
  of~$t_1$ and~$t_5$ must be negative, say, $t_1$; see
  \cref{fig:non-upward-planar-fan-c}.  Now first assume that $t_3$ is
  positive.  Due to edge $(v_3,v_2)$, vertex~$v_3$ must be
  outside~$t_5$, so $t_3$ cannot be inside~$t_5$.  On the other hand,
  $t_3$ cannot contain~$t_5$ because $v_4$ is above $v_5$.  Hence
  $t_3$ intersects~$t_5$.  Finally, assume that $t_3$ is negative.
  Due to edge $(v_5,v_4)$, vertex~$v_4$ must be
  outside~$t_1$, so $t_3$ cannot be inside~$t_1$.  On the other hand,
  $t_3$ cannot contain~$t_1$ because $v_3$ is below~$v_2$.  Hence
  $t_3$ intersects~$t_1$.
\end{proof}

By iteratively adding paths on every outer edge of an outerplanar but not upward-planar DAG, we can construct outerplanar DAGs with an unbounded upward local crossing number.

\begin{theorem}
    \label{thm:outer-unbounded}
    For each $\ell \ge 0$,
    there is a bipartite outerplanar DAG $G_\ell$ with $n_\ell = 8\cdot 3^\ell$ vertices, maximum degree $\Delta_\ell = 2\ell + 3$, and upward local crossing number greater than $\ell/6$, which is in $\Omega(\log n_\ell)$ and $\Omega(\Delta_\ell)$.
\end{theorem}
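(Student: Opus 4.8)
The plan is to build $G_\ell$ by a recursive \emph{wrapping} operation on a fixed base graph and then to read off the lower bound from the fact that every wrapping layer reproduces the non-upward-planar core as a subdivision.

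\textbf{Construction.} First I fix a base graph $G_0$: a $2$-connected bipartite outerplane DAG on $8$ vertices whose outer boundary is an octagon carrying a bipartition-respecting matching of chords, of maximum degree $3$, and admitting no upward-planar drawing (exhibited by an explicit figure; its non-upward-planarity is certified by a case analysis on the positive/negative sides of its faces, exactly as in the proof of \cref{prop:fan}). Given $G_{\ell-1}$, whose outer boundary is a Hamiltonian cycle $C_{\ell-1}$, I obtain $G_\ell$ by replacing every outer edge $(u,v)$ of $C_{\ell-1}$ by a \emph{diamond}: I keep $(u,v)$ and add two new vertices $a,b$ and the directed path $u\to a\to b\to v$, routed in the outer face just outside $(u,v)$.

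\textbf{Parameters.} All the stated parameters are then routine to check. Each diamond adds a length-$3$ directed path parallel to an edge, so acyclicity is preserved, and since a length-$3$ path joins the same two colour classes as the edge it parallels, bipartiteness is preserved; routing the new paths in the outer face keeps the graph outerplane and makes its new outer boundary the Hamiltonian cycle $C_\ell$ obtained from $C_{\ell-1}$ by subdividing each edge twice, so the recursion is well-defined. As $|C_{\ell-1}|=n_{\ell-1}$, wrapping creates $2n_{\ell-1}$ new vertices and hence $n_\ell=3n_{\ell-1}=8\cdot 3^\ell$; every vertex remains on the boundary and gains exactly two incident edges per level, so the oldest (degree-$3$) vertices reach degree $3+2\ell$, giving $\Delta_\ell=2\ell+3$.

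\textbf{Lower bound.} The structural key is that, for each level $j\in\{0,\dots,\ell\}$, the subgraph $S_j$ formed by the chords of $G_0$ together with the cycle $C_j$ is a \emph{subdivision} of $G_0$: the chords are untouched, and between two consecutive original vertices $C_j$ traces a subdivided boundary edge. Since subdivisions preserve upward planarity, every $S_j$ is non-upward-planar. Thus, in any upward $k$-planar drawing $\Gamma$ of $G_\ell$, the upward Hanani--Tutte theorem (\citet[Theorem~3.1]{FPSS2013}) supplies, for each $j$, an independent pair of edges of $S_j$ crossing an odd, hence nonzero, number of times. Because consecutive cycles $C_j$ and $C_{j'}$ share no edge, such a witnessing pair can be common to two levels only if both its edges are chords; so a single crossing can discharge only boundedly many levels once one rules out that the chord--chord crossings resolve every level simultaneously. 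Distributing the resulting $\Omega(\ell)$ essentially distinct odd crossings over the constantly many core-incident edges then forces some edge to be crossed more than $\ell/6$ times, i.e.\ $\lcr(G_\ell)>\ell/6$.

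\textbf{Main obstacle and asymptotics.} The construction and its parameters are the easy part; the crux is the final step, namely excluding the possibility that the non-upward-planarity of all nested copies $S_0,\dots,S_\ell$ is witnessed by a single (or a few) crossings among the shared chords. I expect to close this gap by invoking the directed structure---each diamond forces its two $u$--$v$ paths to be $y$-monotone and thus to bound a lens that the core cannot leave without crossing the surrounding layer---so that each layer must contribute a genuinely new crossing; the constant $1/6$ is the charging overhead onto the bounded set of core edges. Substituting $\ell=\log_3(n_\ell/8)=\Theta(\log n_\ell)$ and $\ell=(\Delta_\ell-3)/2=\Theta(\Delta_\ell)$ then yields the claimed $\Omega(\log n_\ell)$ and $\Omega(\Delta_\ell)$ bounds.
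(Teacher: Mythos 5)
Your construction and parameter count coincide with the paper's (it likewise takes Papakostas' $8$-vertex graph and adds a $3$-edge path on every outer edge at each level), and your opening move---applying the monotone Hanani--Tutte theorem of \citet{FPSS2013} to obtain an independent odd-crossing pair---is also how the paper begins. But the final charging step, which you yourself flag as the main obstacle, is a genuine gap, and the route you sketch for closing it does not work. The odd-crossing witness for level $j$ need not involve a chord or any ``core-incident'' edge: it can be a pair of edges that both belong to $C_j$. Since the cycles $C_0,\dots,C_\ell$ are pairwise edge-disjoint, such witnesses for different levels live on pairwise disjoint edge sets, so nothing accumulates on a bounded set of edges, and the claim that the $\Omega(\ell)$ distinct odd crossings can be ``distributed over the constantly many core-incident edges'' is unjustified. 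At best your argument gives $\Omega(\ell)$ crossings \emph{in total}, which is vacuous for the \emph{local} crossing number: $G_\ell$ has $\Theta(3^\ell)$ edges, so $\Omega(\ell)$ crossings spread among them force no edge to carry more than a constant number.

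The paper instead uses a concentration mechanism that your proposal lacks, working from a \emph{single} level-$0$ witness. First, it rules out that the only odd pair is the two chords: each chord lies on one of the two vertex-disjoint $4$-cycles $\langle 1,2,6,5\rangle$ and $\langle 3,4,8,7\rangle$ of $G_0$, and two disjoint closed curves must cross evenly, so some \emph{outer} edge $e$ of $G_0$ is crossed oddly by an independent edge $e'$ of $G_0$ (with $e'$ chosen outer if possible). Second, it constructs a short cycle $C$, of length at most $6$, that $e$ crosses an odd number of times and that avoids the endpoints of $e$: either $e'$ (if inner) together with the outer path between its ends, or, if $e'$ is outer, a $4$-cycle found by iteratively descending through the added $3$-edge paths along edges that $e$ crosses oddly (a process that terminates within $k$ steps because $e$ has at most $k$ crossings). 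Since $e$ crosses $C$ oddly, the endpoints of $e$ lie in different cells of the plane determined by $C$; hence every one of the $\ell$ edge-disjoint nested paths added on top of $e$---all of which join $e$'s endpoints and contain no vertex of $C$---must also cross $C$. This places at least $\ell+1$ crossings on at most $6$ edges, giving $6k \ge \ell+1$ and thus $k > \ell/6$. This Jordan-curve argument over one fixed short cycle is exactly what your ``lens'' heuristic would have to become; without it, the proof does not go through.
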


\begin{proof}
  The bipartite graph $G_0$ in \cref{subfig:papakostas} is not upward
  planar~\citep{p-uptod-GD94}. 
  For $\ell \geq 1$, we construct $G_\ell$ from $G_{\ell-1}$ by adding a $3$-edge path on every outer edge of
  the graph. \cref{subfig:G2}
  shows~$G_2$.  The maximum degree of $G_\ell$ is
  $\Delta_\ell = 2\ell + 3$.  The number of vertices is
  $n_\ell = 8 + \sum_{i=1}^\ell 8 \cdot 3^{i-1} \cdot 2 = 8 \cdot 3^\ell$.

\begin{figure}[ht]
    \begin{subfigure}{0.4\textwidth}
        \centering
        \includegraphics[page=3]{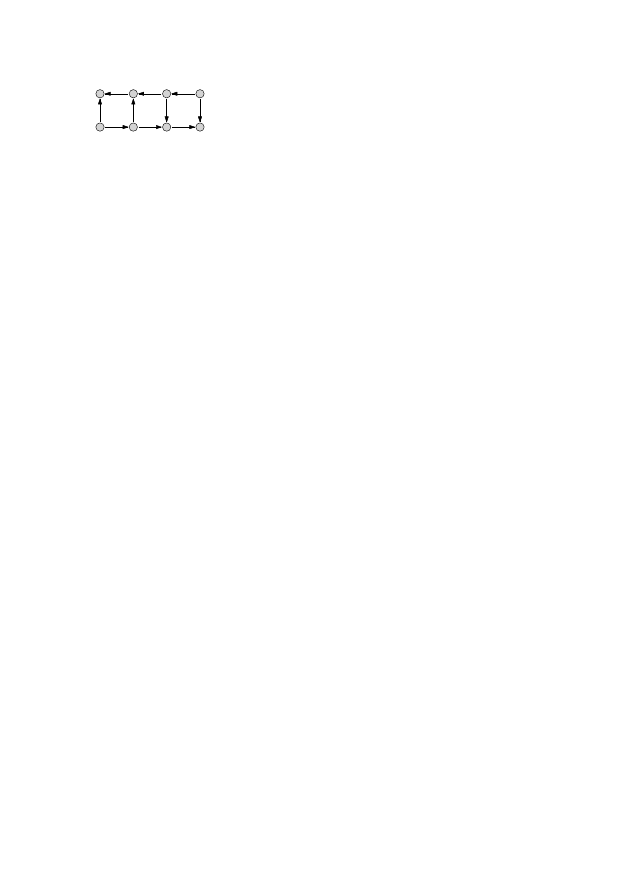}
        \subcaption{$G_0$}
        \label{subfig:papakostas}
    \end{subfigure}
    \hfill
    \begin{subfigure}{0.4\textwidth}
        \centering
	  \includegraphics[page=4]{lower_bounds}
        \subcaption{%
        	$G_2$}
        \label{subfig:G2}
    \end{subfigure}
    \caption{There is a family $(G_\ell)_{\ell \ge 0}$ of bipartite outerplanar graphs such that $G_\ell$ has $n_\ell$ vertices, maximum degree $\Delta_\ell$, and upward local crossing number in $\Omega(\Delta_\ell) \cap \Omega(\log n_\ell)$.}
    \label{fig:outerplanar}
\end{figure}
\begin{figure}[ht]
    \begin{subfigure}{0.4\textwidth}
        \centering
        \includegraphics[page=11]{lower_bounds}
        \subcaption{$e'$ inner edge of $G_0$}
        \label{subfig:inner}
    \end{subfigure}
    \hfill
    \begin{subfigure}{0.4\textwidth}
        \centering
	  \includegraphics[page=12]{lower_bounds}
        \subcaption{%
        	$e'$ outer edge of $G_0$}
        \label{subfig:outer}
    \end{subfigure}
    \caption{If $e$ crosses $e'$ an odd number of times then there is a cycle $C$ of length at most $6$ that is crossed by at least $\ell+1$ edge-disjoint paths, namely the edge $e$ and the $\ell$ paths added on top of $e$.}
    \label{fig:proofOuter}
\end{figure}

	Consider now an upward $k$-planar drawing $\Gamma$ of $G_\ell$ for some $k \le \ell$. 
    Since $G_0$ is not upward planar, there must be a pair of independent edges of $G_0$ that crosses an odd number of times in $\Gamma$.
Observe that $G_0$ has no upward planar drawing in which only the two inner edges cross an odd number of times, for otherwise the two cycles $\langle 1,2,6,5\rangle$ and $\langle 3,4,8,7\rangle$ would intersect an odd number of times, which is impossible. 
	Thus, in $\Gamma$ there must be an outer edge $e$ of $G_0$ that is crossed by an independent edge $e'$ of $G_0$ an odd number of times. 
  We choose $e'$ to be an outer edge of $G_0$, if possible.

We now determine a cycle $C$ of $G_\ell$ that is crossed by $e$ an odd
number of times and does also not contain any end vertex of $e$. If
$e'$ is an inner edge, then we take the outer path $P$ of~$G_0$ that
connects the ends of $e'$ and does not contain $e$; this is not
crossed by $e$ due to our choice of $e'$.
Moreover, since $e'$ and $e$ are independent, it follows that $P$ and $e$ do not share a vertex.
Let $C$ be the concatenation of~$P$ and~$e'$. 
Note that the length of~$C$ is at most six; see \cref{subfig:inner}.

If $e'$ is an outer edge of $G_0$, we do the following: We start
with the path $P$ of length three that was added for $e'$. Since $e$ is an edge of $G_0$ and $e$ and $e'$ are independent, it follows that $P$ and $e$ do not share a vertex. If $P$
contains an edge that is crossed an odd number of times by $e$ then 
we continue with such an edge instead of $e'$. More precisely, let
$e_1 = e'$ and initialize $i=1$. Let $P_i$ be the path of length three
that was added for $e_i$. While $P_i$ contains an edge that is crossed
an odd number of times by $e$, let $e_{i+1}$ be such an edge, let
$P_{i+1}$ be the path of length three that was added for $e_i$, and
increase $i$ by one.  Since $e$ is crossed at most $k$ times, this
process stops at some $i\leq k$.
Let $C$ be the cycle that is composed of
$P_i$ and $e_i$. See \cref{subfig:outer}. In this case $C$ has length four. Moreover $C$ shares at most the end vertices of $e'$ with $G_0$. Thus, since $e$ is an edge of $G_0$ and $e$ and $e'$ are independent, it follows that $C$ and $e$ do not share a vertex.

    Cycle $C$ might cross itself. However, it divides the plane into cells.
    Since $e$ crosses $C$ an odd number of times, it follows that the end vertices of $e$ must be in different cells of the plane. This means that not only $e$ but also the $\ell$ edge-disjoint paths that were added on top of $e$ have to cross $C$. Observe that none of these paths contains a vertex of $C$. But $C$ contains at most six edges, each of which can be crossed at most $k$ times.   
    This is impossible if $\ell \geq 6k$.  Hence, if there is an upward $k$-planar drawing then  $\ell < 6k$, which means that $k > \ell/6$.
\end{proof}

We now show that if we expand the graph class beyond outerplanar graphs, then we get a lower bound on the upward local crossing number that is even linear in the number of vertices.   
The graphs in our construction have pathwidth~2, as opposed to the graphs in \cref{thm:outer-unbounded} whose pathwidth is logarithmic.  
Observe that a \emph{caterpillar}, i.e., a tree that can be reduced to a path by removing all degree-1 vertices, has pathwidth 1, and that the pathwidth can increase by at most 1 if we add a vertex with some incident edges
or subdivide some edges.

\begin{theorem}
  \label{thm:unbounded_pw2}
  For every $k\ge 1$, there exists a (planar) DAG with $\Theta(k)$ vertices,
  maximum degree in $\Theta(k)$, and pathwidth~2 that does not admit
  an upward $k$-planar drawing.
\end{theorem}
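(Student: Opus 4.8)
The plan is to keep the parity-and-short-cycle engine of \cref{thm:outer-unbounded} but to trade its \emph{recursive} stack of paths (which makes the vertex count exponential and hence only yields $\Omega(\log n)$) for a single \emph{parallel} batch of short paths, so that the forced number of crossings becomes linear in the size while the pathwidth stays two. I would fix a constant-size non-upward-planar DAG $H$ of pathwidth two that possesses two vertex-disjoint short cycles---the bipartite outerplanar graph $G_0$ of \cref{subfig:papakostas}, with its two $4$-cycles $\langle 1,2,6,5\rangle$ and $\langle 3,4,8,7\rangle$, is the natural choice. For a parameter $N=\Theta(k)$ I would then attach, on each edge $e=(u,v)$ of $H$, a bundle of $N$ internally disjoint directed length-two paths $u\to m_i\to v$, each reusing the endpoints of $e$ and adding a single fresh vertex $m_i$. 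Before touching the lower bound I would check the bookkeeping: the resulting DAG is planar (only parallel paths between already-adjacent vertices are added), has $\Theta(N)=\Theta(k)$ vertices, has maximum degree $\Theta(N)=\Theta(k)$ at the endpoints of the bundled edges, and still has pathwidth two---inserting the size-three buckets $\{u,v,m_1\},\dots,\{u,v,m_N\}$ inside the existing bucket that already contains $u$ and $v$ preserves both the width bound and the contiguity condition, which is exactly the promised ``adding a vertex or subdividing an edge raises the pathwidth by at most one''.

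For the lower bound I would assume an upward $k$-planar drawing $\Gamma$ and restrict it to $H$. Since $H$ is not upward planar, the directed strong Hanani--Tutte theorem \citep{FPSS2013} forces a pair of independent \emph{core} edges to cross an odd number of times in $\Gamma$. As in \cref{thm:outer-unbounded} I would convert this odd-crossing pair into a short cycle $C$ of length at most six that is crossed an odd number of times by some edge $e$ sharing no vertex with $C$; I will moreover arrange that $C$ uses only original edges of $H$ (see the next paragraph). Hence the endpoints $u,v$ of $e$ lie in different cells of the subdivision of the plane determined by $C$. The $N$ length-two paths bundled on $e$ all join $u$ to $v$, are internally disjoint from $C$ (their inner vertices $m_i$ are new, while $u,v\notin V(C)$), and are pairwise edge-disjoint, so each of them crosses $C$ at least once, producing at least $N$ crossings along $C$. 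As $C$ has at most six edges and the drawing is $k$-planar, it can absorb at most $6k$ crossings; taking $N=6k+1$ gives $6k+1\le 6k$, a contradiction, so no upward $k$-planar drawing exists.

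The real work sits in guaranteeing that the separating cycle $C$ is made of \emph{original} core edges, avoids the endpoints of the bundled edge $e$, and is found at \emph{constant} depth. This is exactly where \cref{thm:outer-unbounded} is forced to recurse: when the odd-crossing partner of $e$ is an outer edge, its proof builds $C$ from an \emph{added} path and may have to descend through up to $k$ layers of added paths, which is what inflates the construction to exponential size. Two things must therefore be arranged at the level of the constant core. First, $H$ must have short cycles that do not all pass through one vertex: this rules out the fan of \cref{prop:fan} (deleting its apex leaves a tree, so \emph{every} cycle contains the apex and no short cycle can avoid a spoke), and it is why I take a core such as $G_0$ with two vertex-disjoint $4$-cycles. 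Second, I must certify---by a finite case check over the $O(1)$ independent edge pairs of $H$, rather than by recursion---that for every possible odd-crossing pair one of its edges can play the role of $e$ while the partner lies on a short original cycle avoiding $V(e)$ and of even total crossing parity off that partner, so that the ``inner-edge'' situation of \cref{thm:outer-unbounded} is always reached at depth one (augmenting $H$ by a constant number of edges if necessary, while keeping its pathwidth two). Once this localization is secured, a single parallel bundle of $N=\Theta(k)$ paths simultaneously delivers the linear lower bound and pathwidth two, and I expect this case analysis---together with pinning the core's pathwidth to exactly two---to be the main obstacle.
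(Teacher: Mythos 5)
Your construction and its bookkeeping (vertex count, degree, pathwidth~2, planarity, acyclicity) check out, but the lower-bound argument has a genuine gap at exactly the step you defer: the claim that a ``finite case check'' can guarantee a short cycle made of \emph{original} core edges that is crossed oddly by a core edge $e$ and avoids $V(e)$. For your chosen core $G_0$ this certification is impossible. Structurally, $G_0$ is the union of the two vertex-disjoint $4$-cycles $\langle 1,2,6,5\rangle$ and $\langle 3,4,8,7\rangle$ joined by two outer edges, say $f_1$ and $f_2$; these two edges form a $2$-edge cut, so \emph{every} cycle of $G_0$ containing $f_1$ also contains $f_2$, and vice versa. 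Consequently there is no cycle of $G_0$ (of any length) through $f_1$ avoiding $V(f_2)$, nor one through $f_2$ avoiding $V(f_1)$. Now suppose the restriction of the drawing to $G_0$ has exactly one odd independent pair, namely $(f_1,f_2)$: then no core edge is crossed oddly by any core cycle avoiding its endpoints, and your separation argument has nothing to act on. Nothing in your toolkit excludes this pattern: Hanani--Tutte only asserts that \emph{some} odd pair exists, and the vertex-disjoint-cycle parity constraint is vacuous for $(f_1,f_2)$ because these edges lie on no pair of disjoint cycles (any cycle through one contains the other). Indeed this is precisely the crossing pattern of the natural upward drawing of $G_0$ with one crossing: each $4$-cycle drawn upward planar, the two cut edges crossing once. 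So the case analysis you hope for cannot succeed for $G_0$, and the fallback of ``augmenting $H$ by a constant number of edges'' is left entirely unspecified.

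The irony is that your construction can be rescued by dropping the insistence that the separating cycle use only core edges: the bundles themselves supply the cycles. Given the odd pair $(e,e')$, consider the $N$ triangles $D_i$ formed by $e'$ together with its $i$-th bundle path; each $D_i$ avoids $V(e)$. If some $D_i$ is crossed oddly by $e$, then the endpoints of $e$ are separated by $D_i$, so $e$ and all $N$ bundle paths of $e$ must cross $D_i$, forcing at least $N+1$ crossings on the three edges of $D_i$; otherwise the parity identity $\mathrm{par}(e,D_i)=\mathrm{par}(e,e')+\mathrm{par}(e,\mathrm{path}_i)$ shows that $e$ crosses every bundle path of $e'$, forcing at least $N$ crossings on $e$ itself. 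Either way, $N\ge 3k$ gives a contradiction with $k$-planarity, and no analysis of the core's cycle space is needed. Note that even in this repaired form your route is genuinely different from the paper's proof of \cref{thm:unbounded_pw2}, which uses neither Hanani--Tutte nor parity: it builds a graph on four hub vertices $a,b_1,b_2,c$ with $\Theta(k)$ pendant degree-$2$ vertices of three kinds, and argues directly with $y$-coordinates, using pigeonhole to extract crossing-free closed curves $C_b$ and then $C_a$, and finally trapping $b_q$ inside $C_a$ while its $4k+1$ sinks lie outside, so that one of four specific edges collects more than $k$ crossings.
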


\begin{proof}
Let $G_k$ be the graph consisting of the four vertices $a$, $b_1$, $b_2$, and~$c$ and the following set of edges and degree-2 vertices
(see also \cref{fig:pathwidth2}):
\begin{itemize}
\item edges $(a,b_1)$ and $(a,b_2)$;
\item for $i \in [2]$ and $j \in [3k+1]$, a \emph{through-vertex at $b_i$}, i.e., a vertex $d_i^{(j)}$ and 
	edges $(b_i,d_i^{(j)})$ and $(d_i^{(j)},c)$;
\item for $j \in [6k+1]$, a \emph{source below $a$}, i.e., a vertex $s^{(j)}$ and 
	edges $(s^{(j)},a)$ and $(s^{(j)},c)$;
\item for $i \in [2]$ and $j \in [4k+1]$, a \emph{sink above $b_i$}, i.e., a vertex $t_i^{(j)}$ and 
	edges $(b_i,t_i^{(j)})$ and $(c,t_i^{(j)})$.
\end{itemize}
Clearly, $G_k$ has $\mathcal O(k)$ vertices, and  pathwidth 2, since $G-c$ is a caterpillar and has pathwidth~1.

\begin{figure}[ht]
  \begin{subfigure}[b]{.48\textwidth}
    \includegraphics[page=1]{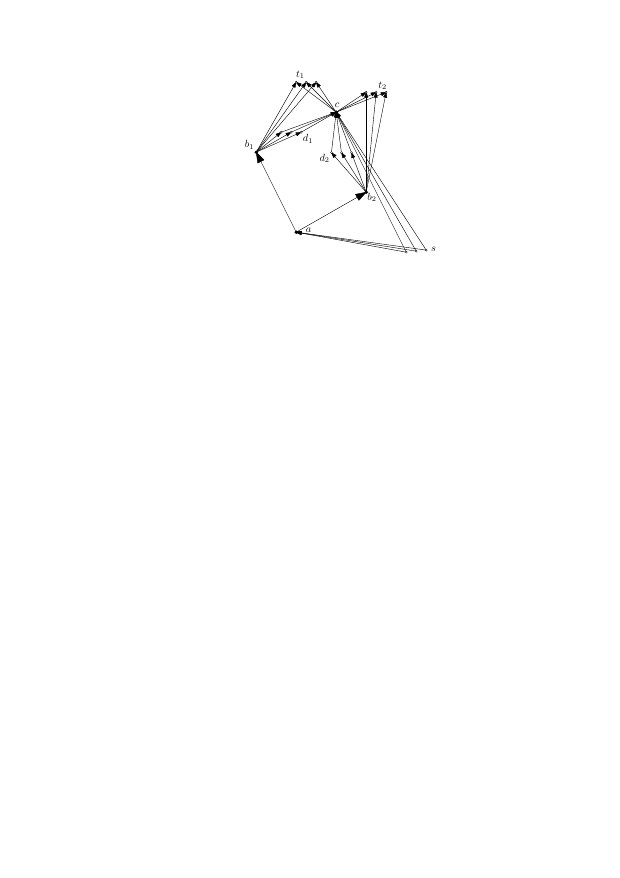}
    \subcaption{}
  \end{subfigure}
  \hfill
  \begin{subfigure}[b]{.48\textwidth}
    \includegraphics[page=2]{pathwidth2}
    \subcaption{}
  \end{subfigure}
  
  \caption{A graph of pathwidth 2 (drawn upward) that does not have an
    upward $k$-planar drawing.  (a)~We only show three of the
    $\Theta(k)$ vertices of each group.
    (b)~Cycles $C_a$ and $C_b$.
  }
  \label{fig:pathwidth2}
\end{figure}

Assume that there was an upward $k$-planar drawing $\Gamma$ of $G_k$.
Up to renaming, we may assume that $y(b_2)\leq y(b_1)$. 
Delete all but one of the through-vertices at $b_1$ from the drawing; in what  
follows, we write $d_1$ for the one that we keep (it does not matter which one).   

Among the $3k+1$ through-vertices $d_2^{(j)}$ at $b_2$, there exists at least one for which the path $\langle b_2, d_2^{(j)}, c \rangle$
crosses none of the three edges in the path $\langle a, b_1, d_1, c \rangle$, for otherwise there would be an edge with more
than $k$ crossings.   Delete all other through-vertices at $b_2$; in what follows we write $d_2$ for the one
that we keep. Let $a'$
be the topmost intersection point of $(a,b_1)$ and $(a,b_2)$ (possibly $a'=a$).
Since $y(a) \leq y(a') <y(b_2)\leq y(b_1)$
the curve $C_b$ formed by the two directed paths $\langle a',b_i,d_i,c \rangle$ (for $i \in [2]$) is
drawn without crossing in $\Gamma$.

Curve $C_b$ uses six edges, therefore among the $6k+1$ sources below $a$, there exists one, call it $s$, for which edge $(s,c)$ crosses no edge of $C_b$.   Since $y(s)<y(a)$, vertex $s$ is outside
$C_b$, and so the entire edge $(s,c)$ is outside $C_b$, except at the endpoint $c$.    In particular,
among the three edges $(d_1,c)$, $(d_2,c)$, and $(s,c)$ that are incoming at $c$,  edge $(s,c)$
is either leftmost or rightmost (but cannot be the middle one).    We assume here that $(s,c)$
is rightmost, the other case is symmetric.   Write $\{p,q\}=\{1,2\}$ such that the left-to-right order of incoming edges at $c$  is $(d_p,c)$, $(d_q,c)$, $(s,c)$.  In \cref{fig:pathwidth2}, we have $p=1$ and $q=2$.

Edge $(s,a)$ is also outside $C_b$, except perhaps at endpoint $a$, since it uses smaller $y$-coordinates.
Let $s'$ be the topmost intersection point of $(s,a)$ and $(s,c)$.
Then there are no crossings in the curve $C_a$ formed by the directed paths $\langle s', a, b_p,  d_p, c \rangle$
and $\langle s',c \rangle$.  By our choice of $p$ and $q$, vertex $d_q$ is {\em inside} $C_a$, and so is the entire path
$\langle a', b_q, d_q, c \rangle$, except at the ends since it is part of $C_b$.   In
particular, $b_q$ is inside $C_a$, whereas, for $j \in [4k+1]$, $t_q^{(j)}$ is outside $C_a$ due to 
$y(c)<y(t_q^{(j)})$.  It follows that one of the four edges $(a,b_p)$, $(b_p,d_p)$, $(d_p,c)$ and $(s,c)$
must be crossed at least $k+1$ times by edges from $b_q$ to the sinks above it.  Thus, the drawing was not $k$-planar, a contradiction.
\end{proof}

The graphs that we constructed in the proof of
\cref{thm:unbounded_pw2} are not bipartite, but one can make them
bipartite by subdividing all edges once.  This at best cuts the local 
crossing number in half, increases the pathwidth by at most~1, and
yields the following result.

\begin{corollary}\label{cor:BipartitePathwidth}
    There is a family of bipartite (planar)
    DAGs of constant pathwidth whose upward local crossing number is linear in the
    number of vertices.
\end{corollary}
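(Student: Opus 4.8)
The plan is to derive \cref{cor:BipartitePathwidth} as a direct consequence of \cref{thm:unbounded_pw2} by applying the standard bipartizing transformation of subdividing every edge once, and then verifying that the three relevant parameters (bipartiteness, pathwidth, and the size-versus-crossing tradeoff) all behave as needed. First I would fix an arbitrary $k \ge 1$, take the DAG $G_k$ from \cref{thm:unbounded_pw2}, and let $G_k'$ be the graph obtained from $G_k$ by subdividing each edge exactly once. Since $G_k$ has $\Theta(k)$ vertices and maximum degree $\Theta(k)$, it has $\Theta(k^2)$ edges in the worst case; however, inspecting the construction, each of the $\Theta(k)$ vertices in the various groups has bounded degree, and only the four special vertices $a, b_1, b_2, c$ have degree $\Theta(k)$, so the total edge count is still $\Theta(k)$. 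Hence $G_k'$ has $\Theta(k)$ vertices as well. Bipartiteness is immediate: subdividing every edge of any graph once makes it bipartite, with the original vertices on one side and the subdivision vertices on the other.

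Next I would argue the pathwidth bound. The excerpt explicitly records the observation that subdividing edges increases the pathwidth by at most~$1$. Since $G_k$ has pathwidth~$2$, the subdivided graph $G_k'$ has pathwidth at most~$3$, hence constant; this is all that is claimed. (One could even argue pathwidth stays at~$2$, but the stated ``constant'' does not require this sharper bound, so I would not belabor it.)

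The key step is the crossing-number comparison. I would show that $\lcr(G_k') \ge \lcr(G_k)/2$, or more precisely that if $G_k'$ admits an upward $k'$-planar drawing then $G_k$ admits an upward $(2k')$-planar drawing. Given an upward drawing $\Gamma'$ of $G_k'$ in which each edge is crossed at most $k'$ times, I would recover an upward drawing $\Gamma$ of $G_k$ by contracting each subdivision vertex, i.e.\ concatenating the two half-edges of each subdivided edge into a single curve for the corresponding original edge. Each such reconstructed edge is the union of two half-edges, so it accumulates at most $k' + k' = 2k'$ crossings, and the concatenation stays $y$-monotone because each half-edge is $y$-monotone and the subdivision vertex lies at an intermediate $y$-coordinate. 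By \cref{thm:unbounded_pw2}, $G_k$ has no upward $k$-planar drawing, so $2k' \ge k+1$, giving $k' \ge (k+1)/2 = \Omega(k)$. Since $G_k'$ has $\Theta(k)$ vertices, its upward local crossing number is linear in its number of vertices, as required. Letting $k$ range over the positive integers yields the desired infinite family. I expect the main obstacle to be nailing down the constant-factor bookkeeping cleanly: one must confirm that the edge count of $G_k$ (not just its vertex count) is $\Theta(k)$ so that the subdivided graph is still only $\Theta(k)$ in size, and one must state the crossing-halving argument carefully enough that ``at best cuts the local crossing number in half'' is rigorously an inequality $\lcr(G_k') \ge \lceil \lcr(G_k)/2 \rceil$ rather than a vague heuristic.
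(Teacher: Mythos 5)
Your proof is correct and follows essentially the same route as the paper: the paper also obtains \cref{cor:BipartitePathwidth} by subdividing every edge of the graphs from \cref{thm:unbounded_pw2} once, noting that this preserves bipartiteness targets, increases pathwidth by at most~1, and ``at best cuts the local crossing number in half.'' Your write-up simply makes the paper's one-line argument rigorous (the $\Theta(k)$ edge count, the contraction of subdivision vertices preserving $y$-monotonicity, and the inequality $2k' \ge \lcr(G_k)$), which is exactly the intended justification.
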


So far we needed graphs of unbounded maximum degree in order to
enforce unbounded upward local crossing number. We now show that, intrinsically, this is not necessary.

\begin{proposition}\label{obs:cubic}
  There are cubic DAGs whose upward local
  crossing number is at least linear in the number of vertices.
\end{proposition}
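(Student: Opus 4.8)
The goal is to exhibit, for every $k$, a cubic DAG on $\Theta(k)$ vertices containing a single edge $e$ that is forced to be crossed at least $k$ times in every upward drawing; since $n=\Theta(k)$, this gives $\lcr \ge k = \Omega(n)$. My first instinct is to cubify the pathwidth-$2$ construction of \cref{thm:unbounded_pw2} by replacing each high-degree hub (the vertices $a$, $b_i$, $c$) with a gadget of degree-$3$ vertices. A path gadget keeps the count at $\Theta(k)$ vertices but stretches the separating cycle $C_a$ to length $\Theta(k)$, while a binary-tree gadget keeps $C_a$ of length $O(\log k)$; in either case the $\Theta(k)$ crossing paths then only force some cycle edge to be crossed $\Omega(k/\log k)$ times, which falls short of linear. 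The lesson is that I must avoid charging the crossings to an edge of a cycle whose length grows with $k$.

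This suggests charging all crossings to a \emph{single} edge instead. Concretely, I would build an ``upward theta'' graph: a common bottom source $\beta$ and common top sink $\tau$ joined by $k$ internally disjoint, $y$-monotone directed paths $Q_1,\dots,Q_k$, make the two hubs cubic by realizing $\beta$ as a balanced binary out-tree and $\tau$ as a balanced binary in-tree (so $\beta,\tau$ are replaced by $O(k)$ vertices of degree $\le 3$ and each $Q_i$ attaches to its own leaf), and finally add one special edge $e=(u,v)$ whose tail $u$ sits ``deep inside'' the bundle and whose head $v$ sits ``outside'' it. All internal path vertices have degree $2$, so the skeleton has maximum degree $3$; the few vertices of degree below $3$ (tree leaves and roots, path interiors) can be brought up to degree exactly $3$ by routine local patching that preserves acyclicity and bimodality, so the final graph is cubic and has $\Theta(k)$ total vertices.

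The crux is to show that $e$ must cross each $Q_i$ at least once, so that $e$ accumulates $\ge k$ crossings. Since every $Q_i$ runs $y$-monotonically from the low region around $\beta$ to the high region around $\tau$, the arcs $Q_1,\dots,Q_k$ are forced into a linear (``nested'') left-to-right order between the hubs; I would make $u$ provably enclosed by all of them and $v$ provably outside all of them, so that $e$ is topologically obliged to cross every intervening arc. To turn this into an unavoidable \emph{parity} statement I would reuse the odd-crossing principle exploited in \cref{thm:outer-unbounded}: augment each $Q_i$ with $e$ and a constant-size connector into a sub-DAG that is not upward planar in such a way that the only independent edge pair which can cross an odd number of times is $(e,Q_i)$; then in any upward drawing $e$ crosses each $Q_i$ an odd, hence positive, number of times. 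As the $Q_i$ are pairwise internally disjoint, these are $k$ distinct crossings on $e$.

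The main obstacle is exactly this forcing step: I must design the constant-size connectors so that (i) the augmented graph stays cubic and acyclic with a bimodal rotation system, (ii) the $k$ non-upward-planarity certificates do not interfere, i.e.\ the guaranteed odd crossing cannot be ``paid for'' by a pair other than $(e,Q_i)$ or shifted onto a different edge, and (iii) the binary-tree cubification of the hubs introduces no monotone shortcut that lets $e$ slip past some $Q_i$ without crossing it. Verifying (ii) is the delicate part; I expect it to follow from the rigidity of upward embeddings of the theta-like skeleton (the bimodal orders at $\beta$ and $\tau$ pin down the cyclic arrangement of the $Q_i$) together with \cref{thm:outer-unbounded}'s parity argument applied layer by layer.
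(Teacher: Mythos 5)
Your proposal is incomplete at exactly the step you yourself flag as ``the crux,'' and that gap is not a deferrable technicality---it is where the approach breaks. Two concrete failures. First, nothing in your skeleton pins $u$ inside the bundle. The rotation system is not fixed (the upward local crossing number quantifies over \emph{all} upward drawings), and bimodality at a source only forces the outgoing edges to be consecutive in the rotation, not to appear in any particular order; after you replace the hubs by binary trees the drawer has even more freedom. The theta-plus-trees skeleton is planar, and since $u$ has degree~$3$ it is tied to the rest of the graph by only two edges besides $e$, so the drawer can re-embed the skeleton so that the path carrying $u$ is extremal and adjacent to the location of $v$; then $e$ needs only $O(1)$ crossings, and only the unspecified ``constant-size connectors'' could forbid this. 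Second, and decisively: the monotone Hanani--Tutte principle you invoke from \cref{thm:outer-unbounded} only guarantees that \emph{some} independent pair of edges in each non-upward-planar certificate $Q_i\cup\{e\}\cup(\text{connector}_i)$ crosses an odd number of times---never that this pair is $(e,\text{an edge of }Q_i)$. Because every vertex (including $u$ and $v$) has degree~$3$, your $k$ certificates can share essentially nothing beyond $e$ itself, so a drawer may satisfy certificate $i$ with a crossing between a connector-$i$ edge and a $Q_i$ edge. That yields $k$ forced crossings spread over $k$ \emph{disjoint} pairs, i.e.\ possibly an upward $1$-planar drawing, and no lower bound beyond a constant. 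Concentrating all $k$ odd crossings onto the single edge $e$ is precisely your conditions (i)--(iii), which remain unproven, and which I do not believe can be met by a planar skeleton decorated with constant-size gadgets.

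The paper's proof is entirely different and only three lines long: a random cubic graph on $n$ vertices has (undirected) crossing number at least $cn^2$ in expectation \citep*{dujmovic_etal:scg08}, hence such graphs exist; in \emph{every} drawing---in particular every upward drawing---its $\Theta(n)$ edges share $\Omega(n^2)$ crossings, so by pigeonhole some edge carries $\Omega(n)$ of them; now impose an arbitrary acyclic orientation. Note two morals. The statement only requires that in each drawing \emph{some} edge (which may depend on the drawing) is heavily crossed; you set out to force one fixed, pre-designated edge to be heavily crossed in all drawings, a strictly stronger property that the proposition never needs. And the source of the lower bound is expansion---which random cubic graphs have and which your theta-graph bundle conspicuously lacks---rather than any designated separating structure; this is also why your own diagnosis that nested cycles of superconstant length cannot give a linear bound (cf.\ \cref{thm:unbounded_pw2,thm:outer-unbounded}) was pointing you away from, not toward, constructions built on planar skeletons.
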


\begin{proof}
  The crossing number of a random cubic graph with $n$ vertices is
  expected to be at least $cn^2$ for some absolute constant $c>0$
  \citep*{dujmovic_etal:scg08}, and thus there exist graphs yielding
  this bound.  By the pigeon-hole principle, such a graph contains an
  edge with $\Omega(n)$ crossings among its $\Theta(n)$ edges.  Impose
  arbitrary acyclic edge directions.
\end{proof}

\section{Upper Bounds}\label{sec:upper-bounds}

The \emph{bandwidth} $\bw(G)$ of an undirected graph $G$ is the smallest positive integer~$k$
such that there is a labeling of the vertices by distinct numbers $1, \dots, n$ for which the labels
of every pair of adjacent vertices differ by at most $k$.

\begin{theorem}\label{thm:bandwidth}
  The upward local crossing number of a DAG~$G$ with maximum
  degree~$\Delta$ is at most 
  $\Delta \cdot (\bw(G) - 2) \le 2 \bw(G) (\bw(G) - 2)$, 
  so it is in $\mathcal{O}(\Delta\cdot\bw(G))\subseteq{\mathcal O}(\bw(G)^2)$.
\end{theorem}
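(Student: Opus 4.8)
The plan is to use the bandwidth labeling to produce an upward drawing and then bound the number of crossings on each edge combinatorially.

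\medskip

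\noindent\textbf{Setting up the drawing.}
First I would fix a labeling $\lambda\colon V(G)\to\{1,\dots,n\}$ witnessing the bandwidth, so that for every edge $\{u,v\}$ we have $|\lambda(u)-\lambda(v)|\le \bw(G)=:b$. The natural idea is to place the vertices on a convex curve (or on two almost-parallel lines) in the order of their labels, so that vertex $v_i$ with $\lambda(v_i)=i$ sits at $x$-coordinate~$i$. The tricky part is reconciling the bandwidth order with the \emph{upward} requirement: the drawing must be $y$-monotone with $y(u)<y(v)$ for every directed edge $(u,v)$, but the label order need not be a linear extension of the DAG. I would therefore assign the $y$-coordinate according to a linear extension (a topological order) of~$G$ while keeping the $x$-coordinate equal to the bandwidth label~$i$; this keeps the drawing upward by construction. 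Each edge is then drawn as a $y$-monotone curve, e.g.\ a straight segment or a slight bend, connecting points whose $x$-coordinates differ by at most~$b$.

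\medskip

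\noindent\textbf{Counting crossings on a fixed edge.}
The key step is to bound how many edges can cross a fixed edge $e=(u,v)$. Write $I=[\min(\lambda(u),\lambda(v)),\max(\lambda(u),\lambda(v))]$, an interval of the $x$-axis of length at most~$b$, and containing at most $b-1$ labels strictly between the endpoints of~$e$. Because every edge spans an $x$-interval of length at most~$b$, an edge $f$ can cross $e$ only if the $x$-projections of $e$ and $f$ overlap; and since edges are drawn as $y$-monotone curves sufficiently close to straight segments, an edge $f$ can cross $e$ only if $f$ has an endpoint whose label lies strictly inside~$I$ (an edge whose both endpoints lie on the same side of~$I$, or share an endpoint with~$e$, cannot cross~$e$). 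I would make this precise and then count: there are at most $b-1$ internal labels, at each such vertex there are at most $\Delta$ incident edges, and each such edge crosses $e$ at most once in the straight-line style. This yields at most $\Delta\cdot(b-2)$ crossings on~$e$ once incident edges and the two boundary vertices are excluded carefully, giving the bound $\Delta\cdot(\bw(G)-2)$. The final inequality $\Delta\le 2\,\bw(G)$ is immediate, since a vertex of degree~$\Delta$ has all its neighbors within label-distance~$b$, hence at most $2b$ of them.

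\medskip

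\noindent\textbf{Main obstacle.}
The main obstacle is making the geometric claim ``$f$ crosses $e$ only if $f$ has an endpoint with a label strictly inside $I$'' fully rigorous while simultaneously respecting \emph{both} the bandwidth $x$-order \emph{and} the topological $y$-order, since these two orders may disagree. A straight-line drawing on distinct $x$-coordinates guarantees $y$-monotonicity automatically, but I must verify that two edges whose $x$-intervals are \emph{disjoint}, or that share an endpoint, genuinely do not cross, and that no edge crosses another more than once; this is where a careful choice of drawing (strictly monotone, e.g.\ straight segments after perturbing $y$-coordinates to be distinct and consistent with the topological order) is needed. I expect the vertex-by-vertex counting to be routine once this separation lemma is established, so the crux is the clean statement that overlap of the label-intervals is necessary for a crossing, together with the bookkeeping that removes the two endpoints of $e$ from the count to get $b-2$ rather than $b$.
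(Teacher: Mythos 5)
Your drawing construction is exactly the paper's: $x$-coordinates from the bandwidth labeling, $y$-coordinates from a linear extension, straight-line edges after a small perturbation into general position. The genuine gap is in your counting step. Your separation claim --- that an edge $f$ can cross $e=(u,v)$ only if $f$ has an endpoint with label strictly inside the interval $I$ --- is false. An edge $f=(w,w')$ with $x(w)<x(u)$ and $x(v)<x(w')$, i.e.\ with both endpoints strictly \emph{outside} $I$ but on \emph{opposite} sides, can perfectly well cross $e$. For a concrete failure: if $e$ joins two consecutive labels (so $\ell=x(v)-x(u)=1$), there are no interior labels at all, yet a nearly horizontal segment from label $x(u)-1$ to label $x(v)+1$ whose endpoints have $y$-coordinates strictly between $y(u)$ and $y(v)$ must cross $e$ (nothing ties the linear-extension order of $w,w'$ to their labels, so this configuration does occur). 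Your parenthetical only excludes edges with both endpoints on the \emph{same} side of $I$, which is not sufficient, and as a result your count never accounts for these spanning edges.

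The fix --- and it is precisely how the paper argues --- is to add these spanning edges as a second case and charge each one to its left endpoint $w$: since $f$ itself obeys the bandwidth constraint, its $x$-span is at most $b=\bw(G)$, so $x(v)-b < x(w) < x(u)$, leaving at most $b-\ell-1$ possible choices for $w$. Together with the at most $\ell-1$ vertices strictly inside $I$, every crossing of $e$ is charged to one of at most $(\ell-1)+(b-\ell-1)=b-2$ vertices, each incident to at most $\Delta$ edges, giving $\Delta(b-2)$. Note that the quantity $b-2$ arises from this trade-off between the two cases, not from ``carefully excluding the two boundary vertices'' out of $b-1$ interior labels as you suggest: your case alone gives only $\ell-1\le b-1$ vertices, and no bookkeeping on that single case can recover the crossings by edges that jump over $I$ entirely.
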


\begin{proof}
  Observe that the maximum degree $\Delta$ of a graph $G$ is bounded
  in terms of the bandwidth of~$G$; namely, $\Delta \le 2 \bw(G)$.
  Consider a linear extension of~$G$.  For every vertex~$v$ of~$G$,
  let $y(v)$ be its index in the extension.  Now consider a labeling
  of~$G$ corresponding to the bandwidth.  For every vertex~$v$ of~$G$,
  let $x(v)$ be its
  label.
 
  Construct a drawing of $G$ by first placing every vertex $v$ at the
  point $(x(v),y(v))$ and by then perturbing vertices slightly so that
  the points are in general position.  Adjacent vertices are connected
  via straight-line segments.
	
  It is easy to see that the drawing is upward since it is consistent
  with the linear extension.  Consider an arbitrary edge $(u,v)$ and assume, without loss of generality, that
  $x(u)<x(v)$.  Let $\ell = x(v)-x(u)$.
The length of any edge in x-direction is bounded by $\bw(G)$; hence $\ell\le\bw(G)$.
Edge $(u,v)$ may be crossed (a)~by edges that have at least one incident vertex $w$ with $x(u)<x(w)<x(v)$, or (b)~by edges between two vertices $w$ and $w'$ with $x(w)<x(u)$ and $x(v)<x(w')$.
Thus, there are at most $\ell-1$ and $\bw(G)-\ell-1$ possible choices for $w$ in the two scenarios, respectively. Since each vertex is incident to at most $\Delta$ edges,
$(u,v)$ can be crossed at most $\Delta(\bw(G)-2)$ times.
\end{proof}	

For some graph classes, sublinear bounds on the bandwidth are known
\citep*[see, e.g.,][]{BPTW2010,FG2003,Wood2006}.  This gives upper
bounds on the upward local crossing number of many graph classes.  We
list only a few:

\begin{corollary}
  The following classes of DAGs have sublinear upward local crossing number:
  \begin{itemize}
  \item Square $k \times k$ grids have bandwidth $\Theta(k)$ and
    $\Delta=4$, hence their upward local crossing number is in
    $\mathcal O(k) = \mathcal O(\sqrt n)$.
		
  \item Planar graphs of maximum
    degree~$\Delta$
    have bandwidth ${\mathcal O}({n}/{\log_\Delta n})$ \citep{BPTW2010},
    hence their upward local crossing number is in
    $\mathcal O({n \cdot \Delta}/{\log_{\Delta} n})$. Observe that ${n \cdot \Delta}/{\log_{\Delta} n} \in o(n)$ if $\Delta \in o({\log n}/{\log \log n})$.
  \end{itemize}
\end{corollary}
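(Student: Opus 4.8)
The plan is to read every bound off \cref{thm:bandwidth} directly, which states that the upward local crossing number of a DAG $G$ is in $\mathcal{O}(\Delta\cdot\bw(G))$. No new drawing has to be constructed: for each listed class I would simply substitute its known bandwidth and maximum-degree parameters and simplify. Note that the conclusion only ever requires an \emph{upper} bound on the bandwidth, so the (sometimes harder) matching lower bounds play no role here.

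For the $k \times k$ grid I would first recall the classical facts that its maximum degree is $\Delta = 4$ and that its bandwidth is $\Theta(k)$; the only half we need is the upper bound $\bw(G) \le k$, witnessed by the row-by-row vertex labeling, under which any two adjacent vertices differ by at most $k$. Substituting $\Delta = 4$ and $\bw(G) = \mathcal{O}(k)$ into \cref{thm:bandwidth} yields an upward local crossing number in $\mathcal{O}(k)$, and since the grid has $n = k^2$ vertices this is $\mathcal{O}(\sqrt{n})$.

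For planar graphs of maximum degree $\Delta$ I would invoke the bandwidth bound $\bw(G) \in \mathcal{O}(n/\log_\Delta n)$ of~\citet{BPTW2010} as a black box and again feed it into \cref{thm:bandwidth}, obtaining $\mathcal{O}(\Delta\cdot n/\log_\Delta n) = \mathcal{O}(n\Delta/\log_\Delta n)$. The only genuine computation is the closing refinement that this quantity is $o(n)$ whenever $\Delta \in o(\log n/\log\log n)$: rewriting $\log_\Delta n = \log n/\log\Delta$ turns the per-vertex factor $\Delta/\log_\Delta n$ into $\Delta\log\Delta/\log n$, and one verifies that $\Delta \in o(\log n/\log\log n)$ forces $\log\Delta = \mathcal{O}(\log\log n)$, whence $\Delta\log\Delta \in o(\log n)$ and the factor tends to~$0$.

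The main obstacle, to the extent there is one, is purely notational: keeping the base-$\Delta$ logarithm straight in the asymptotic estimate and confirming that $\Delta \in o(\log n/\log\log n)$ is exactly the regime in which $\Delta\log\Delta = o(\log n)$. Beyond that bookkeeping, the statement is an immediate consequence of \cref{thm:bandwidth} together with the cited bandwidth estimates, so I anticipate no structural difficulty.
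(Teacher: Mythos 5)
Your proposal is correct and follows exactly the paper's route: the paper treats this corollary as an immediate consequence of \cref{thm:bandwidth}, substituting the known bandwidth and degree bounds for each class, which is precisely what you do. Your extra bookkeeping verifying that $\Delta \in o(\log n/\log\log n)$ implies $\Delta\log\Delta \in o(\log n)$ is a sound (and slightly more explicit) justification of the closing observation.
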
	

We now complement the negative result in \cref{prop:fan} by showing that
every directed acyclic outerpath allows an upward 2-planar drawing. We
start with a technical lemma on fans.

  \begin{figure}[htb]
    \begin{minipage}[b]{.27\textwidth}
      \centering
      \includegraphics[page=1]{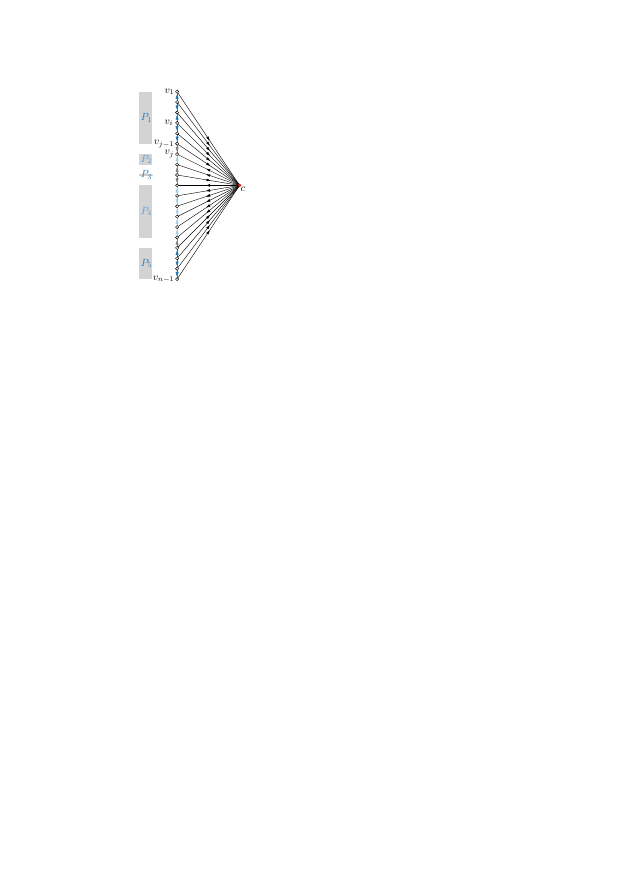}
      \subcaption{a directed fan~$G$}
      \label{fig:fan-input}
    \end{minipage}
    \hfill
    \begin{minipage}[b]{.59\textwidth}
      \centering
      \includegraphics[page=2]{fan}
      \subcaption{upward drawing of $G$ with at most
        two crossings per edge}
      \label{fig:fan-output}
    \end{minipage}
    \caption{Constructing upward $2$-planar drawings of fans according to \cref{lemma:fan}.
    }
    \label{fig:fan}
  \end{figure}

\begin{lemma}
  \label{lemma:fan}
  Let $c$ be the central vertex of a directed acyclic fan $G$, and let
  $P=\langle v_1,\dots,v_{n-1} \rangle$ be the path of the remaining
  vertices in $G$.  Let $P_1,\dots,P_k$ be an ordered partition of $P$
  into maximal subpaths such that, for every $i \in [k]$, the edges
  between $P_i$ and $c$ either are all directed towards $c$ or are all
  directed away from~$c$.  Then there is an upward 2-planar drawing of
  $G$ with the following properties:
  \begin{enumerate}
  \item no edge incident to $c$ is crossed;
  \item %
  the central vertex $c$ and
    $v_{n-1}$ have x-coordinate $n-1$, and the x-coordinates of
    $v_1,\dots,v_{n-2}$ are pairwise distinct values within $\{1,\dots,n-2\}$;
  \item for all edges all $x$-coordinates of the curves are at most
    $n-1$; all edges incident to $c$ and all edges of the subpaths
    $P_1,\dots,P_k$ are in the vertical strip between $1$ and $n-1$;
  \item\label{itm:crossing} if $P_1$ is a directed path, then the edge
    between $P_1$ and $P_2$ is crossed at most once.
  \end{enumerate}
\end{lemma}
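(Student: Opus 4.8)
The plan is to construct the drawing explicitly and incrementally along the ordered partition $P_1,\dots,P_k$, exploiting the fact that the central vertex $c$ sits at a fixed $x$-coordinate $n-1$ (the right boundary of the strip) and that within each block $P_i$ all edges to $c$ point the same way. First I would fix the $y$-coordinates from a topological order so that upwardness is automatic, and reserve the vertical line $x=n-1$ for $c$ and the last path vertex $v_{n-1}$, placing $v_1,\dots,v_{n-2}$ at the remaining integer abscissae $1,\dots,n-2$ in path order so that property~(2) holds by construction. The key idea for property~(1) (no edge incident to $c$ is crossed) is to route each spoke $cv_j$ as a curve that hugs the region ``behind'' the fan on the right side of the strip, so that the spokes emanate from $c$ in a nested, non-crossing fashion; because $c$ is at the extreme right, every spoke can be drawn $y$-monotone within the strip $[1,n-1]$ without ever crossing another spoke or a path edge, which also gives the strip-containment claims in property~(3).

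The main work is controlling crossings among the \emph{path} edges $v_jv_{j+1}$ and between path edges and spokes, and showing the per-edge crossing count never exceeds two. I would handle this block by block: within a single block $P_i$ the spokes all point the same direction, so the subdrawing of $P_i$ together with its spokes to $c$ can be made planar (an unidirectional ``sub-fan'' is upward planar, as in the easy monotone case). Crossings can therefore only arise at the \emph{boundaries} between consecutive blocks, where the direction of the spokes flips; the edge joining $P_i$ to $P_{i+1}$ is the place where a path vertex must ``reach across'' the region occupied by $c$'s spokes of the opposite orientation. I would argue that each such transition edge incurs at most two crossings with the spokes it must pass, and—crucially—that no edge is charged by two different transitions, so the global bound of two crossings per edge follows. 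Property~(\ref{itm:crossing}) is the special boundary case: if $P_1$ is already a directed (monotone) path, then the first block is ``aligned'' with the spoke orientation, so the transition edge between $P_1$ and $P_2$ meets fewer obstacles and can be shown to cross at most once rather than twice.

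I expect the hard part to be the careful geometric bookkeeping at the block boundaries: one must verify that a path edge crossing from one block into the next can always be rerouted to cross at most two spokes (never a path edge, and never an edge incident to $c$), and that the $y$-monotonicity of this reroute is compatible with the already-fixed vertex heights. This requires a clean invariant describing, after processing block $P_i$, which side of the current drawing is ``free'' and where the next block will attach; I would phrase this as an induction hypothesis on $i$ stating that the partial drawing lies in the strip, leaves $c$'s spokes uncrossed, and exposes the attaching vertex of $P_i$ on the outer boundary so that $P_{i+1}$ can be placed with at most two crossings on its connecting edge. Establishing that this invariant is maintained across a direction flip, together with a figure in the style of \cref{fig:fan}, is the crux; the remaining verifications of properties~(2) and~(3) are then immediate from the coordinate assignment.
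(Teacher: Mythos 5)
Your plan contains a self-contradiction that breaks the heart of the lemma: you claim Property~(1) (no edge incident to $c$ is crossed) by routing the spokes in a nested fashion, but your crossing analysis then lets each block-transition edge incur ``at most two crossings with the spokes it must pass.'' Both cannot hold; if transition edges cross spokes, Property~(1) fails outright. In the paper's construction the spokes are genuinely never crossed: all crossings are between \emph{path} edges --- the outgoing transition edge $(v_{j-1},v_j)$ of a block crosses either an internal path edge $(v_i,v_{i-1})$ of that block or the incoming transition edge $(v_h,v_{h-1})$, and the bookkeeping is that a transition edge crosses at most one edge when drawn and is crossed at most once afterwards. Your charging scheme (``no edge is charged by two different transitions'') also misses that an edge \emph{is} allowed to be crossed twice, just by different mechanisms.

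The second gap is your induction invariant, which asserts that after processing block $P_i$ the attaching vertex is ``exposed on the outer boundary.'' This is too strong to maintain: when a block ends with a directed run toward $c$, its last vertex is forced to lie near $c$, wedged among previously drawn structure, and in general it cannot be kept on the outer face --- this is exactly the obstruction behind \cref{prop:fan}. The paper's key idea, which your proposal lacks, is to place the trailing directed run of each block in \emph{reverse} $x$-order, with its last vertex inside a small triangle ($\triangle v_{i-1}v_ic$, or $\triangle o v_i c$), and to maintain only the weaker invariant that the leftward ray from each newly placed vertex reaches the outer face after crossing at most one edge, itself crossed at most once so far. That is what guarantees the escape of each transition edge costs exactly one path-edge crossing and yields Property~(\ref{itm:crossing}). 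Relatedly, fixing all $y$-coordinates upfront from a topological order and placing $v_1,\dots,v_{n-2}$ ``in path order'' is incompatible with this: the construction needs adaptive $y$-placement (sufficiently below $y_{\min}$, inside the triangle) and non-monotone $x$-placement within blocks, which Property~(2) deliberately permits by requiring only distinct $x$-coordinates rather than path order.
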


\begin{proof}
  We place~$c$ at $(n-1,0)$; then we place $v_1, v_2, \dots, v_{n-1}$
  above or below~$c$ depending on the direction of the edges that
  connect them to~$c$; see \cref{fig:fan} for an example.
    
  For each $\ell \in [n-2]$, we keep the invariant that, when we
  place~$v_\ell$, the \emph{leftward} ray, that is, the one in direction $1 \choose 0$
  from~$v_\ell$, reaches the outer face of the current drawing after crossing 
  at most one other edge, and that this edge is currently crossed by at most one edge.

  In order to choose appropriate y-coordinates, we maintain two values
  $y_{\min}$ and $y_{\max}$ indicating the minimum and maximum
  y-coordinate of any so far drawn vertex.  Consider now a subpath
  $P' \in \{P_1,\dots,P_k\}$.  Let $v_h$ be the first and let
  $v_{j-1}$ %
  be the last vertex of $P'$, i.e.,
  $P'=\langle v_h, v_{h+1}, \dots, v_{j-1} \rangle$.  We describe in
  detail the case that the edge from $v_h,\dots,v_{j-1}$ to~$c$ are
  directed towards~$c$ that is, $v_h$ must lie below~$c$.  The other
  case is symmetric.  We place~$v_h$ at x-coordinate $h$ and with a
  y-coordinate sufficiently below $y_{\min}$.  If $h=j-1$ we are done.
  
  We now consider the case that $j=n$ or
  $(v_{j-1},v_{j-2}) \in E(G)$; see, for example, path $P_5$ in \cref{fig:fan-output}.
  In this case, we place $v_{h+1},\dots,v_{j-1}$ using x-coordinates
  $h+1,\dots,j-1$, going up and down as needed but remaining below the
  x-axis. The edges are drawn such that all vertices of $P'$ remain on
  the outer face of the drawing.  I.e., if we use straight-line edges,
  then, for $i \in [n-2]$, the slope of $v_{i}v_{i+1}$ must be less
  than the slope of $v_{i}c$. Since we go towards~$c$, we can draw
  $P'$ and the edges that connect $v_1,v_2,\dots,v_{n-1}$ to~$c$
  without any crossings.
  
  If $j\neq n$ and $(v_{j-2},v_{j-1}) \in E(G)$, then
  let~$i \in \{h,\dots,j-1\}$ be the smallest index such that the
  subpath $\langle v_i, v_{i+1}, \dots, v_{j-1}\rangle$ is directed.
  See, for example, that last part of $P_1$ in \cref{fig:fan-output}.
  In that case, we place $v_{h+1},\dots,v_{i-1}$ at x-coordinates
  $h+1,\dots,i-1$, going slightly up and down as in the case described
  above.  Let $y_{\min}$ be the smallest among the y-coordinates of
  all points placed so far.

  Then we place~$v_i,v_{i+1},\dots,v_{j-1}$ in reverse order, i.e., at
  x-coordinates~$j-1,j-2,\dots,i$.
  For the y-coordinates, we choose $y(v_i)=y_{\min}-\gamma$ and
  $y(v_{j-1})=y_{\min}-\varepsilon$ for some (large) $\gamma>0$ and
  (small) $\varepsilon>0$ with the following properties: if $i > h$ then $v_{j-1}$ lies inside the triangle
  $\triangle v_{i-1}v_ic$ (pale yellow in \cref{fig:fan-output}) and if
  $i = h$ then $v_{j-1}$ lies inside the triangle $\triangle ov_ic$, where $o = (0,0)$. 
  Draw $v_{i+1},\dots,v_{j-2}$ on the segment
  $\overline{v_iv_{j-1}}$. This fulfills the invariant: if $i > h$ then the vertex $v_{j-1}$
  can reach the outer face via the edge~$(v_i,v_{i-1})$ which was not
  crossed so far. If $i=h$ then $v_{j-1}$ is on the outer face if
  $P'=P_1$, otherwise $v_{j-1}$ can reach the current outer face by crossing the edge~$(v_h,v_{h-1})$.
  Observe that, by our invariant, $(v_h,v_{h-1})$ might have
    crossed one edge in order to reach the outer face. While drawing $P'$, we do not cross $(v_h,v_{h-1})$ again. So the potential crossing with  $(v_{j-1},v_j)$ is at most the second crossing of $(v_h,v_{h-1})$, and $(v_h,v_{h-1})$ will not be crossed later.

  Moreover, when we draw the next maximal subpath, we place $v_j$
  at $(j,y_{\max}+1)$, i.e., in particular in the outer face of the
  current drawing.  The edge from $v_{j-1}$ to $v_j$ must be directed
  towards $v_j$ since the orientation is acylic.  Thus, we can draw
  the edge between $v_{j-1}$ and $v_j$ upward with at most one
  crossing, causing at most a second crossing on $(v_h,v_{h-1})$.
\end{proof}

\begin{figure}[htb]
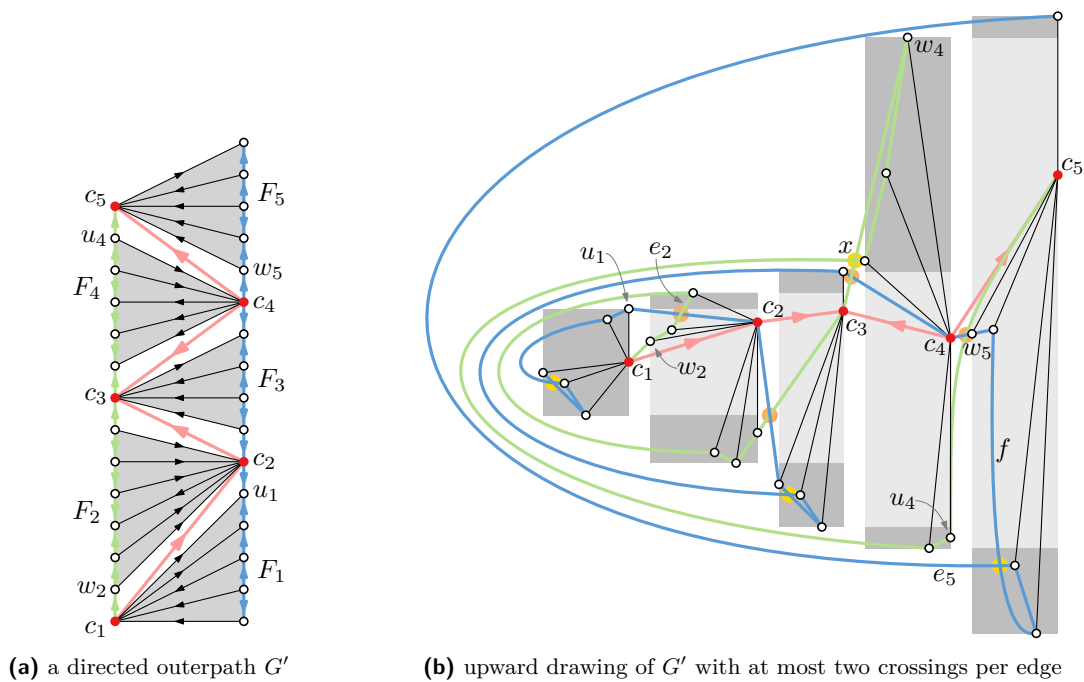

  \begin{minipage}[b]{.32\textwidth}
    \centering
    \includegraphics[page=3]{fan}
    \subcaption{a directed outerpath $G'$}
    \label{fig:outerpath-input}
  \end{minipage}
  \hfill
  \begin{minipage}[b]{.61\textwidth}
    \includegraphics[page=4]{fan}
    \subcaption{upward drawing of $G'$ with at most
      two crossings per edge}
    \label{fig:outerpath-output}
  \end{minipage}
  \caption{Example in- and output of our drawing algorithm (edge
    crossings due to \cref{lemma:fan} are highlighted in yellow; other
    edge crossings are highlighted in orange).
    }
  \label{fig:outerpath}
\end{figure}

\medskip

Now we describe our construction for general outerpaths; see \cref{fig:outerpath}.
  
\begin{theorem}
  \label{thm:outerpath}
  Every directed acyclic outerpath admits an upward 2-planar drawing.
\end{theorem}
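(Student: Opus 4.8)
The plan is to reduce a general outerpath to the fans already settled by \cref{lemma:fan}. First I would decompose the input DAG~$G'$ into fans $F_1,\dots,F_r$ by walking along the path that the internal triangles of~$G'$ induce in the dual graph and cutting it into blocks of triangles that share a common vertex, taking that vertex as the central vertex of the block. Consecutive fans $F_i$ and $F_{i+1}$ then share exactly one edge~$h_i$, the \emph{hinge}, namely the last edge of the path~$P$ of~$F_i$; its two endpoints are the only vertices common to the two fans. A short case analysis of the first triangle of~$F_{i+1}$ shows that the central vertex of~$F_{i+1}$ is adjacent to that of~$F_i$, so the central vertices form a path along the outerpath. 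Moreover I would build the blocks greedily so that the central vertex of~$F_{i+1}$ is the endpoint of~$h_i$ that \cref{lemma:fan} places on the right boundary of $F_i$'s strip (its last path vertex); then $h_i$ is a central edge of~$F_{i+1}$ and, simultaneously, the last path edge of~$F_i$, and the \emph{only} vertex of~$F_{i+1}$ that reaches into $F_i$'s strip is the other endpoint of~$h_i$.

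Next I would draw each fan with \cref{lemma:fan}---with its two hinge vertices pinned to the positions they inherit from the previous fan---and assemble the drawings from left to right, horizontally reflecting each one so that its central vertex, sitting at the left boundary of its strip, coincides with the last path vertex of its predecessor. The coordinate guarantees of the lemma are what make this assembly work: pinning the central vertex and the last path vertex of a fan to the boundary of its strip (property~2) and confining all edges to that strip, with the central and path edges kept strictly inside it (property~3), ensures that edges of two non-consecutive fans never cross. Hence every crossing is either internal to a single fan, where it is one of the at most two crossings per edge guaranteed by \cref{lemma:fan}, or is local to a single hinge.

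The heart of the argument is the crossing bookkeeping at the hinges and their immediate surroundings. Since $h_i$ is a central edge of~$F_{i+1}$, the no-crossing guarantee for central edges (property~1) tells us that $h_i$ receives no crossing from~$F_{i+1}$, so its at most two crossings all stem from~$F_i$, where it is an ordinary path edge. The genuinely new crossings are the ``orange'' ones of \cref{fig:outerpath}: the non-central endpoint of~$h_i$ lies in the interior of $F_i$'s strip, so the first path edge of~$F_{i+1}$ leaving it must reach back across a boundary edge of~$F_i$. To cap this I would orient the path of~$F_{i+1}$ so that it begins at the hinge and choose the ordered partition so that the initial subpath~$P_1$ is exactly this single reaching-back vertex; the last property of \cref{lemma:fan} then guarantees that the edge between~$P_1$ and~$P_2$ is crossed at most once inside~$F_{i+1}$, and since the only boundary edge of~$F_i$ it meets is a central edge of~$F_i$ and hence was uncrossed there, both edges stay within the budget of two.

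The main obstacle I anticipate is precisely this interface analysis. Two points need genuine care. First, the lemma draws a fan with some freedom in its coordinates, whereas in the gluing two of its vertices (the hinge endpoints) are pinned by the preceding fan; I must verify that the lemma's construction still goes through with the central vertex and the first path vertex fixed at these inherited positions. Second, I must check that the one reaching-back edge crosses exactly the single boundary edge claimed and that no \emph{other} edge of~$F_{i+1}$ piles crossings onto any boundary edge of~$F_i$; the adjacency of consecutive central vertices established in the first step is what confines this interaction to a small region near the hinge and keeps it down to the one unavoidable crossing, but turning this into an airtight case check is the delicate part. A final routine point is global upwardness: the $y$-coordinates chosen independently within each fan must be reconcilable with a single linear extension of~$G'$, which should follow because consecutive fans meet only at the two endpoints of their hinge and the drawing is built strip by strip in increasing order.
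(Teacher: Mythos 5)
Your overall strategy---decompose the outerpath into fans along the dual path, draw each fan by \cref{lemma:fan}, glue left to right, and use Property~\ref{itm:crossing} to tame the interface crossings---is the same as the paper's, but two of your load-bearing claims are genuinely wrong, not just unverified. First, you cannot ``build the blocks greedily so that the central vertex of~$F_{i+1}$ is the endpoint of~$h_i$ that \cref{lemma:fan} places on the right boundary'': the center of block~$i+1$ is forced by the graph, namely it is whichever vertex is common to \emph{all} triangles of that block, and this can just as well be the other hinge endpoint. Concretely, take the triangles $\triangle a p_1 p_2$, $\triangle a p_2 p_3$, $\triangle a p_3 p_4$ followed by $\triangle p_3 p_4 q$, $\triangle p_3 q r$, $\triangle p_3 r s$: the hinge is $\{p_3,p_4\}$, the endpoint placed on the right boundary of the first fan's strip is~$p_4$, yet all triangles of the second block share~$p_3$, so its center must be~$p_3$---the endpoint pinned strictly \emph{inside} the first strip. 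Then every central edge of~$F_{i+1}$ emanates from a point buried inside $F_i$'s drawing, and your entire bookkeeping (hinge uncrossed from the $F_{i+1}$ side, a single reaching-back vertex) collapses; it can even happen that the next block's center is the third vertex of its first triangle, in which case~$h_i$ is a \emph{path} edge of~$F_{i+1}$ and gets crossings from both sides. Second, the ordered partition $P_1,\dots,P_k$ in \cref{lemma:fan} is not yours to choose: it is the partition into \emph{maximal} subpaths whose edges all point toward or all away from the center. If the reaching-back vertex and its path successor both have edges directed toward~$c_{i+1}$, they lie in the same block~$P_1$, and Property~\ref{itm:crossing} says nothing about the edge between them.

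Moreover, the two points you explicitly defer as ``the delicate part'' are precisely where the actual proof lives. \Cref{lemma:fan} is stated for a free-standing fan; it gives no guarantee when two of its vertices are pinned at inherited positions, one of them outside the fan's own strip. The paper's proof spends most of its effort exactly here: it uses a different decomposition (vertex-disjoint fans centered at the degree-$\ge 4$ vertices, linked by the edges $c_{i-1}c_i$, $u_{i-1}c_i$, $c_{i-1}w_i$), re-runs the fan construction with explicit modifications (vertices below/above \emph{all} previous drawings, certain edges routed around the left of everything), and handles by hand the exception in which the edge directions force~$c_i$ and~$w_i$ both above or both below~$c_{i-1}$, splitting into subcases according to whether~$P_1$ leaves~$w_i$ as a directed path. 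Since the budget of two crossings per edge is tight, these direction-driven cases cannot be waved through; without them (and without a corrected decomposition claim) your argument does not yet constitute a proof.
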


\begin{proof}
  Without loss of generality, we can assume that the given outerpath
  is maximal: if the outerpath has interior faces that are not
  triangles, we temporarily triangulate them using additional edges,
  which we direct such that they do not induce directed cycles and
  which we remove after drawing the maximal outerpath.

  Let $G'$ be such a graph; see \cref{fig:outerpath-input}.  Let
  $c_1,c_2,\dots,c_k$ be the vertices of degree at least~4 in~$G'$
  (marked red in \cref{fig:outerpath}).  These vertices form a path
  (light red in \cref{fig:outerpath}); let them be numbered along this
  path, which we call the \emph{backbone} of~$G'$.  We assign every
  vertex~$v$ that does not lie on the backbone to a neighboring
  backbone vertex; if $v$ is incident to an inner edge, we assign~$v$
  to the other endpoint of that edge.  Otherwise $v$ has degree~2 and
  is incident to a unique backbone vertex via an outer edge, and we
  assign~$v$ to this backbone vertex.  For $i \in [k]$, backbone
  vertex $c_i$ induces, together with the vertices assigned to it, a
  fan $F_i$.

  We draw the backbone in an x-monotone fashion.  We start by drawing
  $F_1$ with the algorithm for drawing a fan as detailed in the proof
  of \cref{lemma:fan}; see the leftmost gray box in
  \cref{fig:outerpath-output}.  Then, for $i \in \{2,\dots,k\}$, we set
  $x(c_{i})$ to $x(c_{i-1})$ plus the number of inner edges incident
  to~$c_{i}$ and we set $y(c_i)$ depending (i)~on the y-coordinates
  of the two neighbors of $c_i$ that have already been drawn
  ($c_{i-1}$ and the common neighbor $u_{i-1}$ of $c_{i-1}$ and $c_i$ 
  in $F_{i-1}$) and 
  (ii)~on the directions of the edges that connect these vertices
  to~$c_i$; see, for example, the placement of~$c_5$ in
  \cref{fig:outerpath-output}.  Then we draw~$F_i$ with respect to
  the position of~$c_i$, again using the algorithm from the proof
  of \cref{lemma:fan} with the following modifications. 
  In general, vertices
  in~$F_i$ that are adjacent to~$c_i$ via an edge directed
  towards~$c_i$ (resp.\ from~$c_i$) are placed below (resp.\
  above) all vertices in the drawings of $F_1,\dots,F_i$; see the dark
  gray boxes below (resp.\ above) $c_2,\dots,c_5$ in
  \cref{fig:outerpath-output}.  If an edge of~$F_i$ connects two
  neighbors of~$c_i$ one of which lies above~$c_i$ and one of
  which lies below~$c_i$, then we route this edge to the left of
  all drawings of $F_1,\dots,F_{i-1}$.
    
  An exception to this rule occurs if $c_i$ and the common neighbor~$w_i$ 
  of~$c_{i-1}$ and~$c_i$ in~$F_i$ must be both above or both 
  below~$c_{i-1}$ due to the directions of the corresponding edges.  
  Let $u_{i-1}$ be the common neighbor of~$c_{i-1}$ and~$c_i$ in~$F_{i-1}$.
  We assume, without loss of generality, that $c_i$ is above~$c_{i-1}$. 
  Let $P_1$ and $P_2$ be the first and second maximal subpath from 
  \cref{lemma:fan} applied to~$F_i$, and let~$e_i$ be the edge 
  connecting~$P_1$ and~$P_2$.  We distinguish two subcases.  

  If $P_1$ is a directed path leaving~$w_i$, then we 
  draw~$P_1$ above the edge $c_{i-1}c_i$ and we draw the edge~$e_i$ 
  straight, without going around all drawings of $F_1,\dots,F_{i-1}$.
  In this case $e_i$ is directed from~$P_1$ to~$P_2$.
  Hence, $e_i$ crosses the edge~$u_{i-1}c_i$ if $u_{i-1}c_i$ 
  is directed from~$c_i$ to~$u_{i-1}$; see the situation for~$c_2$ 
  in \cref{fig:outerpath-output}.  Note that by Property~\ref{itm:crossing} of \cref{lemma:fan}, $e_i$ may receive
  at most a second crossing when we draw the remainder of~$F_i$ in the 
  usual way.

  Otherwise, that is, if $P_1$ contains an edge directed 
  towards the first
  endpoint~$w_i$ of~$P_1$, 
  let~$f$ be the first such edge.  We then place the 
  part of~$P_1$ up to the first endpoint of~$f$ below the 
  edge~$c_{i-1}c_i$; see~$w_5$ and~$f$ in \cref{fig:outerpath-output}. 
  If the edge $u_{i-1}c_i$ is directed towards~$c_i$, we draw it between~$w_i$ 
  and the edge $c_{i-1}c_i$.  Then it crosses the edge~$c_{i-1}w_i$
  but no other edge.  We place the second endpoint of~$f$ 
  below all vertices in $V(F_1) \cup \dots \cup V(F_{i-1})$ and 
  continue with the remainder of~$F_i$ as usual.  
  
  In any case, if $1<i<k$, then
  the last vertex~$u_{i-1}$ of~$F_{i-1}$ is connected to~$c_i$
  and~$c_{i-1}$ is connected to the first vertex~$w_i$ in~$F_i$.
  These two edges may cross each other; see the crossings 
  highlighted in orange in \cref{fig:outerpath-output}.  If the 
  edge~$c_{i-1} w_i$ goes, say, up but the following outer edges go down 
  until a vertex $v_k$ below~$c_i$ is reached, then the edge~$c_{i-1}w_i$ 
  may be crossed a second time by the edge~$v_{k-1}v_k$; see
  the crossing labeled~$x$ on the edge~$c_3 w_4$ in
  \cref{fig:outerpath-output}. Observe that in this case the path $P_1$ is the directed path from $w_i$ to $v_{k-1}$. Thus, due to Property~\ref{itm:crossing}
  of \Cref{lemma:fan},
  edge~$v_{k-1}v_k$ had been crossed at most once within its fan. 
  Also $c_{i-1}w_i$ cannot have a third crossing. 
  Thus, all edges are crossed at most twice.
\end{proof}

One can argue
that every maximal pathwidth-2 graph can be generated from a
maximal outerpath by connecting some pairs of adjacent vertices using
an arbitrary number of (new) paths of length~2.  In spite of the
simplicity of this operation, we cannot hope to generalize the above
result to pathwidth-2 graphs; see the linear lower bound on the
upward local crossing number for such graphs stated in
\cref{thm:unbounded_pw2}.

\section{Testing Upward 1-Planarity}
\label{sse:NP-complete}

Here, we prove that upward 1-planarity testing is \NP-complete even for structurally simple  DAGs, both when a bimodal rotation system is fixed and when it is not fixed. 

We start with a definition.
Let  $G_1$ and $G_2$ be any two st--digraphs. Let $s_i$ be the source and $t_i$ be the sink of $G_i$, with $i=1,2$. Let $G$ be a digraph that contains both $G_1$ and $G_2$ as induced subgraphs. Let $\Gamma$ be a  drawing of $G$ and let $\Gamma_{1,2}$ be the drawing obtained by restricting $\Gamma$ to the vertices and edges of  $G_1 \cup G_2$. We say that $G_1$ and $G_2$ \emph{fully cross}
in $\Gamma$ if in $\Gamma_{1,2}$ every $s_1t_1$-path (i.e., a path directed from $s_1$ to $t_1$) crosses every $s_2t_2$-path. 
See \cref{fig:st-graphs-non-crossing,fig:st-graphs-crossing-v1} for examples of st-digraph $G_1$ and $G_2$ that do not fully cross or fully cross in a drawing~of~$\Gamma_{1,2}$,~respectively. %

\begin{figure}[ht]
	\begin{subfigure}[b]{0.3\textwidth}
		\centering
		\includegraphics[page=1]{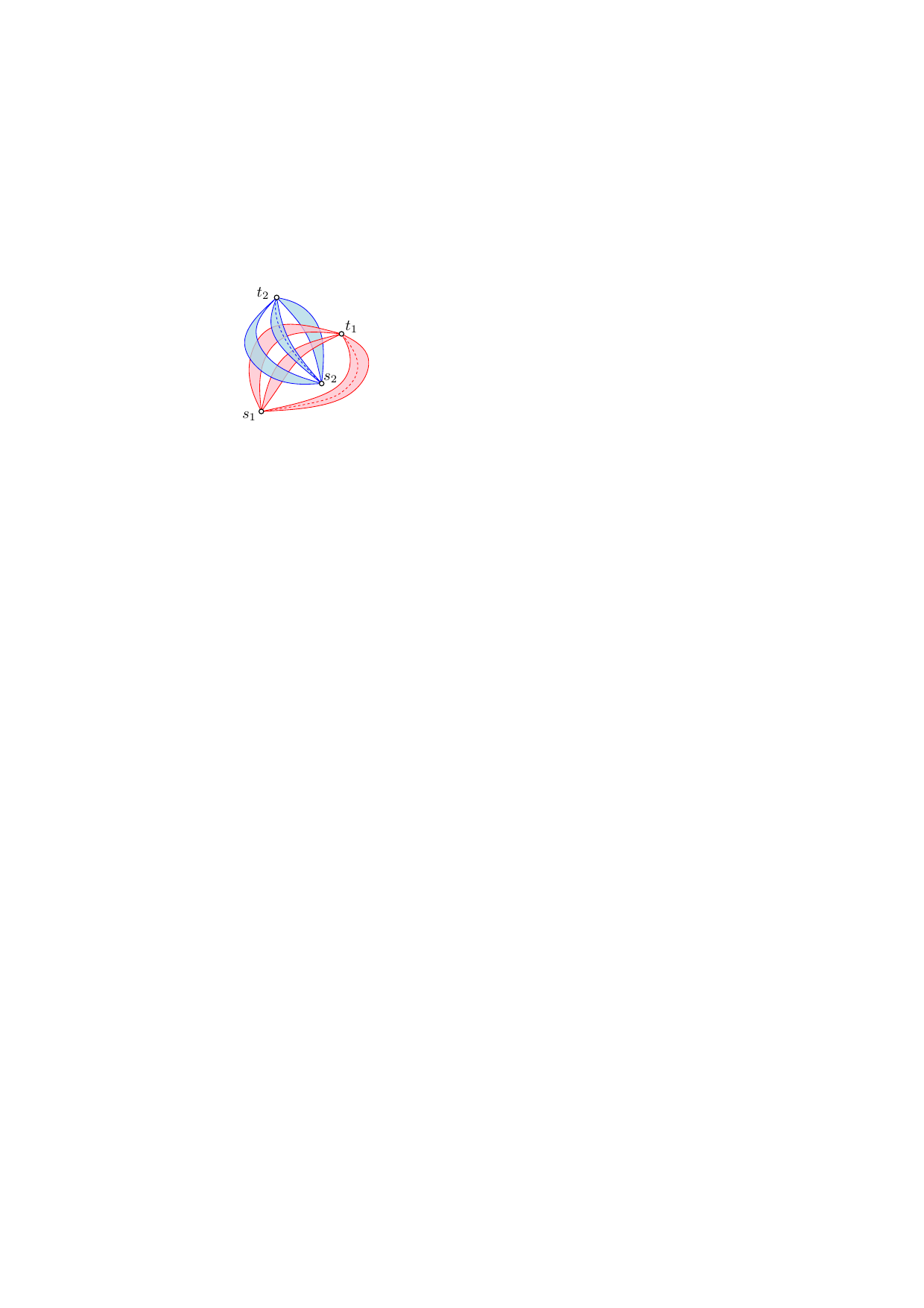}
		\subcaption{}
            \label{fig:st-graphs-non-crossing}
	\end{subfigure}
        \hfill
	\begin{subfigure}[b]{0.3\textwidth}
		\centering
		\includegraphics[page=2]{st-graph-crossing-modified-again.pdf}
		\subcaption{}
		\label{fig:st-graphs-crossing-v1}
	\end{subfigure} 
        \hfill
	\begin{subfigure}[b]{0.3\textwidth}
		\centering
		\includegraphics[page=3]{st-graph-crossing-modified-again.pdf}
		\subcaption{}
		\label{fig:st-graphs-crossing-v2}
	\end{subfigure} 
        \hfill
	\caption{Illustrations for the definition of fully crossing st-subgraphs. (a) and (b) Two st-digraphs $G_1$ and $G_2$ that do not fully cross, as witnessed by the two non-crossing dashed paths. (c) Two st-digraphs $G_1$ and $G_2$ that fully cross. 
 }
	\label{fig:ga-gb-crossing}
\end{figure}

We now define a few gadgets; all of them are planar st-graphs.
For positive integers $b$ and $q$, let a \emph{$(b,q)$-parallel} be the parallel composition of $b$ oriented paths each consisting of $q$ edges; see \cref{fig:3-4-parallel}.
For a positive integer $p$, let a \emph{$(p)$-gate} be the parallel composition of an oriented edge and a $(p-1,2)$-parallel; see \cref{fig:4-gate}.
For positive integers $h$, $q$, and $a$, let an \emph{$(h,q,a)$-chain} consist of a series of $h$ $(q)$-gates, followed by exactly one $(a)$-gate, followed again by $h$ $(q)$-gates; see \cref{fig:253-chain}.

\begin{figure}[ht]
	\begin{subfigure}[b]{0.20\textwidth}
		\centering
		\includegraphics[page=1]{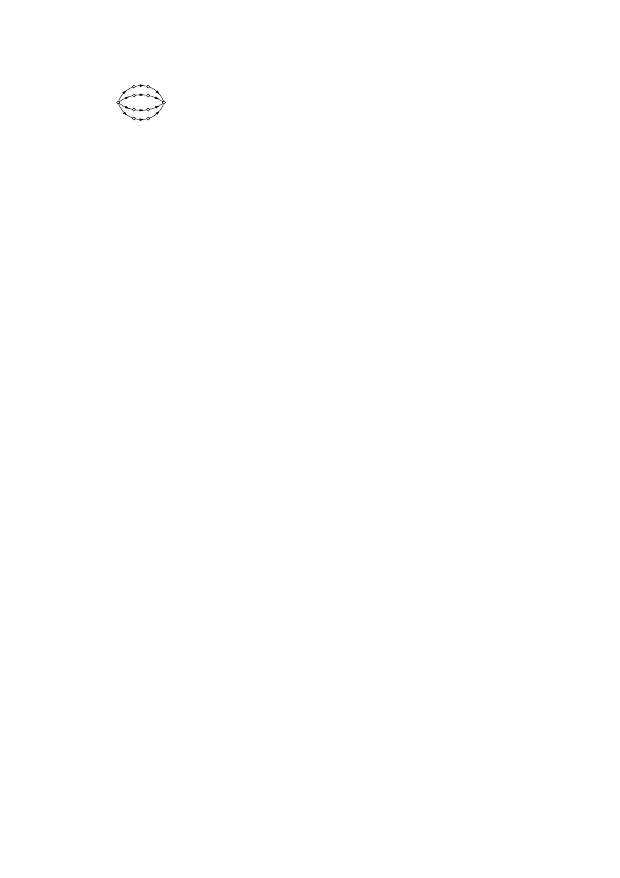}
		\subcaption{a $(4,3)$-parallel}
            \label{fig:3-4-parallel}
	\end{subfigure}
        \hfill
	\begin{subfigure}[b]{0.13\textwidth}
		\centering
		\includegraphics[page=2]{reduction-lemma.pdf}
		\subcaption{a $(4)$-gate}
		\label{fig:4-gate}
	\end{subfigure} 
        \hfill
	\begin{subfigure}[b]{0.61\textwidth}
		\centering
		\includegraphics[page=3]{reduction-lemma.pdf}
		\subcaption{a $(2,5,3)$-chain}
		\label{fig:253-chain}
	\end{subfigure}
	\caption{Illustrations for the gadgets used in the construction of $G_A$ and of $G_B$.}
	\label{fig:reduction-lemma}
\end{figure}

An instance of \textsc{3-Partition} is a multiset $I=\{a_1,a_2,a_3, \dots, a_k\}$ of positive integers such that $b = k/3$ is an integer and $\sum_{i=1}^k a_i = W \cdot b$, with $W$ integer. 
The \textsc{3-Partition} problem asks if there exists a partition of the set $I$ into $b$ 3-element subsets such that the sum of the elements in each subset is exactly~$W$. 
Since \textsc{3-Partition} is strongly \NP-hard~\citep{garey1979computers}, we may assume that $W$ is bounded by a polynomial in~$b$. 

We associate with a given instance $I$ of \textsc{3-Partition} two planar st-graphs $G_A$ and $G_B$ defined as follows. 
Digraph $G_A$ is the parallel composition of $(b-1,W+1,a_i)$-chains, one for every $i \in \{1, \dots, k\}$. 
Digraph $G_B$ is a $(b,q)$-parallel, with $q = W + (k-3)(W+1)$. Note that the underlying undirected graphs of both $G_A$ and $G_B$ are series-parallel.

\renewcommand{\topfraction}{0.85}
\renewcommand{\bottomfraction}{0.85}
\renewcommand{\textfraction}{0.15}
\renewcommand{\floatpagefraction}{0.85}

\begin{theorem}
  \label{le:3-partition}
  Let $I$ be an instance of \textsc{3-Partition} and let $G_A$ and
  $G_B$ be the two planar st-graphs associated with $I$. Assume there
  exists a digraph $G$ containing $G_A$ and $G_B$ as subgraphs with
  the following two properties: (i) if $G$ is upward 1-planar then
  $G_A$ fully crosses $G_B$ in every upward 1-planar drawing of $G$;
  (ii) if there exists an upward 1-planar drawing of the union of
  $G_A$ and $G_B$ in which $G_A$ and $G_B$ fully cross, then there
  exists an upward 1-planar drawing of $G$.  Then the digraph $G$ is
  upward 1-planar if and only if $I$ admits a solution.
\end{theorem}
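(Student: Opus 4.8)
The plan is to use hypotheses (i) and (ii) to reduce the theorem to a purely combinatorial equivalence about the pair $G_A\cup G_B$, namely
\[
I \text{ has a solution} \iff G_A\cup G_B \text{ admits an upward 1-planar drawing in which } G_A,G_B \text{ fully cross.}
\]
Granting this, both directions of the theorem are immediate. If $I$ has a solution, the equivalence yields a fully crossing upward 1-planar drawing of $G_A\cup G_B$, which hypothesis (ii) upgrades to an upward 1-planar drawing of $G$. Conversely, if $G$ is upward 1-planar, I fix an upward 1-planar drawing $\Gamma$ of $G$; by (i), $G_A$ and $G_B$ fully cross in $\Gamma$, so restricting $\Gamma$ to $G_A\cup G_B$ gives a fully crossing upward 1-planar drawing, and the equivalence then extracts a solution of $I$. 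Thus all the content lies in proving the displayed equivalence, and throughout I would use that $a_i\le W$ for every $i$ (true in any yes-instance, and otherwise the instance is trivially negative).

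For the harder ``extract a solution'' direction I would fix a fully crossing upward 1-planar drawing and write $\pi_B^{(1)},\dots,\pi_B^{(b)}$ for the $b$ directed $s_Bt_B$-paths of $G_B$, each of length~$q$. The first, purely combinatorial, observation is that since each $\pi_B^{(j)}$ crosses every $s_At_A$-path, the set of edges of any single chain crossed by $\pi_B^{(j)}$ hits all $s_At_A$-paths of that chain; and because a chain is a series composition of gates, a set of edges hits all source-to-sink paths if and only if it fully blocks (crosses all parallel paths of) at least one gate. Blocking a $(p)$-gate forces crossing its unique direct edge, which by 1-planarity is crossed at most once; hence at most one of the $b$ paths can block any given gate. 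Consequently, in each chain at most one path blocks the cheap $(a_i)$-gate, at cost $a_i$, while every other path must block a $(W{+}1)$-gate, at cost at least $W+1$. Summing these per-chain costs yields a lower bound of $Wb + k(b-1)(W+1)$ on the total number of $G_A$--$G_B$ crossings.

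The decisive step is that this lower bound equals the global crossing budget: each $\pi_B^{(j)}$ has $q$ edges, each crossed at most once, so the total number of crossings is at most $bq$, and using $k=3b$ together with $q=W+(k-3)(W+1)$ one verifies $Wb + k(b-1)(W+1)=bq$. Hence every inequality is tight: each chain is cheaply blocked by exactly one path, each path exhausts its budget, and no crossing is wasted. Letting $S_j\subseteq[k]$ collect the chains that $\pi_B^{(j)}$ blocks cheaply, tightness forces $\{S_1,\dots,S_b\}$ to partition $[k]$, and each budget equation reads $\sum_{i\in S_j}a_i = W + (|S_j|-3)(W+1)$. As every $a_i>0$, any $S_j$ with $|S_j|\le 2$ would give a negative sum; thus all $|S_j|\ge 3$, and since they partition a $3b$-element set into $b$ parts, each $|S_j|=3$ and $\sum_{i\in S_j}a_i=W$. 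This is exactly a solution of $I$.

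For the constructive direction I would start from a solution $\{S_1,\dots,S_b\}$ and route $\pi_B^{(j)}$ to block precisely the three $(a_i)$-gates with $i\in S_j$ and one $(W{+}1)$-gate in each remaining chain; this uses exactly $W+(k-3)(W+1)=q$ crossings per path, so 1-planarity holds edge by edge, and since blocking a gate crosses all its parallel paths, every $s_At_A$-path is crossed by every $\pi_B^{(j)}$, giving the required full crossing. I expect the main obstacle to be exactly this geometric realization: one must place the $k$ chains of $G_A$ and the $b$ paths of $G_B$ in a single upward drawing so that the prescribed blockings are simultaneously achievable by $y$-monotone curves with no edge crossed twice. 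I would handle it by drawing the chains side by side with their $(a_i)$-gates aligned into a narrow corridor, threading the $G_B$-paths through it in the order dictated by the partition; the tight counting above is what certifies that there is just enough room.
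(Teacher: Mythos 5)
Your high-level structure is exactly the paper's: hypotheses (i) and (ii) reduce the theorem to the equivalence ``$I$ has a solution if and only if $G_A\cup G_B$ admits an upward 1-planar drawing in which $G_A$ and $G_B$ fully cross'', and the two directions are then a counting argument and a construction. The genuine gap is in how you dispose of large elements. Your per-chain tightness argument (total lower bound $Wb+k(b-1)(W+1)$ meeting the budget $bq$ via $k=3b$) is valid only when no $a_i$ exceeds $W+1$, and your justification for assuming $a_i\le W$ --- ``otherwise the instance is trivially negative'' --- is circular: precisely when $I$ has no solution, the theorem obliges you to prove that $G$ is \emph{not} upward 1-planar, which by hypothesis (i) means proving that no fully crossing drawing of $G_A\cup G_B$ exists. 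That is the statement your counting can no longer deliver. If some $a_i\ge W+2$, the correct per-chain lower bound is $\min(a_i,W+1)+(b-1)(W+1)$, so the global bound $\sum_i\min(a_i,W+1)+k(b-1)(W+1)$ drops strictly below $bq$, tightness fails, and nothing can be extracted. Concretely, for $b=2$, $W=12$, $I=\{14,2,2,2,2,2\}$ your bound gives $101\le 102$, yet a fully crossing drawing must still be ruled out (it does not exist, but your argument cannot show this). The paper organizes the count per path rather than per chain exactly to avoid this: each path of $G_B$ must block at least three cheap gates, since blocking at most two forces it to block at least $k-2$ expensive gates at cost $(k-2)(W+1)=q+1>q$; hence exactly three by pigeonhole over the $3b$ exclusive cheap gates; and then its three cheap gates may cost at most $q-(k-3)(W+1)=W$ in total. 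This needs no upper bound on the $a_i$ and automatically yields the contradiction when some $a_i>W$. Your proof is repaired by inserting this per-path step (restricting the theorem to instances with $a_i\le W$ would also suffice for the NP-hardness application, but it is not the statement as given).

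A secondary weakness is the constructive direction. Your routing plan coincides with the paper's, but ``the tight counting above is what certifies that there is just enough room'' is not an argument: the counting is necessary, not sufficient, for realizability as an upward drawing. The crux you leave open is \emph{which} of the $2(b-1)$ expensive gates of a chain $\nu\notin S_j$ the path $\pi_j$ should block. The paper gives an explicit rule: $\pi_j$ blocks the $j$-th gate (before the cheap gate) of chain $\nu$ if the path blocking the cheap gate of $\nu$ comes later in the enumeration of the paths, and the $(b-1+j)$-th gate (after the cheap gate) otherwise. This staircase assignment guarantees that distinct paths block distinct gates and that all prescribed crossings are simultaneously realizable by $y$-monotone curves, each edge crossed at most once, after stretching the drawing vertically.
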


\begin{proof}
  Assume that $G$ is upward 1-planar and let $\Gamma$ be any upward
  1-planar drawing of $G$. We prove that $\Gamma$ provides a solution
  of instance $I$ of \textsc{3-Partition}.  By hypothesis~(i), $G_A$
  and~$G_B$ fully cross in~$\Gamma$.
  Observe that only one path among the $b$ paths of the $(b,q)$-parallel
  $G_B$ can traverse the $(a_i)$-gate
  of a $(W+1,b-1,a_i)$-chain of $G_A$, where $a_i$, with
  $i = 1, \dots, k$, is an element of the instance
  $I=\{a_1,a_2,a_3,\dots a_k\}$ of \textsc{3-Partition}, as otherwise
  the directed edge connecting the source and the sink of the
  $(a_i)$-gate would be traversed more than once.  Since $G_A$ and
  $G_B$ fully cross in $\Gamma$, by definition every path of $G_B$
  crosses all the $(W+1,b-1,a_i)$-chains of $G_A$.
  In particular, every path of $G_B$ must cross at least three
  $(a_i)$-gates. Indeed, if a path $\pi$ of $G_B$ crossed less than
  three $(a_i)$-gates of $G_A$, it would cross at least $k-2$ of the
  $(W+1)$-gates in the $(W+1,b-1,a_i)$-chains of $G_A$. In order to have at most one
  crossing per edge, path $\pi$ should have at least $(k-2)(W+1)$
  edges; however, by construction, $\pi$ has
  $W+(k-3)(W+1)= (k-2)(W+1) - 1$ edges. Also, observe that by
  definition of $G_A$, if one path of $G_B$ crossed more than three
  $(a_i)$-gates, some other path of $G_B$ should cross at most two
  $(a_i)$-gates. Therefore, every path $\pi$ of $G_B$ must cross
  exactly three $(a_i)$-gates and $k-3$ of the $(W+1)$-gates in
  $\Gamma$. Since every path $\pi$ of $G_B$ uses $(k-3)(W+1)$ of its
  edges to cross $k-3$ of the $(W+1)$-gates in $\Gamma$, $\pi$ can cross
  at most $W$ edges of the $(a_i)$-gates. Since
  $\sum_{i=1}^k a_i = W \cdot b$, the number of crossings of each
  $\pi$ with its three $(a_i)$-gates must be exactly $W$. It follows
  that if $G$ has an upward 1-planar drawing then the instance $I$ of
  \textsc{3-Partition} admits a solution.
	
  Conversely, assume that the instance $I$ of \textsc{3-Partition}
  admits a solution. In order to prove that $G$ admits an upward
  1-planar drawing, by hypothesis (ii) it suffices to construct an
  upward 1-planar drawing $\Gamma_{AB}$ of $G_A$ and $G_B$ where $G_A$
  and $G_B$ fully cross.  We enumerate the paths of $G_B$ as
  $\pi_1, \pi_2, \dots, \pi_b$.  We also enumerate the $2(b-1)$ many
  $(W+1)$-gates and the
  $(a_i)$-gate
  of a $(b-1,W+1,a_i)$-chain as $g_{i,1}, g_{i,2}, \cdots, g_{i,2(b-1)}$, where $g_{i,1}$ contains
  the source~$s_A$ of~$G_A$ and $g_{i,2(b-1)}$ contains the sink~$t_A$
  of~$G_A$ (see also
  \cref{fig:reduction-whole}).
  Consider a path~$\pi_j$ with $1 \leq j \leq b$, and let
  $\{a_{\chi},a_{\lambda},a_{\mu}\}$ be the $j$-th bin of the solution
  of $I$.  Let $\nu$ be an index such that $1 \leq \nu \leq b$.
\begin{figure}[tbp]
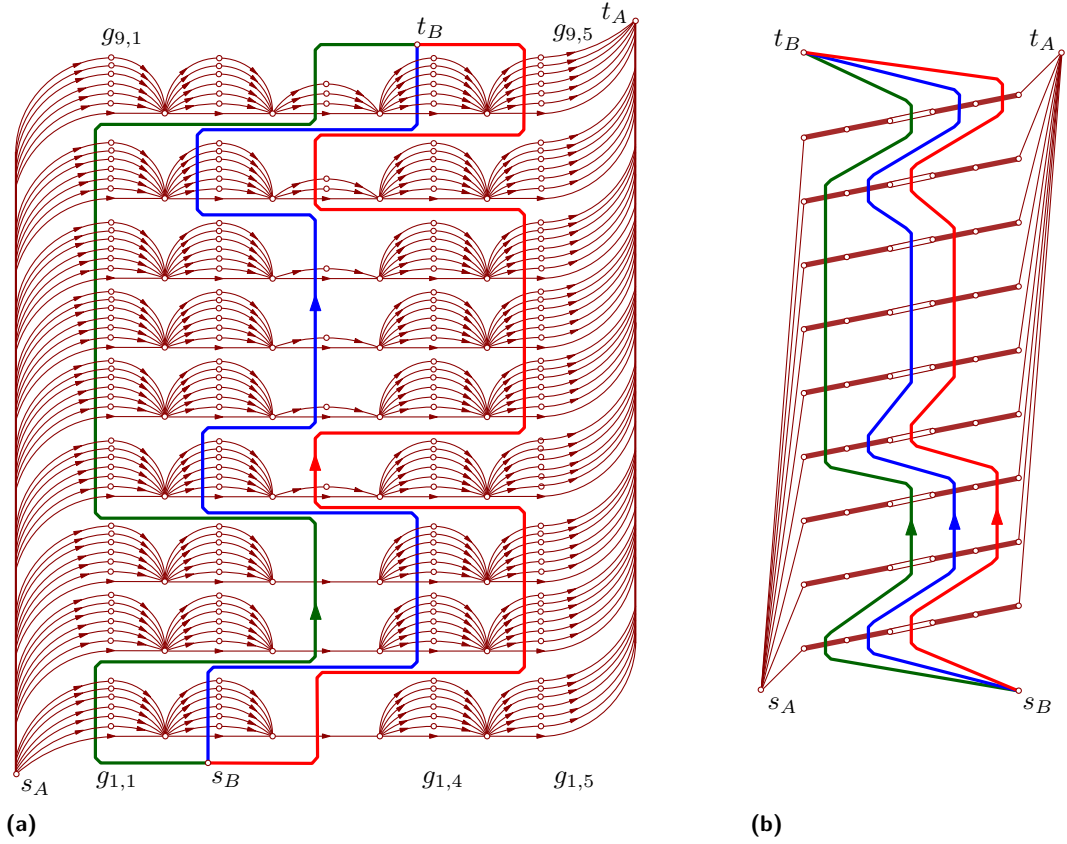

	\begin{subfigure}[b]{0.60\textwidth}
		\centering
            \includegraphics[page=5]{reduction-lemma.pdf}\hfill
		\subcaption{}
        \label{fig:reduction-whole-a}
	\end{subfigure}
        \hfill
	\begin{subfigure}[b]{0.30\textwidth}
		\centering
            \includegraphics[page=6]{reduction-lemma.pdf}
		\subcaption{}
		\label{fig:reduction-whole-b}
	\end{subfigure} 
        \caption{(a) Digraph $G_A$ (dark red) and a schematic
          representation of digraph $G_B$ where each colored curve
          represents a directed path with $W+(k-3)(W+1)$ edges. The
          corresponding instance of \textsc{3-Partition} is
          $I = \{1,1,1,2,2,2,2,3,4\}$, with $b=3$ and $W=6$. The
          1-planar drawing corresponds to the solution $\{1,1,4\}$
          (green path), $\{2,2,2\}$ (blue path), and $\{1,2,3\}$ (red
          path). The drawing in (a) is not upward but it can be made
          upward by stretching it vertically as~shown~in~(b), where
          thick edges represent $(q)$-gates and the central
          white-filled edges represent $(a)$-gates.
        }
   \label{fig:reduction-whole}
\end{figure}

  \begin{itemize}
  \item If $\nu$ is one of $\chi, \lambda, \mu$, then $\pi_j$ crosses
    the $(a_\nu)$-gate of the $(b-1,W+1,a_\nu)$-chain (see, for example,
    the red path crossing $g_{1,3}$ in \cref{fig:reduction-whole}).
  \item If $\nu$ is not one of $\chi, \lambda, \mu$ and the path $\pi_h$
    that crosses the $(a_{\nu})$-gate of the $(b-1,W+1,a_\nu)$-chain is
    such that $h > j$, then $\pi_j$ crosses the gate $g_{\nu,j}$ of the
    $(W+1,b-1,a_\nu)$-chain (see, for example, the blue path crossing
    $g_{2,2}$ in \cref{fig:reduction-whole}).
  \item Otherwise ($h < j$), path $\pi_j$ crosses the gate
    $g_{\nu,b-1+j}$ of the $(b-1,W+1,a_\nu)$-chain (see, for example, the red
    path crossing $g_{5,5}$ in
    \cref{fig:reduction-whole}).
  \end{itemize}
  By the procedure above, if $\pi_j$ crosses a gate $g(\nu,q)$, with
  $1 \leq \nu \leq k$ and $1 \leq q \leq 2(b-1)$, there is no other
  $\pi_h$, with $h \neq j$, such that $\pi_h$ crosses $g(\nu,q)$.
  Also, the number of edges that $\pi_j$ crosses is
  $a_\chi + a_\lambda + a_\mu + (k-3)(W+1) = W+(k-3)(W+1)$, which is
  the number of edges of $\pi_j$. Hence, it is possible to draw
  $\pi_j$ in such a way that each of its edges is crossed exactly once
  and no edge of $G_A$ is crossed more than once.  This concludes the
  proof.
\end{proof}

\begin{theorem}	
\label{th:st-negative}
    Testing upward 1-planarity is \NP-complete even 
    in the following restricted cases:
    \begin{enumerate}
        \item\label{ca:fixed-sp-1s-1t} The bimodal rotation system is fixed, the DAG has exactly one source and exactly one sink, the underlying graph is series-parallel.
        \item\label{ca:variable-rigid-1s-1t} The bimodal rotation system is not fixed, the DAG has exactly one source and exactly one sink, the underlying planar graph is obtained by replacing the edges of a $K_4$ with series-parallel graphs.
        \item\label{ca:variable-sp-3s-2t} The bimodal rotation system is not fixed, the underlying graph is series-parallel, there is one source and two sinks.   
    \end{enumerate}
\end{theorem}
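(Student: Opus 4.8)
The plan is to prove \NP-membership directly and then to obtain \NP-hardness, for all three cases simultaneously, by invoking \cref{le:3-partition}: for each structural restriction it suffices to construct a digraph $G$ of the prescribed type that contains the planar st-graphs $G_A$ and $G_B$ associated with a \textsc{3-Partition} instance $I$ and that satisfies hypotheses~(i) and~(ii) of that theorem. Since \textsc{3-Partition} is strongly \NP-hard and $W$ is polynomially bounded in $b$, all of $G_A$, $G_B$, and the connecting ``frame'' have polynomial size, so the reduction is polynomial and \cref{le:3-partition} yields hardness.

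For membership, I would take as a certificate a planarization $G'$ of the purported drawing together with a planar embedding and a choice of outer face of $G'$. Given such a certificate one checks in polynomial time that: (a)~every edge of $G$ is subdivided by at most one dummy (crossing) vertex, so the drawing is $1$-planar; (b)~at each dummy vertex the two halves of each crossing edge are antipodal in the rotation, so the crossing is a genuine transversal crossing of two edges; and (c)~the fixed embedding of $G'$ with the chosen outer face admits an upward planar drawing, which is a polynomial-time test for a \emph{given} embedding (it amounts to a consistent source/sink angle assignment on the faces of the bimodal embedding). Thus upward $1$-planarity testing is in \NP, and it remains to prove hardness.

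The geometric idea behind all three frames is that, in an upward drawing, two st-graphs \emph{fully cross} precisely when their terminals occur in the ``swapped'' configuration, namely when $s_A$ and $s_B$ appear in one left-to-right order at the bottom while $t_A$ and $t_B$ appear in the opposite order at the top (equivalently, the four terminals alternate as $s_A,s_B,t_A,t_B$ around the boundary). My frame therefore connects the four terminals $s_A,s_B,t_A,t_B$ by additional directed st-subgraphs so that (i)~\emph{any} upward $1$-planar drawing of $G$ is forced into this alternating order, whence $G_A$ and $G_B$ fully cross, and (ii)~starting from a fully-crossing drawing of $G_A\cup G_B$ the frame edges can be added without creating any new crossing, so $G$ becomes upward $1$-planar. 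Concretely, I would add a global source $s$ with edges into $s_A,s_B$ and a global sink $t$ with edges out of $t_A,t_B$ (keeping $G$ acyclic), plus rigid connectors realized with the already-defined gate/parallel gadgets so that the whole frame stays in the required class. The three cases then differ only in \emph{how} the alternation is forced: in Case~\ref{ca:fixed-sp-1s-1t} the bimodal rotation system is fixed, so I can simply prescribe the rotations at $s,t,s_A,s_B,t_A,t_B$ to impose the swap while keeping $G$ single-source, single-sink, and series-parallel; in Case~\ref{ca:variable-rigid-1s-1t} there is no fixed rotation, so I would embed $G_A$ and $G_B$ into two \emph{independent} edges of a $K_4$ skeleton (with the remaining edges expanded into series-parallel st-graphs), exploiting the $3$-connectivity of $K_4$, whose planar embedding is unique up to reflection and outer-face choice, to force the two independent sides to interleave; in Case~\ref{ca:variable-sp-3s-2t} the rotation is again free but the underlying graph must stay series-parallel, so I would use the extra (second) sink to anchor $t_A$ and $t_B$ to distinct sinks, which rigidifies their relative horizontal order within a series-parallel frame and thereby forces the alternation without appealing to $3$-connectivity.

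The main obstacle is establishing property~(i) in the two cases without a fixed rotation system (Cases~\ref{ca:variable-rigid-1s-1t} and~\ref{ca:variable-sp-3s-2t}): I must argue that \emph{no} bimodal upward embedding of $G$ can avoid the terminal swap, so the forcing has to come from graph structure ($K_4$-rigidity, resp.\ the second sink) rather than from a prescribed cyclic order. A secondary but essential difficulty is that the frame must itself be upward $1$-planar in the fully-crossing configuration and must not ``spend'' any of the at-most-one crossings available per edge on frame--frame or frame--$G_{A/B}$ interactions; otherwise the budget argument inside \cref{le:3-partition} (each gate edge crossed at most once) would break. Verifying that the frame edges can be routed disjointly around $G_A\cup G_B$ while respecting both upwardness and the crossing budget is the technical heart of direction~(ii), and checking it simultaneously with the structural constraints (series-parallel / single-source-single-sink / $K_4$-based) is where the case analysis becomes delicate.
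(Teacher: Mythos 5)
Your overall architecture coincides with the paper's: membership by guess-and-check (your certificate-based argument is fine, indeed more detailed than the paper's one-liner), and hardness by exhibiting, for each case, a host digraph containing $G_A$ and $G_B$ that satisfies hypotheses~(i) and~(ii) of \cref{le:3-partition}; even your three structural handles (prescribed rotation, a $K_4$-based frame, the second sink) are the ones the paper uses. However, there is a genuine gap: establishing (i) and (ii) \emph{is} the proof, and you explicitly leave both unresolved. The single missing idea that makes the paper's frames work is the \emph{barrier}: an st-digraph consisting of a $(d,2)$-parallel with $d = m_A+m_B+1$, where $m_A,m_B$ are the numbers of edges of $G_A,G_B$. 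Since every edge can be crossed at most once and every $s_At_A$- or $s_Bt_B$-path has fewer than $d$ edges, no such path can fully cross a barrier; barriers therefore act as uncrossable walls that confine where $s_A,t_A,s_B,t_B$ may be placed, and all frame connections in all three cases are made of barriers, not single edges. Your ``rigid connectors realized with the gate/parallel gadgets'' gestures at this, but you never state the quantitative property ($d > m_A+m_B$) nor use it in an argument, and without it your frames break.

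Concretely, your Case~\ref{ca:fixed-sp-1s-1t} frame as described (a global source $s$ with single edges into $s_A,s_B$, a global sink $t$ with single edges out of $t_A,t_B$, plus prescribed rotations) does not force full crossing: each single frame edge may absorb one crossing, so a path of $G_B$ can satisfy the alternation/parity constraint by crossing the edge $s\,s_A$ or $t_A\,t$ instead of crossing $G_A$. Even a constant number of such escape routes destroys the exact counting inside \cref{le:3-partition}, which needs \emph{every} path of $G_B$ to cross exactly three $(a_i)$-gates and $k-3$ of the $(W{+}1)$-gates. (The paper's Case~\ref{ca:fixed-sp-1s-1t} instead chains barriers $s_A \to s_B \to t_A \to t_B$ and combines them with the fixed rotation.) In Case~\ref{ca:variable-rigid-1s-1t}, your appeal to the unique planar embedding of the $3$-connected $K_4$ is not valid once crossings are permitted -- uniqueness of embeddings is a statement about planar embeddings only; the paper instead modifies the classical planar-but-not-upward-planar digraph, replacing edges by barriers, and argues that neither $s_B$ nor $t_B$ can lie in the finite region bounded by $G_A$ (otherwise some $s_At_A$-path would fully cross a barrier), and symmetrically. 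Case~\ref{ca:variable-sp-3s-2t}'s claim that a second sink ``rigidifies the horizontal order'' is likewise unsupported without barriers: nothing prevents $G_B$ from being drawn entirely beside $G_A$ with only cheap frame crossings. So the proposal is a correct plan with the right skeleton, but the load-bearing gadget and the forcing arguments it enables are absent.
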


\begin{proof}
  It is immediate to observe that upward 1-planarity testing is in the
  NP class of complexity (one can guess an upward embedding and test
  it in polynomial time). We show that it is NP-hard.  For each case
  in the statement it suffices to exhibit a digraph $G$ that contains
  $G_A$ and $G_B$ as sub-graphs and that satisfies the conditions
  of~\cref{le:3-partition}.

  Let $m_A$ and $m_B$ be the number of edges of $G_A$ and $G_B$,
  respectively.  We define a \emph{barrier} to be an
  st-digraph consisting of a $(d,2)$-parallel, where $d = m_A + m_B + 1$. Note
  that no $s_At_A$-path ($s_Bt_B$-path) can fully cross a barrier in
  such a way that every edge of the path is crossed at most once. This
  implies that neither $G_A$ nor $G_B$ can fully cross a barrier.
\begin{figure}[ht]
	\begin{subfigure}[b]{0.13\textwidth}
		\centering
		\includegraphics[page=1]{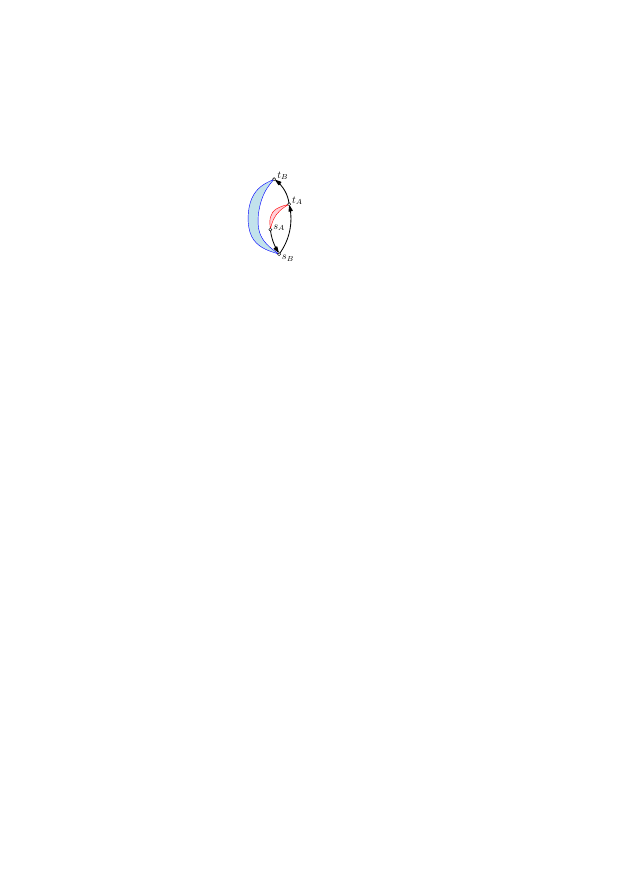}
		\subcaption{}
            \label{fig:ca:fixed-sp-1s-1t-a}
	\end{subfigure}
        \hfill
	\begin{subfigure}[b]{0.10\textwidth}
		\centering
		\includegraphics[page=2]{1-planarity-hardness.pdf}
		\subcaption{}
		\label{fig:ca:fixed-sp-1s-1t-b}
	\end{subfigure} 
        \hfill
	\begin{subfigure}[b]{0.16\textwidth}
		\centering
		\includegraphics[page=5]{1-planarity-hardness.pdf}
		\subcaption{}
		\label{fig:ca:variable-rigid-1s-1t-a}
	\end{subfigure}
         \hfill
	\begin{subfigure}[b]{0.16\textwidth}
		\centering
		\includegraphics[page=6]{1-planarity-hardness.pdf}
		\subcaption{}
		\label{fig:ca:variable-rigid-1s-1t-b}
	\end{subfigure} 
        \hfill
	\begin{subfigure}[b]{0.16\textwidth}
		\centering
		\includegraphics[page=7]{1-planarity-hardness.pdf}
		\subcaption{}
		\label{fig:ca:variable-sp-1s-2t-a}
	\end{subfigure} 
        \hfill
	\begin{subfigure}[b]{0.16\textwidth}
		\centering
		\includegraphics[page=8]{1-planarity-hardness.pdf}
		\subcaption{}
		\label{fig:ca:variable-sp-1s-2t-b}
	\end{subfigure} 
        \hfill
	\caption{Some digraphs for the proof of \cref{th:st-negative}. Thick black edges represent barriers.}
	\label{fig:1-planarity-hardness}
\end{figure}

\begin{itemize}
  \item Case~\ref{ca:fixed-sp-1s-1t}: Refer to
    \cref{fig:ca:fixed-sp-1s-1t-a},
    where the thick edges schematically
    represent barriers
    $B_1$ (from $s_A$ to $s_B$), $B_2$ (from $s_B$ to $t_A$), and
    $B_3$ (from $t_A$ to $t_B$); in the figure, the red shaded shape
    represents $G_A$ and the blue shaded shape represents $G_B$. The
    figure schematically represents a series-parallel digraph $G$ with
    source $s_A$ and sink $t_B$. As described in
    \cref{fig:ca:fixed-sp-1s-1t-b}, an upward 1-planar drawing where the edges of $G_A$ precede the
    edges of $B_1$ in the left to right order of the edges exiting
    $s_A$ and the edges of $G_B$ precede the edges of $B_3$ in the
    left to right order of the edges entering $t_B$ exists if and only
    if $G_A$ crosses $G_B$. In fact, in any upward 1-planar drawing of
    $G$ that preserves its bimodal rotation system $G_A$ crosses
    $G_B$; conversely, starting from an upward 1-planar drawing of the
    union of $G_A$ and $G_B$ in which $G_A$ and $G_B$ cross, an upward
    1-planar drawing of $G$ that preserves the bimodal rotation system
    of \cref{fig:ca:fixed-sp-1s-1t-a} can be obtained as shown in
    \cref{fig:ca:fixed-sp-1s-1t-b}.
  \item Case~\ref{ca:variable-rigid-1s-1t}: We modify a very well
    known small digraph that is traditionally used to exhibit a
    digraph that is planar but not upward
    planar~\citep*{bertolazzi-siam,dett-gd-99}. As above, in
    \cref{fig:ca:variable-rigid-1s-1t-a} thick edges schematically
    represent barriers, the red shaded shape represents $G_A$, and the
    blue shaded shape represents $G_B$. Again, an upward 1-planar
    drawing of this digraph exists if and only if $G_A$ fully crosses
    $G_B$ (see \cref{fig:ca:variable-rigid-1s-1t-b}). In particular,
    neither $s_B$ nor $t_B$ can be drawn in the finite region of the
    plane bounded by $G_A$ since at least one $s_At_A$-path
    would fully cross a barrier. Similarly, neither $s_A$ nor $t_A$
    can be drawn in the finite region of the plane bounded by
    $G_B$. Observe that the digraph has a single source and a single
    sink and its underlying planar graph is obtained by replacing the
    edges of a $K_4$ with series-parallel
    graphs.%
  \item Case~\ref{ca:variable-sp-3s-2t}: Consider the digraph $G$ of
    \cref{fig:ca:variable-sp-1s-2t-a}, where the thick edges
    schematically represent barriers; as in the previous case, the red
    shaded shape represents $G_A$ and the blue shaded shape represents
    $G_B$. With analogous arguments as in the previous case, it is
    immediate to see that $G$ has an upward 1-planar drawing if and
    only if $G_A$ fully crosses~$G_B$ (see
    \cref{fig:ca:variable-sp-1s-2t-b}).
  \end{itemize}
\end{proof}

The following corollary is a consequence of the argument used to prove the second case in the statement of~\cref{th:st-negative}.

\begin{corollary}
  \label{co:fixed-embedding-1-planarity}
  Testing upward 1-planarity is \NP-complete for single source-single
  sink DAGs with a fixed bimodal rotation system, whose underlying
  planar graph is obtained by replacing the edges of a $K_4$ with
  series-parallel graphs.
\end{corollary}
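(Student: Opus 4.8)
The plan is to reuse verbatim the digraph $G$ constructed in the second case of \cref{th:st-negative} (the $K_4$-based construction, see \cref{fig:ca:variable-rigid-1s-1t-a}) and to observe that the hardness argument given there never exploited the freedom of choosing the rotation system; it only used the topology of the $K_4$ skeleton together with the barriers. Concretely, that argument showed that in \emph{every} upward 1-planar drawing of $G$ the subgraphs $G_A$ and $G_B$ are forced to fully cross: the barriers prevent $s_B$ and $t_B$ from lying in the bounded region delimited by $G_A$, and symmetrically prevent $s_A$ and $t_A$ from lying in the bounded region delimited by $G_B$, so every $s_At_A$-path must cross every $s_Bt_B$-path. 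Since this conclusion holds for all upward 1-planar drawings, it in particular holds for those realizing any prescribed bimodal rotation system.

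First I would fix the bimodal rotation system $\mathcal{R}$ of $G$ to be the one realized by the drawing in \cref{fig:ca:variable-rigid-1s-1t-b}, and check that $\mathcal{R}$ is indeed bimodal (each vertex has a single block of incoming and a single block of outgoing edges, which is immediate from an upward drawing). With $\mathcal{R}$ fixed, both implications of \cref{le:3-partition} survive. For the forward direction, if $(G,\mathcal{R})$ admits an upward 1-planar drawing, then that drawing is one of the drawings covered by the forcing argument above, so $G_A$ and $G_B$ fully cross and the associated instance $I$ of \textsc{3-Partition} admits a solution. For the backward direction, the construction in the proof of \cref{le:3-partition} turns a solution of $I$ into an upward 1-planar drawing of $G$, and I would verify that this drawing realizes exactly~$\mathcal{R}$.

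The main obstacle is precisely this last verification: that the rotation system of the constructed drawing does not depend on which solution of $I$ is used. I expect this to follow from the rigidity of the gadgets. Both $G_A$ and $G_B$ are series-parallel, and the cyclic order of edges around every vertex of a gate, a chain, or a parallel is already pinned down by the series-parallel structure and bimodality, up to the reflection fixed by the $K_4$ skeleton and the barriers. Changing the \textsc{3-Partition} solution only changes which $(a_i)$-gates and $(W+1)$-gates each path $\pi_j$ of $G_B$ traverses, hence where the crossing points (the dummy vertices of the planarization) lie; it leaves the rotation at every original vertex of $G$ unchanged. Once this is established, membership in \NP\ is immediate (guess an upward embedding consistent with $\mathcal{R}$ and verify it in polynomial time), and combining it with the hardness reduction yields \NP-completeness of upward 1-planarity testing for single-source single-sink DAGs with a fixed bimodal rotation system whose underlying planar graph is obtained by replacing the edges of a $K_4$ with series-parallel graphs.
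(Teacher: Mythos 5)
Your proposal is correct and takes essentially the same route as the paper, which obtains this corollary as a direct consequence of the Case~\ref{ca:variable-rigid-1s-1t} argument of \cref{th:st-negative}: since the barriers and the $K_4$ structure force $G_A$ and $G_B$ to fully cross in every upward 1-planar drawing, the hardness persists when the bimodal rotation system realized by the yes-instance drawing is fixed. One remark: your claim that the series-parallel structure plus bimodality already pins down the rotations is overstated (parallel compositions admit many bimodal embeddings), but it is not load-bearing, since the operative point is your subsequent observation that changing the \textsc{3-Partition} solution only moves the crossing points and leaves the rotation at every original vertex of the constructed drawing unchanged.
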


We conclude this section by remarking some differences between the complexity of upward planarity testing and upward 1-planarity testing. 
When the bimodal rotation system is fixed, upward planarity testing can be solved in polynomial time~\citep{DBLP:journals/algorithmica/BertolazziBLM94}, whereas upward 1-planarity testing is \NP-hard (\cref{th:st-negative}). Also, when the bimodal rotation system is not fixed and the digraph has a constant number of sources and sinks, differently from upward 1-planarity testing, upward planarity testing can again be solved in polynomial time~\citep*{ChaplickGFGRS22}. On the other hand, any digraph whose bimodal rotation system is not fixed, whose underlying graph is series-parallel, and that has only one source and only one sink is always upward planar and thus upward 1-planar. Indeed, adding an edge between any two vertices of the undirected underlying series-parallel graph yields a planar graph
\citep[see, e.g.,][]{dt-olpt-96}. It follows that $G$ can be turned into a planar st-graph by connecting its source to its sink by an edge and hence it is upward planar~\citep{gt-upt-95}. This discussion is summarized in \cref{tab:complexity-table}.

\begin{table}[htb]
    \caption{A comparison between results in the literature about the
      complexity of testing upward planarity and the results discussed
      in this paper about the complexity of testing upward
      1-planarity.}
    \label{tab:complexity-table}

    \smallskip
    
    \centering\footnotesize
    \begin{tabular}{|@{}c@{}|@{}c@{}||@{}c@{}|@{}c@{}|@{}c@{}|@{}c@{}|}
    \hline
    \bf Underlying\rule{0ex}{2.3ex} & 
    \bf Acyclic & 
    \multicolumn{2}{c|}{\bf Upward planarity} &
    \multicolumn{2}{c|}{\bf Upward 1-planarity} \\
    \cline{3-6}
     
    \bf planar\rule{0ex}{2.3ex} & \bf orientation: & \multicolumn{2}{c|}{Embedding} & \multicolumn{2}{@{}c@{}|}{Rotation system} \\
     
    \bf graph\rule{0ex}{2.3ex} & \bf sources/sinks & fixed & variable & fixed & variable \\
    &&\multicolumn{1}{c}{}&&&\\[-1.9ex]
    \hline\hline
     
    \multirow{2}{*}{%
    \parbox[c][9ex][c]{1.7cm}{
    \centering\bf
      Series- parallel}}
    &
    \bf multi/multi &
    \multicolumn{2}{c|}{polynomial \citep{upward-spirality,cdf-tup-gd-2022}} 
    &
    \multirow{2}{*}{%
    \parbox[c][9ex][c]{1.9cm}{%
             \centering \NP-complete \\ \cref{th:st-negative} 
            \\ Case~\ref{ca:fixed-sp-1s-1t}} 
    }
    &
    \parbox[c][10ex][c]{1.9cm}{%
        \centering \NP-complete \\ 
        \cref{th:st-negative} \\ 
        Case~\ref{ca:variable-sp-3s-2t}
    }  
    \\ 
    \cline{2-4}\cline{6-6}
    
    & 
    \bf single/single
    &
    \multicolumn{2}{c|}{polynomial~\citep{ChaplickGFGRS22,cdf-tup-gd-2022,upward-spirality}}
    &
    & 
    \parbox[c][7ex][c]{2cm}{%
        \centering trivially \mbox{polynomial}
    }
    \\
    \hline
    
    \multirow{2}{*}{%
    \parbox[c][7ex][c]{1.7cm}{
    \centering\bf General}
    }
    &
    \bf multi/multi
    &
    \parbox[c][7ex][c]{3.5cm}{\centering polynomial \\\citep{DBLP:journals/algorithmica/BertolazziBLM94}}
    &
    \parbox[c][7ex][c]{3.5cm}{\centering \NP-complete \mbox{\citep{DBLP:journals/siamcomp/GargT01}}}
    &
    \multirow{2}{*}{%
    \parbox[c][7ex][c]{1.9cm}{
            \centering \NP-complete \\ \cref{co:fixed-embedding-1-planarity}}  
    }
    &      
    \multirow{3}{*}{%
        \parbox[c][7ex][c]{1.9cm}{
            \centering \NP-complete \\ 
            \cref{th:st-negative} \\ 
            Case~\ref{ca:variable-rigid-1s-1t}
        } 
    } \\   
    \cline{2-4}
     
    & 
    \parbox[c][5ex][c]{2cm}{\centering\bf single/single}
    &
    \multicolumn{2}{c|}{%
        polynomial~\citep{ChaplickGFGRS22}
    } & & \\  
    \hline
    \end{tabular}
\end{table}

\section{Testing Upward Outer-1-Planarity}\label{sec:outerplanar}

\Cref{th:st-negative} motivates the study of the complexity of testing upward 1-planarity with additional constraints. A common restriction in the study of beyond-planar graph drawing problems is the one that requires that all vertices are incident to the same face.
Specifically, a graph is \emph{upward outer-1-planar} if it admits an upward 1-planar embedding in which every vertex lies on the outer face. 
This section is devoted to the proof of the following result.

\begin{theorem}
	\label{th:outer-testing-short}
	For single-source DAGs, upward outer-1-planarity can be tested in linear time.
\end{theorem}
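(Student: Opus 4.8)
The plan is to exploit that the \emph{outer} constraint---absent in the hard instances of \cref{th:st-negative}---rigidly pins down the global shape of the drawing, so that testing reduces to a decomposition-guided search with only constantly many local configurations per decomposition node. I would first decompose $G$ along its block-cut tree and treat each biconnected block separately. The key observation is that in an upward outer-$1$-planar drawing the outer boundary of a biconnected block is a simple cycle $H$, and because $G$ has a \emph{single source}, $H$ has exactly one local $y$-minimum; since a simple closed (non-horizontal, by upwardness) polygonal curve has as many local minima as local maxima, $H$ also has exactly one local maximum. Hence $H$ splits into two $y$-monotone chains joining the block's bottom vertex $s'$ to its top vertex $t'$, and every edge of the block not on $H$ is a \emph{chord} drawn inside the region bounded by $H$. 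This \textbf{structural lemma} is exactly where the single-source hypothesis does the work: an additional local minimum on $H$ would be a second source.

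Next I would turn the geometric problem into a purely combinatorial \emph{chord-routing} problem. Given a candidate assignment of the block's vertices to its two monotone chains (equivalently, a cyclic order on the outer face), each chord $(u,v)$ must be routed inside with $y(u)<y(v)$, and $1$-planarity forces every chord and every boundary edge to be crossed at most once. Because each edge takes part in at most one crossing, the ``crosses'' relation among chords is a \emph{matching}, and two chords can cross only if their endpoints interleave along $H$ in an upward-compatible way. I would encode the resulting feasibility question---which pairs of chords may or must cross, which chords must remain crossing-free, and which side of a crossing each endpoint lies on---as a bounded-size constraint system (a $2$-SAT instance, or equivalently a parenthesis/interval consistency check) per block, mirroring the undirected outer-$1$-planar recognition in the literature but augmented with the $y$-order and bimodality constraints inherited from the upward requirement.

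Algorithmically I would run this over the block-cut tree and, within each biconnected block, over its SPQR-tree. Biconnectivity makes the outer boundary a simple cycle, so each block contributes a single $s'$--$t'$ pair of monotone chains whose left/right split is the only embedding freedom; the SPQR-tree localizes this freedom to the $P$- and $R$-nodes, and since $1$-planarity admits only a constant number of crossing patterns per triconnected component (each edge crossed at most once), each node is resolved by a constant-size case analysis and a constant-size interface passed to its parent. Composing the per-block constraint systems along the block-cut tree, and verifying that the chosen chain orientations keep $s$ the globally unique source, yields a yes/no answer; all ingredients---block-cut tree, SPQR-tree, and a $2$-SAT instance with a linear number of clauses---run in linear time, giving the claimed bound.

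The hardest part, I expect, is the interaction between the \emph{variable} embedding and the crossings: unlike the undirected setting we must simultaneously choose the left/right chain assignment, the rotation at each $P$- and $R$-node, and the at-most-one crossing of each edge, all while preserving bimodality and the single-source property. The crux is to show that these choices \emph{decouple} into constantly many admissible interfaces per SPQR-node---so that no exponential blow-up occurs---and that the upward and single-source constraints can be threaded through the decomposition as a bounded amount of boundary information. I would establish this by arguing that, once the two monotone chains of a block are fixed, each inner chord has only a constant number of upward-admissible routings (drawn crossing-free, or crossing exactly one specified boundary edge or chord), so the local state space is finite and global consistency is fully captured by the linear-size constraint system above.
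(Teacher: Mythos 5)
There is a genuine gap, and it sits exactly where you place the weight of your argument: the ``structural lemma'' is false. In an outer-1-planar embedding of a biconnected graph the outer boundary need \emph{not} be a simple cycle of the graph; it may pass through crossing points. A biconnected graph whose outer boundary is a graph cycle through all vertices would be Hamiltonian, but $K_{2,3}$ is biconnected, non-Hamiltonian, and outer-1-planar (place the vertices in the circular order $a,u,b,v,c$ with parts $\{u,v\}$ and $\{a,b,c\}$; the chords $uc$ and $va$ cross, and that crossing point lies on the outer boundary). Orienting all edges from $u$ and towards $v$ gives a single-source DAG that \emph{is} upward outer-1-planar, yet in every outer-1-planar embedding its outer boundary contains a crossing. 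So your reduction to ``two $y$-monotone chains plus chords drawn inside'' does not cover all instances, and the subsequent chord-routing/2-SAT machinery has nothing to stand on. A second, independent flaw: even when the boundary is a cycle $H$, a local $y$-minimum of $H$ is only a source of the cycle, not of $G$ --- it can have incoming \emph{chord} edges from vertices of $H$ lying below it --- so ``an additional local minimum would be a second source'' is not a valid inference, and the single-source hypothesis is not doing the work you assign to it.

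The paper avoids geometry entirely at this point via a characterization you are missing: a single-source graph is upward outer-1-planar if and only if it admits an outer-1-planar \emph{embedding whose planarization is acyclic} (\cref{th:characterization}; the nontrivial direction adds a super-sink and invokes upward planarity of planar $st$-graphs). This turns ``upward'' into a combinatorial acyclicity condition on planarizations. The algorithm then runs Auer et al.'s SPQR-tree characterization of outer-1-planar embeddings (Conditions \ref{c1}--\ref{c5}), directs virtual edges by reachability in their expansion graphs, and requires each P-/R-node skeleton embedding to have an acyclic planarization. The hard step --- and the place where the single-source hypothesis is genuinely used --- is the sufficiency lemma (\cref{le:acyclic-skeletons-implies-upwardG}): a directed cycle in the global planarization would project to a skeleton cycle containing an undirected virtual edge, whose expansion graph then has both poles as sources, a contradiction. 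Global consistency (Condition \ref{c5}) is resolved not by 2-SAT but by a maximum independent set in an auxiliary conflict graph of bounded treewidth. Your instinct that each skeleton admits only constantly many admissible configurations is correct and is also what the paper exploits, but without the acyclic-planarization characterization and with the false boundary-cycle lemma, your proposed proof cannot be repaired as stated.
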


\subsection{Basic Facts and Definitions}

We provide the following characterization for the single-source DAGs admitting an upward outer-1-planar drawing.

\begin{theorem}\label{th:characterization}
	A single-source graph is upward outer-1-planar if and only if it admits an outer-1-planar embedding whose planarization is acyclic.
\end{theorem}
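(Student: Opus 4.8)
The plan is to prove the two directions separately, treating the forward direction as a short observation and devoting the bulk of the work to the backward direction.

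For the forward direction, suppose $G$ is upward outer-1-planar and let $\Gamma$ be a witnessing upward $1$-planar drawing with all vertices on the outer face. Then $\Gamma$ is by definition an outer-1-planar drawing, so its embedding is an outer-1-planar embedding $\mathcal E$. To see that its planarization $G'$ is acyclic, I would orient every maximal uncrossed segment of $\Gamma$ in the direction of increasing $y$. Since $\Gamma$ is upward, every original edge is $y$-monotone increasing, hence so is every segment; consequently the drawing of $G'$ obtained by placing a dummy vertex at each crossing has $y$ strictly increasing along every directed edge. A directed cycle in $G'$ would yield a closed $y$-monotone curve, which is impossible. Thus $G'$ is acyclic.

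For the backward direction, assume $G$ admits an outer-1-planar embedding $\mathcal E$ whose planarization $G'$ is acyclic. First I would record the structure of $G'$: each crossing becomes a dummy vertex of in- and out-degree exactly two (the two original edges pass straight through it), so dummy vertices are neither sources nor sinks, and moreover their rotation is automatically bimodal because the two halves of each passing edge sit at antipodal positions. Hence $G'$ is a plane DAG whose unique source is the source $s$ of $G$ and whose sinks are exactly the sinks of $G$; by the outerness of $\mathcal E$, all of these lie on the outer face. The plan is then to (i)~realize $\mathcal E$ by an upward planar drawing of $G'$ and (ii)~reinterpret each dummy vertex as a crossing, which turns this into an upward $1$-planar drawing of $G$ realizing $\mathcal E$; since all original vertices lie on the outer face, the result is upward outer-1-planar. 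The reinterpretation in (ii) produces a simple drawing because $1$-planarity of $\mathcal E$ guarantees at most one dummy per original edge.

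To carry out step~(i), I would first prove the key lemma that $\mathcal E$ is bimodal at every original vertex. The intuition is that a non-bimodal rotation at a vertex $v$ would force one of its incident edges into a bounded region of $\mathcal E$: if incoming and outgoing edges alternated around $v$, then some incident edge, together with the neighbor reached along it, would be separated from the outer face by the other edges, contradicting that all vertices lie on the outer face; the single-source hypothesis is used to rule out the symmetric cut-vertex configurations, where the two sides of $v$ could otherwise each contribute an in/out alternation. Once bimodality is established, I would augment $G'$ inside the outer face by adding a super-sink $t$ joined to every sink of $G'$; because all sinks lie on the common outer boundary, these edges can be drawn without crossings and while keeping $s$ and every original vertex on the outer boundary. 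The augmented graph is then a bimodal plane digraph with a single source $s$ and single sink $t$ on its outer face, i.e.\ a planar st-graph, and therefore admits an upward planar drawing realizing its embedding~\citep{gt-upt-95}. Deleting $t$ and its edges yields the desired upward planar drawing of $G'$ realizing $\mathcal E$. The \textbf{main obstacle} I expect is exactly this combination of the bimodality lemma and the outerness-preserving augmentation: one must argue precisely, using only the outer-face incidence, acyclicity, and the single source, that no original vertex can have an alternating rotation, and that the super-sink can always be inserted so that $s$ remains on the outer boundary, so that deleting the augmentation does not push any original vertex off the outer face.
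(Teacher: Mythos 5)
Your proposal is correct and takes essentially the same route as the paper: the forward direction via the observation that planarizing an upward drawing yields an upward (hence acyclic) planarization, and the backward direction by noting that dummy vertices are neither sources nor sinks, adding a super-sink $t$ in the outer face joined to all sinks, invoking the classical upward-planarity theorem for embedded planar $st$-graphs, and then deleting $t$. The only divergence is that what you flag as the main obstacle, the bimodality lemma, is unnecessary: bimodality is a consequence of being a planar $st$-graph with $s$ and $t$ on the outer face rather than a hypothesis of the theorem you invoke, which is why the paper's proof omits it entirely.
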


\begin{proof}
	Let~$G$ be a single-source graph with an upward outer-1-planar embedding~$\Gamma$.  Clearly, $\Gamma$ is outer-1-planar.  Moreover, planarizing~$\Gamma$ yields an upward drawing of the planarization, hence the planarization is acyclic.
	
	Conversely, let~$\Gamma$ be an outer-1-planar embedding whose planarization~$\Gamma^\star$ is acyclic.  Observe that no crossing vertex of~$\Gamma^\star$ is a source or a sink.  Therefore,~$\Gamma^\star$ only has a single source~$s$ and the sinks of $G$, which are all incident to the outer face. Let~$\Gamma^+$ be obtained from~$\Gamma^\star$ by adding a new super sink~$t$, with edges from all sinks of $G$ to~$t$ into the outer face.  Clearly~$\Gamma^+$ is acyclic and a planar $st$-graph, and therefore upward planar~\citep{batTam:88}.  Removing~$t$ then yields an upward planar drawing of~$\Gamma^\star$ and hence~$\Gamma$ is upward outer-1-planar.
\end{proof}

The following lemma follows from \cref{th:characterization} and shows that we can decompose the problem on the biconnected components of the input graph.

\begin{lemma}
	\label{lem:o1p-bico}
	A single-source graph is upward outer-1-planar if and only if each of its blocks admits an upward outer-1-planar embedding.
\end{lemma}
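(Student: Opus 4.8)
The plan is to prove \cref{lem:o1p-bico} by reducing the question on the whole graph to the question on each of its blocks (maximal biconnected components), using the single-source hypothesis and the fact that upward outer-1-planarity is characterized combinatorially by \cref{th:characterization}. The forward direction is immediate: if $G$ admits an upward outer-1-planar embedding, then restricting that embedding to any block yields an embedding of the block in which all its vertices still lie on the outer face and which is still upward and 1-planar, so each block is upward outer-1-planar.

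For the converse, I would work with the block-cut tree of $G$. The key structural observation is that in a single-source DAG, the source $s$ lies in some block, and every other block is ``reached'' through a cut vertex; crucially, a cut vertex that separates two blocks must be a source \emph{within} every block that lies ``below'' it in the block-cut tree, because $G$ has only one global source. First I would root the block-cut tree at the block containing $s$, orienting the tree away from $s$. Then I would process blocks along this rooted tree, and in each block I would use \cref{th:characterization} in the form: each block has an outer-1-planar embedding whose planarization is acyclic. The goal is to glue these per-block embeddings together at the shared cut vertices so that the combined drawing is still outer (all vertices on the outer face), still 1-planar (no new crossings are introduced since blocks are drawn in disjoint regions), and has an acyclic planarization.

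The main step is the gluing. At a cut vertex $v$, the blocks incident to $v$ share only the vertex $v$. Since $v$ is incident to the outer face of each of its blocks' embeddings (because all vertices are), I can place the embeddings of the child blocks in small disjoint regions within one of the faces incident to $v$ in the parent block's drawing — specifically the outer face — so that every vertex remains on the common outer face of the merged drawing. Because the pieces occupy disjoint regions and meet only at $v$, no crossing of one block interferes with another, so the merged embedding remains 1-planar and outer. The acyclicity of the merged planarization then follows from the acyclicity of each block's planarization together with the single-source structure: any directed cycle in the merged planarization would have to pass through a cut vertex $v$ and hence use edges from two different blocks, but since $v$ is a source in each child block and all its incident directed edges in those blocks leave $v$, no directed path can return to $v$ through a child block, precluding cycles that cross block boundaries. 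Applying \cref{th:characterization} to the merged embedding then certifies that $G$ itself is upward outer-1-planar.

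I expect the gluing consistency to be the main obstacle, and there are two subtleties to handle carefully. First, I must verify that the outer-1-planar embeddings of the individual blocks can be simultaneously chosen with $v$ on the outer face \emph{and} in a way that is compatible with the bimodal/upward structure at $v$ once the blocks are combined; here the single-source property is what rules out the bad configurations, since $v$ being a source in every child block means all its block-edges point outward and can be arranged consistently around $v$ without creating a non-upward rotation. Second, I must confirm that reattaching the super sink in the \cref{th:characterization} argument interacts correctly across blocks — but since \cref{th:characterization} is applied to the already-merged embedding, this reduces to checking acyclicity of the merged planarization, which the block-cut structure guarantees. Once acyclicity of the merged planarization is established, the nontrivial analytic content is discharged by invoking \cref{th:characterization}, so the proof is primarily a careful combinatorial gluing argument rather than a geometric construction.
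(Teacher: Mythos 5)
Your proposal is correct and follows essentially the same route as the paper: necessity by restriction, sufficiency by gluing the blocks' outer-1-planar embeddings (equivalently, their planarizations) at the cut vertices while keeping every vertex on the common outer face, then invoking \cref{th:characterization} once the merged planarization is shown acyclic. The only difference is that the paper dispatches acyclicity in one line---every cycle of the merged graph lies inside a single block, since cycles are biconnected---so your rooting of the block-cut tree and the (correct) observation that each cut vertex is a source of the blocks below it are unnecessary machinery, as is the worry about bimodal rotations at cut vertices, which \cref{th:characterization} already absorbs.
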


\begin{proof}
	The necessity is trivial.
	For the sufficiency, suppose that all blocks admit an upward outer-1-planar embedding, whose planarization is thus acyclic. Combine all such planarizations at the cut-vertices so that all their vertices are incident to the outerface. Since each cycle in the graph is contained inside a block, this yields an acyclic planarization of an outer-1-planar embedding of the entire graph. This and \cref{th:characterization} imply the statement.
\end{proof}

The SPQR-tree data structure, introduced by \cite{dt-olpt-96}, to
represent the decomposition of a biconnected graph into its
triconnected components is a special type of decomposition
tree~\citep{DBLP:journals/siamcomp/HopcroftT73}.
A \emph{decomposition tree} of a biconnected graph $G$ is a
tree~$\mathcal T$ whose nodes $\mu$ are associated with a biconnected
multi-graph $\skel(\mu)$, called \emph{skeleton}. The edges of a
skeleton can be either \emph{real} or \emph{virtual}, and for each
node~$\mu$, the virtual edges of~$\skel(\mu)$ correspond bijectively
to the edges of~$\mathcal T$ incident to~$\mu$. The tree $\mathcal T$
can be inductively defined as follows.  In the base case, $\mathcal T$
consists of a single node $\mu$ whose skeleton is the graph $G$
consisting solely of real edges.  In the inductive case, let $\mu$ be
any node of $\mathcal T$ and let $H_1$ and~$H_2$ be edge-disjoint
connected subgraphs of~$\skel(\mu)$ such that
$\skel(\mu) = H_1 \cup H_2$ and $H_1 \cap H_2 = \{u, v\}$. We obtain a
new decomposition tree $\mathcal T'$ from $\mathcal T$ by splitting
the node $\mu$ into two adjacent nodes $\nu_1$ and
$\nu_2$ whose skeletons are $H_1 + uv$ and $H_2 + uv$,
respectively, where~$uv$ is a new virtual edge that corresponds to the
edge~$\nu_1\nu_2$ of~$\mathcal T'$; also, we replace each edge $\mu\tau$ of $\mathcal T$ either with the edge $\nu_1 \tau$ or with the edge $\nu_2 \tau$ depending on wether edge $\mu\tau$ corresponds to a virtual edge of $H$ that belongs to $H_1$ or to $H_2$, respectively.
The edges of $\mathcal T$ incident to~$\mu$ are distributed among~$\nu_1$ and~$\nu_2$ based on whether their corresponding virtual edge belongs to~$H_1$ or~$H_2$.
Let now~$\mathcal T$ be an arbitrary decomposition tree of a
biconnected graph.  Note that each edge of $G$ occurs as a real edge
in precisely one skeleton of $\mathcal T$. Consider a virtual edge
$e$ in the skeleton of some node $\mu$ of $\mathcal T$ and let
$\varepsilon$ be the edge of $\mathcal T$ which corresponds to $e$. The \emph{expansion graph} $\expn(e)$ of $e$ is
the graph obtained as the union of the real edges belonging to the
skeletons of the subtree of $\mathcal T$ reachable from $\mu$ via the
edge $\varepsilon$.
Observe that $\expn(\{a,b\})$ is connected to the rest of the graph
through $a$ and $b$. The vertices~$a$ and $b$ are the \emph{poles}
of~$\expn(\{a,b\})$.  For every pair of adjacent nodes $\mu$ and $\nu$
of $\mathcal{T}$, there exists a virtual edge $e$ in $\skel(\mu)$ and
a virtual edge $e'$ in $\skel(\nu)$ with the same end-vertices. We say
that $\mu$ (resp.\ $\nu$) is the \emph{refining} node $\refn(e')$ of
$e'$ (resp.\ $\refn(e)$ of $e$).

The SPQR-tree~$\mathcal T$ of a biconnected graph~$G$ is a decomposition tree that has three types of nodes.  The skeleton of an \emph{S-node} is a simple cycle of length at least~3, the skeleton of a \emph{P-node} consists of two vertices and at least three parallel edges, and the skeleton of an \emph{R-node} is a simple 3-connected graph.  Moreover, no two S-nodes and no two P-nodes are adjacent in $\mathcal T$.  We remark that unlike the classical definition of SPQR-trees, we allow real edges in any skeleton; this avoids the need to use Q-nodes whose skeletons have two vertices and one real and one virtual edge.

The SPQR-tree of a biconnected planar graph $G$ can be used to succinctly represent all planar embeddings of $G$.  Specifically, any planar embedding of $G$ uniquely defines an embedding for all the skeletons of $\mathcal T$. Moreover, recursively combining planar embeddings of the skeletons of $\mathcal T$ via 2-clique sums of such embeddings results in a planar embedding of $G$.
The \emph{2-clique-sum} of two embeddings $\mathcal E'$ and $\mathcal E''$ containing the virtual edge $\{u,v\}$ identifies the two copies of $u$ and the two copies of $v$, removing the edge $\{u,v\}$, in such a way that edges in $\mathcal E'$ do not alternate with edges in $\mathcal E''$. Even in the presence of non-planar embeddings of skeletons, such an operation can be performed in the same fashion as long as the common virtual edge $\{u,v\}$ is crossing free in both embeddings.

Let~$\mu$ be a P-node with skeleton vertices~$\{u,v\}$, let~$e$ be a
virtual edge in~$\skel(\mu)$ that is refined by an S-node~$\lambda$
with skeleton~$(u,c_1,c_2,\dots,c_k,v,u)$.  Then the \emph{first segment}
of~$e$ at~$u$ is the (virtual or real) edge~$\{u,c_1\}$ of~$\skel(\lambda)$ and the
\emph{first segment} of~$e$ at~$v$ is the (virtual or real)
edge~$\{c_k,v\}$ of~$\skel(\lambda)$.

\subsection{Proof of \cref{th:outer-testing-short}}

There exist two different algorithms for testing outer-1-planarity, namely by \citet*{DBLP:conf/gd/HongEKLSS13,DBLP:journals/algorithmica/HongEKLSS15}
and by \citet*{DBLP:journals/algorithmica/AuerBBGHNR16}.  It is likely that both can be adapted to test upward outer-1-planarity.  We choose to follow the approach by \cite{DBLP:journals/algorithmica/AuerBBGHNR16}, which we briefly summarize in the following, as it seems slightly simpler to extend to our setting.  Two key properties that both papers leverage are the facts that if~$G$ is an outer-1-planar graph and~$\mathcal T$ is its SPQR-tree, then the skeleton of each R-node of~$\mathcal T$ is a $K_4$ that contains two non-adjacent real edges and the skeleton of each P-node has at most four virtual edges and at most one real edge.  In what follows, we assume that these two properties are satisfied.

\subsubsection{The Outer-1-Planarity Testing Algorithm of Auer et al.}

By \Cref{lem:o1p-bico}, we may assume that our input graph $G$ is biconnected and acyclic.  
Consider a decomposition tree~$\mathcal T$ of $G$ (not necessarily the SPQR-tree).  Let~$\mu$ be a node of~$\mathcal T$.  We call an outer-1-planar embedding of~$\skel(\mu)$ \emph{good} if only real edges cross and each virtual edge is incident to the outer face. 
\begin{observation}\label{obs:good-embeddings}
	Good embeddings of all skeletons of~$\mathcal T$ together define a unique outer-1-planar embedding of~$G$, which is obtained by forming the 2-clique sums of the embeddings of the skeletons.
\end{observation}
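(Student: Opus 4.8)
The statement asserts that good embeddings of all skeletons assemble, via 2-clique sums, into a unique outer-1-planar embedding of $G$. My plan is to verify three things in turn: that the assembly is well-defined, that the resulting embedding is outer-1-planar, and that it is uniquely determined by the chosen good skeleton embeddings. The starting point is the definition of the 2-clique-sum recalled in the preliminaries: it is performed along a shared virtual edge $\{u,v\}$ and is only guaranteed to behave well when this edge is crossing-free in both embeddings. The key observation that makes everything work is built into the definition of \emph{good}: each virtual edge is incident to the outer face and only real edges cross. In particular every virtual edge is crossing-free, so the 2-clique-sum along every tree edge of $\mathcal T$ is always well-defined. This is the first step, and it is essentially immediate from the definition.

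Second, I would argue that the glued embedding is outer-1-planar. Here the plan is to proceed by induction on the number of nodes of $\mathcal T$ (equivalently, on the number of 2-clique-sums performed), contracting the tree one edge at a time. At each step we sum two embeddings along a common virtual edge $\{u,v\}$ that is on the outer face of both; because the edge is crossing-free on both sides, the 2-clique-sum merely identifies $u$ with $u$ and $v$ with $v$ and deletes the (now internal) virtual edge, without introducing any new crossing and without disturbing existing ones. I would check that \emph{outerness} is preserved: the vertices of $\skel(\mu)$ other than $u,v$ that lie on the outer face remain on the outer face after the sum, since the gluing happens precisely along the outer boundary at the shared edge (the ``edges in $\mathcal E'$ do not alternate with edges in $\mathcal E''$'' condition guarantees that the two faces on either side of the virtual edge are merged consistently). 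The crossing count per edge also cannot increase, since no curves are redrawn except the deletion of the virtual segment. Thus the invariant ``outer-1-planar, only real edges cross'' is maintained throughout, and at the end we obtain an outer-1-planar embedding of $G$.

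Third, I would establish \emph{uniqueness}: that the final embedding depends only on the good embeddings chosen at the skeletons, not on the order in which the sums are taken. This follows because the 2-clique-sum along a given virtual edge is a local operation that commutes with sums along disjoint virtual edges, and distinct tree edges of $\mathcal T$ correspond to distinct virtual-edge pairs with disjoint interiors; hence the sums along different tree edges are independent, and any evaluation order yields the same rotation system and the same outer face. Conversely, every edge of $G$ is a real edge in exactly one skeleton, so the rotation at each vertex and the outer-boundary structure of the glued embedding are completely determined by the local data, giving a single well-defined embedding of $G$.

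The main obstacle, and the step I would treat most carefully, is the second one: verifying that \emph{outerness} is genuinely preserved by each 2-clique-sum. The subtlety is that gluing along $\{u,v\}$ could in principle bury some vertex of one skeleton inside an internal face of the other. This is exactly where the two hypotheses in the definition of \emph{good}---that virtual edges lie on the outer face and are not crossed---are indispensable: they ensure that the gluing takes place along a shared outer-face boundary, so that all vertices on both sides remain reachable from the outer face after the deletion of the virtual edge. I would make this precise by tracking, before and after each sum, the face incident to the virtual edge on the outer side and confirming it becomes (part of) the outer face of the combined drawing.
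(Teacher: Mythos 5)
The paper offers no proof of this observation at all---it is stated as immediate from the definitions of \emph{good} embeddings and of the 2-clique sum---so there is no authorial argument to diverge from; your proposal simply makes the intended reasoning explicit (well-definedness because virtual edges are crossing-free, induction along the tree edges, preservation of outerness and of crossings, order-independence), and it is correct. One imprecision worth fixing: the ``edges in $\mathcal E'$ do not alternate with edges in $\mathcal E''$'' condition does \emph{not} by itself guarantee that the faces are ``merged consistently,'' since both reflections of the glued part satisfy non-alternation, and the wrong reflection merges the outer face of one skeleton with an internal face of the other, burying vertices; what forces the outer-to-outer gluing at every tree edge---and hence both outer-1-planarity and uniqueness---is precisely the requirement you isolate in your final paragraph, so that paragraph, not the parenthetical, is what carries the argument.
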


However, if we consider the SPQR-tree, good embeddings of the skeletons may not exist even if~$G$ is outer-1-planar.  Therefore, \cite{DBLP:journals/algorithmica/AuerBBGHNR16} work with a more general definition of outer-1-planar embeddings for skeletons of the SPQR-tree and show that an outer-1-planar embedding of $G$ can be combinatorially described by embeddings of the skeletons of its SPQR-tree~$\mathcal T$ that satisfy the following five conditions {C1}--{C5}.  We note that the edges of the skeletons may cross even for P-nodes, i.e., the embeddings of the skeletons we consider are not necessarily simple drawings.

\begin{enumerate}[C1]
\item \namedlabel{c1}{C1} Each skeleton is embedded outer-1-planar
  such that each virtual edge has a segment that is incident to the
  outer face \citep[Proposition
  3]{DBLP:journals/algorithmica/AuerBBGHNR16}.
    
\item \namedlabel{c2}{C2} Skeletons of S-nodes are embedded planarly
  \citep[Corollary 3]{DBLP:journals/algorithmica/AuerBBGHNR16}.
    
\item \namedlabel{c3}{C3} Skeletons of R-nodes are embedded such that virtual edges are crossing-free and
  precisely two non-virtual edges cross each other \citep[Corollary
  2]{DBLP:journals/algorithmica/AuerBBGHNR16}.

\item \namedlabel{c4}{C4} If an edge~$e=\{u,v\}$ of the skeleton of a
  P-node is crossed, then it is a virtual edge corresponding to an
  S-node.  Moreover, if the segment of~$e$ incident to~$u$ (incident
  to~$v$) is not incident to the outer face, then the first segment
  of~$e$ at $u$ (at $v$) must be a real edge \citep[Lemma
  2]{DBLP:journals/algorithmica/AuerBBGHNR16}.
\item \namedlabel{c5}{C5} Let~$\mu$ be an S-node whose skeleton
  contains a real edge~$\{u,v\}$ and two adjacent virtual
  edges~$\{u',u\}$ and~$\{v,v'\}$ that are refined by~$\rho$ and
  $\lambda$, respectively.  Let~$e_\rho,e_\lambda$ be the virtual
  edges that represent~$\mu$ in~$\skel(\rho)$ and~$\skel(\lambda)$,
  respectively.  If both~$\rho$ and~$\lambda$ are P-nodes, then the
  segment of~$e_\rho$ incident to~$u$ must be incident to the outer
  face in the embedding of~$\skel(\rho)$ or the segment of~$e_\lambda$
  incident to~$v$ must be incident to the outer face in the embedding
  of~$\skel(\lambda)$.
\end{enumerate}
      
\begin{figure}[tb!]
  \centering
  \includegraphics{C4}
  \caption{A portion of an SPQR-tree with an S-node~$\mu$ and two adjacent S-nodes~$\rho,\lambda$ illustrating the necessity of Condition~\ref{c5}. Since
    neither the segment of~$e_\rho$ incident to~$u$ is incident to the
    outer face in the embedding of~$\skel(\rho)$ nor the segment
    of~$e_\lambda$ incident to~$v$ is incident to the outer face in
    the embedding of~$\skel(\lambda)$, the real edge $\{u,v\}$ must be
    involved in at least two crossings. Namely, at least one crossing
    with an edge in the expansion graph of the blue virtual edges of
    $\skel(\rho)$ and at least one crossing with an edge in the
    expansion graph of the yellow virtual edges of $\skel(\lambda)$.}
  \label{fig:C4-necessity}
\end{figure}

The idea behind Conditions~\ref{c4} and \ref{c5} is that crossing two virtual edges~$e,e'$ in a skeleton of a P-node~$\mu$ with vertices~$\{u,v\}$ in such a way that~$e$ is not incident to the outer face at~$u$ and~$e'$ is not incident to the outer face at~$v$ implies that we cross the first segment of~$e$ at~$u$ and the first segment of~$e'$ at $v$; see the gray-shaded regions in \cref{fig:C4-necessity} and the more formal discussion later.  Condition~\ref{c4} guarantees that these are real edges.  On the other hand, Condition~\ref{c5} ensures that each real edge receives at most one crossing in this way; see \cref{fig:C4-necessity} for an illustration of a case in which this is violated.

It follows from the work of \cite{DBLP:journals/algorithmica/AuerBBGHNR16}
that together these conditions describe all outer-1-planar embeddings of $G$ without unnecessary crossings (i.e., there is no outer-1-planar drawing whose crossing edge pairs form a strict subset).  We summarize this in the following theorem.  As this result does not appear explicitly in the work of Auer et al., for the sake of completeness, we briefly sketch the construction. In particular, the construction of an embedding of $G$ from the embeddings of the skeletons is crucial for understanding our next steps. 

\begin{theorem}
	\label{thm:spqr-outer-1-planar}
	Let~$G$ be a biconnected graph and let~$\mathcal T$ be its SPQR-tree.  There is a bijection between the outer-1-planar embeddings of~$G$ without unnecessary crossings and the choice of an embedding for each skeleton of~$\mathcal T$ that satisfy Conditions~\ref{c1}--\ref{c5}.  Moreover, both directions of the bijection can be computed in linear time.    
\end{theorem}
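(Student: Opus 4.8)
The plan is to realize the bijection concretely through the 2-clique-sum operation, extending \cref{obs:good-embeddings} from good embeddings to the relaxed conditions~\ref{c1}--\ref{c5}. I would first treat the construction direction (skeleton embeddings $\to$ embedding of~$G$), since this is the part the paper promises to sketch. Given an embedding of every skeleton satisfying \ref{c1}--\ref{c5}, I would compose them by contracting the edges of~$\mathcal T$ in an arbitrary order, each time forming the 2-clique-sum across the shared virtual edge~$\{u,v\}$ of the two incident skeletons. Whenever $\{u,v\}$ is crossing-free in both embeddings the sum is standard, and by~\ref{c1} a segment of~$\{u,v\}$ lies on the outer face, so every vertex stays on the outer boundary. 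The only crossed virtual edges occur inside P-nodes and, by~\ref{c4}, are refined by S-nodes; for such an edge~$e$ I would not cross the virtual edge itself but instead push the crossing into the expansion graph~$\expn(e)$, routing the crossing curve across the first segment of the S-node at the appropriate pole. By the second part of~\ref{c4} this first segment is a real edge exactly when the push is needed, so the crossing becomes a genuine crossing of two real edges of~$G$.

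Next I would verify that the resulting embedding is outer-1-planar without unnecessary crossings. Outerplanarity is immediate from~\ref{c1}. For 1-planarity I would check that no edge is crossed twice: \ref{c3} yields a single crossing in each R-node; the crossings pushed inside P-nodes land on first segments of distinct S-node expansions and hence on distinct real edges; and \ref{c5} is exactly what forbids a single real edge of an S-node from absorbing pushed crossings from both of its neighboring P-nodes (cf.\ \cref{fig:C4-necessity}). Since every crossing introduced is forced by the given skeleton embeddings, the composed embedding has no unnecessary crossings.

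For the reverse direction (embedding of~$G$ $\to$ skeleton embeddings) I would, given an outer-1-planar embedding~$\mathcal E$ of~$G$ without unnecessary crossings, contract the expansion graph of each virtual edge to that edge so as to induce an embedding of every~$\skel(\mu)$. Conditions \ref{c1}--\ref{c4} are then precisely the cited results of \cite{DBLP:journals/algorithmica/AuerBBGHNR16}, while \ref{c5} follows because its violation would force the real edge of the S-node to be crossed once from the expansion of each neighboring P-node, hence twice, contradicting 1-planarity. I would then argue that the two maps are mutually inverse---contracting after composing recovers the skeleton embeddings, and composing after contracting recovers~$\mathcal E$, where the ``no unnecessary crossings'' hypothesis pins down the unique location of every crossing---and that linear time follows since~$\mathcal T$ has linear size and each 2-clique-sum and each contraction is a local operation.

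The hard part will be the construction direction, and within it the reinterpretation of a P-node crossing between two virtual edges as a crossing of two real edges buried deep in the corresponding S-node expansions. I expect the delicate point to be proving that \ref{c4} and \ref{c5} together characterize exactly when this reinterpretation is feasible, i.e., when the forced crossings can be absorbed by first segments without any real edge receiving a second crossing.
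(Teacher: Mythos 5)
Your proposal is correct and follows essentially the same route as the paper: compose skeleton embeddings via 2-clique-sums, and realize each P-node crossing between two virtual edges as a crossing of the two \emph{first segments} (real edges by Condition~\ref{c4}), with Condition~\ref{c5} preventing any real edge from absorbing crossings from two different P-nodes. The only difference is presentational: the paper formalizes your ``pushing the crossing into the expansion graph'' as an explicit surgery on the tree (the \emph{extension}~$\mathcal T^\star$, which shortens the S-node skeletons and moves the two real first segments into the P-node skeleton so that all skeletons have good embeddings), a device it reuses later for the acyclicity lemmas.
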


\begin{proof}
    For the bijection, observe that the necessity of \ref{c1}--\ref{c5} has been argued above.  For the converse, we show how we obtain an embedding of~$G$ from embeddings~$\{\mathcal E_\mu\}_{\mu \in V(\mathcal T)}$ for each skeleton $\skel(\mu)$ that satisfy~\ref{c1}--\ref{c5}. We refer to this embedding as the \emph{combined embedding of $\{\mathcal E_\mu\}_{\mu \in V(\mathcal T)}$}.

    \begin{figure}[tb!]
	   \centering
	   \includegraphics[page=2]{C4}
	   \caption{Illustration for the extension of a P-node $\mu$.}
	   \label{fig:extension}
    \end{figure}

First note that, if there are no crossings that involve virtual edges
in the embeddings of the skeletons, then the embeddings are good and
their 2-clique sum yields an outer-1-planar embedding of~$G$ by
\cref{obs:good-embeddings}.  If there are virtual edges that are
involved in a crossing, they belong to the skeletons of P-nodes by
Conditions~\ref{c2} and~\ref{c3}.  We reduce to the case of good
embeddings by \emph{extending} such P-nodes, that is, by modifying the
SPQR-tree, its skeletons, and their embeddings as follows; refer to
\cref{fig:extension}.  Let~$e',e''$ be two crossing virtual edges in
an embedding of a skeleton of a P-node~$\mu$. Let~$\{u,v\}$ be the
vertices of~$\skel(\mu)$.  By Condition~\ref{c4}, we have that the
nodes $\nu'= \refn(e')$ and $\nu''= \refn(e'')$ are S-nodes with
skeletons~$(u,c'_1,\dots,c'_k,v,u)$ and~$(u,c_1'',\dots,c_{\ell}'',v,u)$,
respectively.  Without loss of generality, we may assume that the
segment of~$e'$ incident to~$u$ is incident to the outer face in the
embedding of~$\skel(\mu)$, and hence by Condition~\ref{c1} the segment
of~$e''$ incident to~$v$ is also incident to the outer face. By
Condition~\ref{c4}, the edge~$\{c_k',v\}$
of~$\skel(\nu')$, which is the first
segment of~$e'$ at~$v$, and the edge~$\{u,c_1''\}$ of~$\skel(\nu'')$,
which is the first segment of~$e''$ at~$u$, are real edges.  We perform
the following modifications.  First, we shorten the skeletons
of~$\nu'$ and of~$\nu''$ to~$(u,c'_1,\dots,c'_{k-1},c'_k,u)$
and~$(c''_1,c''_2,\dots,c''_{\ell},v,c''_1)$, respectively, and keep
their embeddings planar.  Second, we subdivide the virtual edge $e'$
($e''$) with the vertex $c'_k$ ($c''_1$) such that $\{v,c'_k\}$
($\{u,c''_1\}$) is a real edge and $\{c'_k,u\}$ ($\{c''_1,v\}$) is a
virtual edge that is refined by the modified S-node $\nu'$
($\nu''$). We embed the skeletons of such nodes so that the real
edges~$\{v,c'_k\}$ and $\{u,c''_1\}$ cross and the two virtual
edges~$\{u,c'_k\}$ and $\{v,c''_1\}$ are incident to the outer
face. Note that this essentially {\em moves} two real edges from
S-node skeletons into an adjacent P-node. See the right side of
\cref{fig:extension} for the modified skeletons.

Condition~\ref{c5} guarantees that no real edge needs to be moved into the skeletons of two different P-nodes.  Hence we can apply this operation simultaneously and independently to all pairs of crossing virtual edges.  We then arrive at a decomposition tree~$\mathcal T^*$ of~$G$, whose skeletons have good embeddings.  As explained above, we then obtain an outer-1-planar embedding of $G$.

Observe that this construction can be carried out in linear time.  Conversely, given an outer-1-planar embedding~$\mathcal E$ of $G$.  The embeddings~$\mathcal E_\mu$ of all skeletons~$\skel(\mu)$ can be obtained by checking the order of the incident edges for each vertex and the pairs of edges that cross.  Moreover, it is then clear that applying the above construction to the~$\mathcal E_\mu$, we reobtain~$\mathcal E$.
\end{proof}

The tree~$\mathcal T^\star$ we constructed towards the end of the proof of \Cref{thm:spqr-outer-1-planar} is called the \emph{extension of~$\mathcal T$ with respect to $\{\mathcal E_\mu\}_{\mu \in V(\mathcal T)}$}.
Since the modification neither adds nor removes nodes of~$\mathcal T$, there is a bijection between the nodes of~$\mathcal T$ and~$\mathcal T^*$. Namely, each node~$\mu$ of~$\mathcal T$ contributes a unique node~$\mu^*$ to~$\mathcal T^*$, and these are all the nodes of~$\mathcal T^*$.  
The node~$\mu^*$ is called the \emph{extended node of~$\mu$}
and the good embedding of~$\skel(\mu^*)$ obtained with the procedure described above is called the \emph{extended embedding} of~$\skel(\mu)$.  We further note that for P- and R-nodes~$\mu$,~$\skel(\mu^\star)$ only depends on the embedding~$\mathcal E_\mu$ of~$\skel(\mu)$.  Thus, for P-nodes and R-nodes~$\skel(\mu^\star)$ can also be defined if we only have an embedding of~$\skel(\mu)$ that satisfies~\ref{c1}--\ref{c4}.  Condition \ref{c5} is only needed if we want to combine such skeletons into the whole extension~$\mathcal T^\star$.

\subsubsection{An Algorithm to test Upward Outer-1-Planarity.}
Our goal is to find an outer-1-planar embedding of $G$ whose planarization is acyclic.  Observe that it suffices to consider outer-1-planar embeddings without unnecessary crossings, since adding more crossings cannot make a cyclic planarization acyclic.  Our goal is therefore to find outer-1-planar embeddings of the skeletons of the SPQR-tree $\mathcal{T}$ of $G$ satisfying Conditions~\ref{c1}--\ref{c5} so that the resulting planarization of~$G$ is acyclic.

\paragraph{Directing virtual edges.}

To keep track of the existence of directed paths, we orient some edges
of the skeletons of~$\mathcal T$.  Consider a skeleton~$\skel(\mu)$
and let~$e=\{u,v\}$ be an edge of~$\skel(\mu)$.  We orient~$e$
from~$u$ to~$v$ if either it is a real edge directed from~$u$ to~$v$
in~$G$ or if~$\expn(e)$ contains a directed path from~$u$ to~$v$.

We observe some properties of the constructed orientation.
First, note that a virtual edge~$\{u,v\}$ is oriented in at most one
direction, but it may also remain undirected in case its expansion
graph contains neither a directed path from~$u$ to~$v$ nor a directed
path from~$v$ to~$u$.  Second, if a virtual edge~$e$ with endpoints $u,v$ is
directed,
say from~$u$ to~$v$, then~$u$ is the single source of~$\expn(e)$, as otherwise $G$ would contain an additional source.  Finally, if $e$ is
undirected, since~$G$ has a single source $s$, either~$\expn(e)$
contains $s$ as a non-pole vertex or both~$u$ and~$v$ are sources
of~$\expn(e)$.

\paragraph{Acyclic Embeddings.} 

\begin{figure}[tb]
	\begin{minipage}[b]{.48\textwidth}
		\centering
		\includegraphics[page=3]{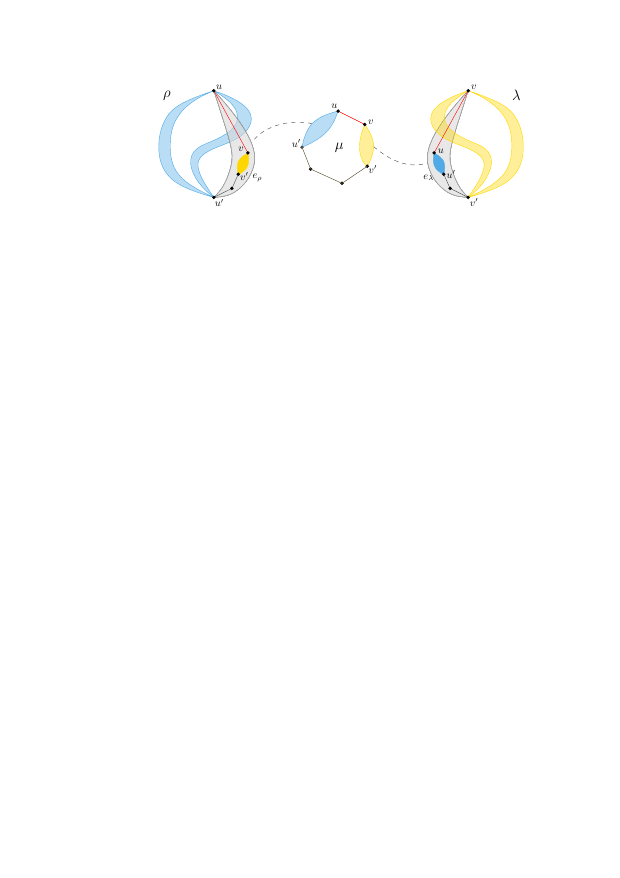}
		\subcaption{}
		\label{fig:acyclic-1-long}
	\end{minipage}
	\hfill
	\begin{minipage}[b]{.48\textwidth}
		\includegraphics[page=4]{C4.pdf}
		\subcaption{}
		\label{fig:acyclic-2-long}
	\end{minipage}
	\caption{Two embeddings of the skeleton of a P-node $\mu$ and the corresponding planarizations. The planarization in (a) contains a directed cycle, the one in (b) does not. Thick arrowed edges show the direction of the virtual edges; a double arrow indicates an undirected virtual edge.}
	\label{fig:cyclic-long}
\end{figure}

Suppose that the edges of the skeletons are directed as described
above, let~$\mu$ be a node with an embedding~$\mathcal E_\mu$ that
satisfies~{C1}--{C4}.
Recall from the remark after the proof of \cref{thm:spqr-outer-1-planar} that~$\skel(\mu^\star)$ and its embedding~$\mathcal E_\mu^\star$ depend only on~$\mathcal E_\mu$.  We call~$\mathcal E_\mu$ \emph{acyclic} if the planarization of the extended embedding~$\mathcal E_\mu^\star$ is
acyclic; see \cref{fig:cyclic-long} for examples.  It is not
surprising that indeed such embeddings are necessary for the existence
of an upward outer-1-planar embedding of~$G$.

\begin{lemma}[Necessity]
  \label{le:upwardG-implies-acyclic-skeletons}
  Let~$\mathcal E$ be an upward outer-1-planar embedding of $G$
  without unnecessary crossings.  Then the embedding of~$\skel(\mu)$
  induced by~$\mathcal E$ is acyclic for each P- or R-node~$\mu$ of the
  SPQR-tree~$\mathcal T$ of $G$.
\end{lemma}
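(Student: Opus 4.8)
The plan is to argue by contraposition: starting from a directed cycle in the planarization of the extended skeleton embedding~$\mathcal E_\mu^\star$, I would construct a directed cycle in the planarization of~$\mathcal E$, contradicting the fact that~$\mathcal E$ is upward, whose planarization is acyclic by \cref{th:characterization}. Since~$\mu$ is a P- or R-node, the remark following \cref{thm:spqr-outer-1-planar} guarantees that~$\skel(\mu^\star)$ and its embedding~$\mathcal E_\mu^\star$ are determined by~$\mathcal E_\mu$ alone, so the notion of $\mathcal E_\mu$ being acyclic is well defined. Moreover, by \cref{thm:spqr-outer-1-planar}, forming the $2$-clique sums of all extended skeleton embeddings reproduces exactly~$\mathcal E$; I would use this to realize the planarization of~$\mathcal E_\mu^\star$ as a ``local view'' of the planarization of~$\mathcal E$.

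Concretely, suppose~$\mathcal E_\mu$ is not acyclic and let~$Z$ be a directed cycle in the planarization of~$\mathcal E_\mu^\star$. Walking along~$Z$, I would replace each of its pieces by a directed walk in the planarization of~$\mathcal E$, as follows. A real edge of~$\skel(\mu^\star)$ is an edge of~$G$ and is kept unchanged; here it is important that every such real edge, including those that the extension moves into~$\mu$ from adjacent S-nodes, is an actual edge of~$G$ and is directed in~$Z$ exactly as in~$G$. A directed virtual edge~$e=\{u,v\}$ traversed from~$u$ to~$v$ is, by the definition of the orientation, backed by a directed $u$--$v$ path in~$\expn(e)$; I would replace the edge by (the image in the planarization of~$\mathcal E$ of) such a path, which is a directed $u$--$v$ walk. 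Finally, each dummy vertex of~$Z$ arises from a crossing of two real edges~$a,b$ of~$G$ in~$\mathcal E_\mu^\star$ (for R-nodes by Condition~\ref{c3}, for P-nodes between the two real edges produced by the extension); since the $2$-clique sum reproduces~$\mathcal E$, the same edges~$a,b$ cross in~$\mathcal E$, so this dummy vertex corresponds to an honest dummy vertex of the planarization of~$\mathcal E$, and the two strands of~$Z$ through it map to the two strands through the corresponding crossing.

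Concatenating these pieces in the cyclic order given by~$Z$ yields a non-trivial closed directed walk~$W$ in the planarization of~$\mathcal E$. Note that~$Z$ uses no undirected virtual edge, since such edges carry no orientation and hence cannot lie on a directed cycle; thus every piece is genuinely directed and~$W$ is consistently oriented. A non-trivial closed directed walk contains a directed cycle, contradicting the acyclicity of the planarization of~$\mathcal E$ guaranteed by \cref{th:characterization}. Hence~$\mathcal E_\mu$ must be acyclic.

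I expect the main obstacle to be the careful bookkeeping at the crossing vertices: one must verify that the crossing structure recorded in the extended skeleton~$\mathcal E_\mu^\star$ is faithfully reproduced in~$\mathcal E$ via the $2$-clique sums, so that each dummy vertex of~$Z$ really does correspond to a crossing of the same pair of real edges in~$\mathcal E$, and that the two passes of the directed walk through such a dummy vertex are routed through the matching pair of opposite strands. Once this correspondence is pinned down, the lifting of directed virtual edges to directed paths in their expansion graphs is routine, the only minor point being that distinct virtual edges of a P-node share only the poles~$u,v$, so the concatenation remains a well-defined closed walk.
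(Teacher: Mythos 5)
Your proof is correct and follows essentially the same route as the paper's: both invoke \cref{thm:spqr-outer-1-planar} to view $\mathcal E$ as the 2-clique sum of the extended skeleton embeddings, and both derive the contradiction from the fact that the planarization of the upward embedding $\mathcal E$ is acyclic. The only difference is one of detail: the paper compresses the transfer step into ``it follows that,'' whereas you make it explicit by lifting a hypothetical directed cycle of the planarized skeleton (real-edge segments, dummy crossing vertices, and directed virtual edges replaced by directed paths in their expansion graphs) to a closed directed walk in the planarization of~$\mathcal E$ -- a sound elaboration of exactly the argument the paper intends.
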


\begin{proof}
  By \cref{thm:spqr-outer-1-planar},
  $\mathcal E$ induces embeddings~$\mathcal E_\mu$ for each
  skeleton~$\skel(\mu)$ of~$\mathcal T$, which satisfy the
  Conditions~\ref{c1}--\ref{c5}.  Let~$\mathcal T^*$ be the extension
  of~$\mathcal T$ with respect to these embeddings and recall from the proof of \Cref{thm:spqr-outer-1-planar} that~$\mathcal E$ can be obtained as the 2-clique sum of the embeddings~$\mathcal E_\mu^\star$.  Let~$\mu$ be a
  P- or R-node of~$\mathcal T$ with embedding~$\mathcal E_\mu$ and let~$\mu^*$
  be the extended node in~$\mathcal T^*$ with
  embedding~$\mathcal E_\mu^*$.  As the only crossings
  in~$\mathcal E_\mu^*$ are between real edges, we can form its
  planarization by replacing each crossing by a dummy vertex, thereby
  subdividing the involved directed edges into paths of length~2.
  Since the upward embedding~$\mathcal E$ in particular induces an
  upward embedding of its planarization, it follows that the
  planarization of~$\mathcal E_\mu^\star$ does not contain a cycle
  that consists of directed edges, i.e.,~$\mathcal E_\mu$ is acyclic.
\end{proof}

\begin{lemma}[Sufficiency]
  \label{le:acyclic-skeletons-implies-upwardG}
  Let~$\{\mathcal E_\mu\}_{\mu \in V(\mathcal T)}$ be embeddings of
  the skeletons of~$\mathcal T$ that satisfy
  Conditions~\ref{c1}--\ref{c5}.  If~$\mathcal E_\mu$ is acyclic for all P- and R-nodes,
  then the combined embedding of
  $\{\mathcal E_\mu\}_{\mu \in V(\mathcal T)}$ is an outer-1-planar
  embedding of $G$ whose planarization is acyclic.
\end{lemma}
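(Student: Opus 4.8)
The first half of the statement---that the combined embedding is an outer-1-planar embedding of $G$---is already delivered by the converse direction of \cref{thm:spqr-outer-1-planar}, since the $\{\mathcal E_\mu\}_{\mu \in V(\mathcal T)}$ satisfy \ref{c1}--\ref{c5}. So the plan is to concentrate entirely on showing that the planarization of this combined embedding is acyclic. Recall from the proof of \cref{thm:spqr-outer-1-planar} that the combined embedding is the $2$-clique sum of the good extended embeddings $\mathcal E_\mu^\star$, so its planarization $\Gamma$ is obtained by planarizing each $\mathcal E_\mu^\star$ separately (every crossing of $\mathcal E_\mu^\star$ is between two real edges of $G$, so it becomes a dummy vertex subdividing two real edges) and then gluing the resulting plane graphs at their poles, discarding the virtual edges. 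In particular, every real edge of $G$ lives in exactly one extended skeleton, and all crossings are confined to the extended skeletons of P- and R-nodes.

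Assume, for a contradiction, that $\Gamma$ contains a directed cycle, and choose one, say $C$, that is vertex-simple. First I would record the key local fact, which rests on $G$ being a DAG: for a virtual edge $e=\{u,v\}$, the expansion graph $\expn(e)$ contains a directed pole-to-pole path in \emph{at most one} direction, and that direction agrees with the orientation assigned to $e$ in the ``directing virtual edges'' step (two opposite pole-to-pole paths, or a closed pole-to-pole walk, would yield a directed cycle in $G$). Combining this with the fact that $\expn(e)$ meets the rest of $G$ only in $\{u,v\}$ and that $C$ is vertex-simple, any maximal portion of $C$ running inside $\expn(e)$ is a \emph{single} directed path from one pole to the other, oriented exactly as $e$.

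The heart of the argument is then a contraction along the tree. Let $\mathcal T_C$ be the minimal subtree spanning the nodes whose extended skeletons contribute a real edge used by $C$, and induct on $|\mathcal T_C|$. Taking a leaf $\lambda$ of $\mathcal T_C$ with parent virtual edge $e_\lambda$, the fact above shows that $C$ enters and leaves $\expn(e_\lambda)$ exactly once, along a directed pole-to-pole path oriented like $e_\lambda$; replacing this path by the oriented virtual edge $e_\lambda$ in the parent skeleton keeps $C$ a directed cycle while shrinking $\mathcal T_C$. Iterating, $C$ is lifted to a directed cycle living entirely in the planarization of a single extended skeleton $\skel(\mu^\star)$. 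If $\mu$ is a P- or R-node, this contradicts the hypothesis that $\mathcal E_\mu$ is acyclic. If $\mu$ is an S-node, its extended skeleton is a simple cycle with a crossing-free embedding, so a directed cycle there forces all of its edges to be oriented cyclically; expanding each directed virtual edge back into a real directed path in its expansion graph produces a directed cycle in $G$ itself, contradicting that $G$ is a DAG. Either way we reach a contradiction, so $\Gamma$ is acyclic, and \cref{th:characterization} then completes the proof.

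I expect the bookkeeping of the contraction to be the main obstacle rather than any single deep idea. The delicate points are (i) justifying that each traversal of an expansion graph is a single, consistently oriented pole-to-pole path, which genuinely needs both vertex-simplicity of $C$ and the acyclicity of $G$ to exclude the reverse direction, and (ii) ensuring that the lifted object really is a directed cycle in the planarization of the parent's \emph{extended} skeleton---here one must keep the edge orientations of $\mathcal T$ and of its extension $\mathcal T^\star$ consistent, since the extension moves a few real edges between adjacent P- and S-nodes. The S-node base case, by contrast, is handled for free by the global acyclicity of $G$, which is precisely why the hypothesis only constrains P- and R-nodes.
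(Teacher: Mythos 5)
There is a genuine gap, and it sits exactly at your ``key local fact.'' You claim that any maximal portion of the cycle~$C$ running inside $\expn(e)$ is a single pole-to-pole directed path ``oriented exactly as~$e$,'' justifying this by the (correct) observation that $\expn(e)$ itself contains pole-to-pole directed paths in at most one direction. But $C$ lives in the \emph{planarization}, and the portion of~$C$ inside the planarization of~$\expn(e)$ may pass through dummy crossing vertices created by crossings in descendant P- and R-skeletons; at such a dummy vertex the path switches from one real edge onto the edge crossing it. Consequently the planarization of~$\expn(e)$ can contain a directed path from~$u$ to~$v$ even when $\expn(e)$ contains no directed $uv$-path at all --- for instance, two edges $(u,a)$ and $(b,v)$ of~$\expn(e)$ that cross yield $u \to x \to v$ through the dummy vertex~$x$. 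In that situation $e$ is \emph{undirected}, so your contraction step (``replace this path by the oriented virtual edge $e_\lambda$'') has no orientation to contract onto. Patching it by orienting $e_\lambda$ according to the traversal direction of~$C$ does not help either: the acyclicity hypothesis on~$\mathcal E_\mu$ only forbids cycles consisting of \emph{directed} edges in the planarized extended skeleton, so a skeleton cycle that uses an undirected virtual edge does not contradict it, and your final contradiction evaporates. The same defect breaks your S-node base case, which assumes all virtual edges on the skeleton cycle are directed.

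This undirected-edge case is precisely where the paper's proof does its real work, and it is also where the single-source hypothesis --- which your argument never invokes --- becomes indispensable. The paper projects~$C$ onto the planarized extended skeletons and proves a claim (via the two subtrees $\mathcal T_C$ and $\mathcal T_s$) that some P- or R-node~$\mu^\star$ carries a projected cycle~$C'$ such that the source~$s$ of~$G$ is not hidden in the expansion graph of any edge of~$C'$. If all edges of~$C'$ are directed, acyclicity of~$\mathcal E_\mu$ is contradicted; otherwise $C'$ contains an undirected virtual edge $\{u,v\}$, and then --- because $e$ is undirected and $s \notin \expn(\{u,v\})$ --- both poles $u$ and~$v$ are sources of $\expn(\{u,v\})$, hence sources of its planarization, which rules out any pole-to-pole directed path there, crossings notwithstanding. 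Without the single-source assumption this reasoning (and indeed the lemma, in the form the section needs it) fails, so a correct proof cannot avoid using it; yours does, which is the structural symptom of the gap described above.
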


\begin{proof}
  Let~$\mathcal T^\star$ be the extension of~$\mathcal T$ with respect
  to the embeddings $\{\mathcal E_\mu\}_{\mu \in V(\mathcal T)}$ of
  the skeletons and let~$\mathcal E$ denote the combined embedding of
  $\{\mathcal E_\mu\}_{\mu \in V(\mathcal T)}$. By
  \Cref{thm:spqr-outer-1-planar}, $\mathcal E$ is an outer-1-planar
  embedding of $G$.  In the remainder, we show that its planarization
  $G'$ is acyclic. Thus, assume for the sake of contradiction that
  $G'$ contains a directed cycle~$C$.

  We will consider the \emph{projection} of~$C$ to the planarizations
  of the skeletons of the nodes of~$\mathcal T^\star$.
  Let~$\mu^\star$ be a node of~$\mathcal T^\star$ and consider its
  planarization~$\skel^+(\mu^\star)$.  An edge~$e$
  of~$\skel^+(\mu^\star)$ belongs to the projection of~$C$ if either
  (i) $e$ is a real edge that belongs to~$C$ or (ii) $e$ is a virtual
  edge whose expansion graph contains a real edge that belongs to~$C$.
  Observe that for each planarized skeleton, the projection of~$C$ is
  either a cycle~$C'$ or a single edge.  In the former case, we also
  say that~$C$ \emph{projects} to the cycle~$C'$.
    
  \textit{Claim:} There exists a P- or R-node~$\mu^\star$ of
  $\mathcal T^\star$ where the cycle~$C$ projects to a cycle~$C'$ in
  the planarization $\skel^+(\mu^\star)$ such that the source~$s$ of
  $G$ is either a vertex of~$\skel^+(\mu^\star)$ or belongs to the
  expansion graph of an edge of~$\skel^+(\mu^\star)$ that does not
  belong to~$C'$.
	
  To prove the claim, observe that the skeletons where~$C$ projects to
  a cycle form a subtree~$\mathcal T_C$ of~$\mathcal
  T^\star$. Similarly, the nodes whose skeletons contain~$s$ as a
  vertex also form a subtree~$\mathcal T_s$ of~$\mathcal T^\star$.  We
  choose~$\mu^\star$ either as a common node of these two subtrees (if
  one exists) or as the node of~$\mathcal T_C$ that is closest to the
  subtree~$\mathcal T_s$.  In the former case, we have that~$s$ is a
  vertex of~$\skel^+(\mu^\star)$.  Note that, since~$s$ is a source,
  it is not contained in~$C$ and therefore it is also not contained in
  the projection of~$C$ to~$\skel^+(\mu^\star)$.  In particular~$\mu^\star$ cannot be an S-node.  In the latter case
  let~$e$ denote the virtual edge of~$\skel^+(\mu^\star)$ whose
  expansion graph contains~$s$.  If~$e$ is contained in the projection
  of~$C$ to~$\skel^+(\mu^\star)$, then~$\refn(e)$ belongs to
  $\mathcal T_C$ and is closer to~$\mathcal T_s$ than~$\mu^\star$,
  which contradicts the choice of~$\mu^\star$.  Note that this in particular excludes the case that~$\mu^\star$ is an S-node.  This completes the
  proof of the claim.
	
  Consider the node~$\mu^\star$ from the claim and let~$\mu$ be the
  corresponding node of~$\mathcal T$.  
  The projection~$C'$ of~$C$
  to~$\skel^+(\mu^\star)$ may contain both real and virtual edges.  If
  all edges of~$C'$ are directed, the cycle is already present
  in~$\skel^+(\mu^\star)$, i.e., the embedding~$\mathcal E_\mu$
  of~$\skel(\mu)$ is not acyclic; a contradiction.  Since real edges
  are always directed, it hence follows that~$C'$ contains an
  undirected virtual edge~$\{u,v\}$ of $\skel(\mu^\star)$.  Since in
  expanded skeletons only real edges have crossings, the
  edge~$\{u,v\}$ is uncrossed in the embedding of~$\skel(\mu^\star)$.
  Therefore~$\expn(\{u,v\})$ is not crossed by edges outside
  of~$\expn(\{u,v\})$.  In particular the planarization
  of~$\expn(\{u,v\})$ is connected to the rest of the graph only via the vertices~$u$ and~$v$
  in the planarization $G'$ of $G$.  Since~$\{u,v\}$ belongs
  to~$C'$, the planarization of~$\expn(\{u,v\})$ contains a directed
  path~$\pi$ between its two poles, say from~$u$ to~$v$.  However,
  since~$\{u,v\}$ is undirected and does not contain $s$, both~$u$
  and~$v$ are sources in~$\expn(\{u,v\})$ and hence they are sources
  also in its planarization.  This contradicts the existence of~$\pi$,
  and hence proves the lemma.
\end{proof}

In light
of~\cref{le:upwardG-implies-acyclic-skeletons,le:acyclic-skeletons-implies-upwardG},
it suffices to test whether the skeletons of~$\mathcal T$ admit embeddings that satisfy~\ref{c1}--\ref{c5} and that are acyclic for P- and R-nodes.  We note
that~\ref{c1}--\ref{c4} and acyclicity are local conditions that can
be checked and all solutions can be enumerated\footnote{Recall that
  P-node skeletons have at most five edges and each R-node skeleton is
  a $K_4$.} locally and independently for each skeleton.  On the other
hand~\ref{c5} is a global property, which states that for each real
edge~$e$ that connects two P-nodes in a series, only one of them may
put a crossing on~$e$.

For each node~$\mu$ of~$\mathcal T$, let~$\mathcal F_\mu$ be a subset of the embeddings of~$\skel(\mu)$ that satisfy~\ref{c1}--\ref{c4}.  We call~$\mathcal F_\mu$ the \emph{feasible embeddings} of~$\skel(\mu)$.  We are interested in whether we can choose for each node~$\mu$ a feasible embedding~$\mathcal E_\mu \in \mathcal F_\mu$ such that together they satisfy~\ref{c5}.  We call such a choice of embeddings \emph{consistent}.

To find a consistent choice of embeddings, we construct an auxiliary graph~$H$. The vertex set of~$H$  is~$\bigcup_{\mu \in \mathcal T} \mathcal F_\mu$.  We turn each set~$\mathcal F_\mu$ into a clique and we connect two embedding choices of different P-nodes if choosing both of them simultaneously violates~\ref{c5}.  It is then clear that~$H$ contains an independent set whose size equals the number of nodes of $\mathcal T$ if and only if there exists a consistent choice of embeddings.  In the following, we show that this can be decided in linear time by proving that $H$ has bounded treewidth~\citep*{DBLP:books/sp/CyganFKLMPPS15}.

\begin{lemma}
	The auxiliary graph~$H$ has bounded treewidth.
\end{lemma}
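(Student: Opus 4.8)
The plan is to show that the auxiliary graph $H$ has bounded treewidth by exhibiting a tree decomposition whose shape follows the SPQR-tree $\mathcal T$, with bags of bounded size. The key structural observation is that each vertex of $H$ is a feasible embedding of some skeleton, and that edges of $H$ arise \emph{only} from Condition~\ref{c5}, which couples two P-nodes connected in series through a single S-node. Thus the only interactions in $H$ are local with respect to $\mathcal T$.

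\medskip

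First I would bound the number of feasible embeddings per skeleton. By the two standing assumptions (each R-node skeleton is a $K_4$, each P-node skeleton has at most four virtual and at most one real edge), and since S-node skeletons are embedded planarly by~\ref{c2}, the set $\mathcal F_\mu$ has constant size for every node $\mu$: there are $O(1)$ ways to embed a $K_4$ with a prescribed crossing pair, $O(1)$ permutations of at most five parallel edges in a P-node (together with the $O(1)$ choices of which virtual edge is crossed and on which side), and exactly one relevant embedding for an S-node. Hence $|\mathcal F_\mu| = O(1)$, and each clique of $H$ has bounded size.

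\medskip

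Next I would describe the tree decomposition. I construct one bag $B_\mu$ for each node $\mu$ of $\mathcal T$ and set $B_\mu = \mathcal F_\mu$, i.e., the bag contains all feasible-embedding vertices of $H$ coming from $\skel(\mu)$. This already covers every clique of $H$ (each $\mathcal F_\mu$ lies inside its own bag). It remains to place the \ref{c5}-edges: such an edge joins a feasible embedding of a P-node $\rho$ to a feasible embedding of a P-node $\lambda$, where $\rho$ and $\lambda$ are linked through a common S-node $\mu$ as in the statement of~\ref{c5}. To cover these edges I enlarge each S-node bag: for the S-node $\mu$ with real edge $\{u,v\}$ and the two adjacent virtual edges refined by $\rho$ and $\lambda$, I add $\mathcal F_\rho$ and $\mathcal F_\lambda$ to $B_\mu$. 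Because an S-node has at most two such neighboring P-nodes that participate in a \ref{c5}-constraint through it, this enlarges $B_\mu$ by only $O(1)$ vertices, so all bags remain of bounded size. The connectivity condition for a tree decomposition holds because each vertex of $H$ belonging to $\mathcal F_\rho$ appears in $B_\rho$ and in the bags of the (at most two) adjacent S-nodes through which $\rho$ participates in a \ref{c5}-constraint; these nodes form a connected subtree of $\mathcal T$ since they are all at distance at most two from $\rho$ in $\mathcal T$ and lie along the path through $\rho$'s incident S-node. I would make this contiguity precise by routing the tree decomposition along $\mathcal T$ and, where necessary, inserting a constant number of intermediate bags to bridge a P-node to its neighboring S-nodes.

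\medskip

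\textbf{The main obstacle} will be verifying the connectivity (running-intersection) property rigorously: I must ensure that for every P-node $\rho$, the set of bags containing the vertices of $\mathcal F_\rho$ forms a connected subtree. Since~\ref{c5} only ever couples $\rho$ to P-nodes reachable through a single adjacent S-node, the vertices of $\mathcal F_\rho$ need to appear in $B_\rho$ and in the bags of $\rho$'s $O(1)$ incident S-nodes, and these are adjacent in $\mathcal T$; contiguity then follows by taking the union of $B_\rho$ over this small connected neighborhood. Once bounded bag size and contiguity are established, $H$ has bounded treewidth, and the maximum independent set of the prescribed size can be found in linear time by standard dynamic programming over the tree decomposition~\citep*{DBLP:books/sp/CyganFKLMPPS15}. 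I would close by remarking that the SPQR-tree, all skeletons, the orientations of virtual edges, and hence $H$ together with its tree decomposition are all computable in linear time, so the overall test runs in linear time, completing the proof of~\cref{th:outer-testing-short}.
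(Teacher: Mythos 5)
Your overall plan---a tree decomposition that follows the SPQR-tree with constant-size bags---is the right instinct, and your bound on $|\mathcal F_\mu|$ is fine, but there is a genuine gap at the crucial step: the claim that ``an S-node has at most two such neighboring P-nodes that participate in a \ref{c5}-constraint through it.'' This is false. An S-node skeleton is a cycle of \emph{unbounded} length that may contain many real edges and many virtual edges refined by P-nodes. Every real edge $\{u,v\}$ of that cycle whose two flanking edges are virtual and refined by P-nodes generates its own instance of Condition~\ref{c5}; for example, a skeleton cycle alternating real edges and P-node-refined virtual edges produces a number of \ref{c5}-constraints proportional to the degree of the S-node in $\mathcal T$. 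Under your construction, the bag $B_\mu$ of such an S-node $\mu$ would have to absorb $\mathcal F_\rho$ for \emph{all} of these P-nodes $\rho$ in order to cover the corresponding edges of $H$, so the bag size is not $O(1)$ but grows linearly with $\deg_{\mathcal T}(\mu)$, and the treewidth bound collapses. The connectivity issue you flag as ``the main obstacle'' is actually the lesser problem; bag size is where the argument breaks.

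The paper's proof is designed precisely around this difficulty. It observes that \ref{c5} only couples two P-nodes whose virtual edges are \emph{consecutive} in the S-node's skeleton cycle, so the coupling pattern among an S-node's neighbors is itself a cycle, not a clique. Replacing each S-node by a cycle through its neighbors (the ``closure'' of $\mathcal T$ with respect to the independent set of S-nodes) produces a graph whose blocks are edges or wheels, hence of treewidth at most $3$; then expanding every node into its constant-size set of feasible embeddings (a $c$-clique expansion, with $c \le 12$ by Auer et al.) multiplies the treewidth by $c$, giving the bound $3c \le 36$. To repair your proof you would need to do something equivalent: rather than one fat bag per high-degree S-node, route a path of constant-size bags around the S-node's hub-and-cycle (wheel) structure, which is exactly the decomposition implicit in the paper's closure argument.
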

\begin{proof}
    We first consider the following construction.
	Let $T$ be a tree, let~$I$ be an independent set of vertices in~$T$, and let~$c$ be a constant. 
	Let~$T'$ be the graph obtained by connecting for each vertex in~$I$ its neighbors by a cycle that visits them in some arbitrary order.  We say that~$T'$ is a \emph{closure} of~$T$ with respect to~$I$.   
	Since~$I$ is an independent set, each block of~$T'$ is either an edge or a wheel, and hence~$T'$ has treewidth at most~$3$.  Let~$T''$ be obtained from~$T'$ by expanding each vertex~$v$ of~$T'$ into a clique~$C_v$ of size at most~$c$ such that each edge~$uv$ of~$T'$ is expanded into a biclique completely connecting~$C_u$ and~$C_v$.  We call $T''$ the \emph{$c$-clique expansion} of~$T'$.  Clearly, the treewidth of~$T''$ is at most~$3c$.

    We now use this to bound the treewidth of $H$.
	For each S-node~$\lambda$ of~$\mathcal T$ the skeleton~$\skel(\lambda)$ defines a circular ordering of its virtual edges.  Note that embeddings of two different P-nodes~$\mu$ and~$\mu'$ can be connected by an edge only if they have a common S-node neighbor~$\lambda$ such that their corresponding virtual edges in~$\skel(\lambda)$ are consecutive in~$\skel(\lambda)$.  Therefore the auxiliary graph~$H$ is a subgraph of the graph obtained from the SPQR-tree~$\mathcal T$ by (i) forming the closure~$T'$ of~$\mathcal T$ with respect to the (independent set of) S-nodes, where we connect the neighbors of the S-node in the order in which the corresponding virtual edges appear in the skeleton of the S-node and (ii) taking the $c$-clique-expansion of~$T'$ where~$c$ is an upper bound on the number of embeddings satisfying~{C1}--{C4} for any skeleton.  \cite{DBLP:journals/algorithmica/AuerBBGHNR16} show that~$c \le 12$.  Hence the treewidth of~$H$ is at most~$36$.
\end{proof}

\begin{theorem}
  There is a linear-time algorithm for testing whether a given
  single-source graph admits an upward outer-1-planar embedding.
\end{theorem}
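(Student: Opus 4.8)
The plan is to assemble the characterization and the structural lemmas proved above into an explicit algorithm and to verify that every step runs in linear time. First, by \cref{lem:o1p-bico} it suffices to decide upward outer-1-planarity separately for each block of the input graph; since the block-cut tree is computable in linear time and the blocks have total size linear in that of the input, the task reduces to solving the problem for a biconnected single-source graph~$G$ in time linear in~$|G|$. For such a~$G$ I would compute its SPQR-tree~$\mathcal T$ in linear time \citep{DBLP:journals/siamcomp/HopcroftT73,dt-olpt-96} and then orient the edges of all skeletons as described above. The orientation is obtained by a single bottom-up sweep of~$\mathcal T$: a real edge keeps its direction from~$G$, while a virtual edge~$\{u,v\}$ is directed according to whether~$\expn(\{u,v\})$ contains a directed $u$-$v$ or $v$-$u$ path. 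Since such a path exists if and only if the poles are joined by a directed path in the already-oriented skeleton of the refining child, each virtual edge can be oriented from the orientations of its children, and the whole sweep runs in time linear in the total skeleton size.

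Next, for each node~$\mu$ I would compute the set~$\mathcal F_\mu$ of \emph{feasible} embeddings of~$\skel(\mu)$: those satisfying Conditions~\ref{c1}--\ref{c4} and, for P- and R-nodes, additionally acyclic. Because P-node skeletons have at most five edges and R-node skeletons are copies of~$K_4$, each skeleton has only a constant number of candidate embeddings---at most~$12$, as shown by \citet{DBLP:journals/algorithmica/AuerBBGHNR16}---and Conditions~\ref{c1}--\ref{c4} together with acyclicity are local and testable in constant time per candidate using the precomputed orientations. Hence all sets~$\mathcal F_\mu$ are produced in total linear time. Calling a selection of one embedding~$\mathcal E_\mu\in\mathcal F_\mu$ per node \emph{consistent} if it also satisfies the global Condition~\ref{c5}, \cref{thm:spqr-outer-1-planar} together with \cref{le:upwardG-implies-acyclic-skeletons,le:acyclic-skeletons-implies-upwardG} guarantees that a consistent selection exists if and only if~$G$ is upward outer-1-planar; moreover, from a consistent selection the corresponding outer-1-planar embedding of~$G$---which has an acyclic planarization and is therefore upward outer-1-planar by \cref{th:characterization}---can be reconstructed in linear time via the bijection of \cref{thm:spqr-outer-1-planar}.

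It then remains to find a consistent selection, or to report that none exists, in linear time. Following the construction above, I would build the auxiliary graph~$H$ on vertex set~$\bigcup_\mu \mathcal F_\mu$, turning each~$\mathcal F_\mu$ into a clique and joining two embeddings of distinct P-nodes exactly when choosing both violates Condition~\ref{c5}. A consistent selection then corresponds precisely to an independent set of~$H$ of size~$|V(\mathcal T)|$: such a set must take exactly one vertex from each clique~$\mathcal F_\mu$, and the absence of edges among the chosen vertices encodes~\ref{c5}. The graph~$H$ has linear size, because each~$\mathcal F_\mu$ is of constant size and edges of~$H$ only link P-nodes that appear consecutively around a common S-node, and by the preceding lemma its treewidth is at most~$36$. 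A maximum independent set of a graph of bounded treewidth is computable in linear time by dynamic programming over a tree decomposition \citep{DBLP:books/sp/CyganFKLMPPS15}; comparing its size to~$|V(\mathcal T)|$ decides the block, and summing over all blocks yields the desired linear-time test.

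The main obstacle, in my view, is that Condition~\ref{c5} is genuinely global---it couples the embedding choices of P-nodes that are two steps apart in~$\mathcal T$---so that the decision cannot be made independently per skeleton. The reduction to a maximum-independent-set instance on the bounded-treewidth graph~$H$ is exactly the device that reconciles this global coupling with the local, constant-time enumerability of feasible acyclic skeleton embeddings; the crux of the running-time analysis is therefore the bound on the treewidth of~$H$ established in the preceding lemma, which is what keeps the global search within a linear time budget.
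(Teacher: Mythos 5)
Your proposal is correct and follows essentially the same approach as the paper's own proof: reduction to biconnected blocks via \cref{lem:o1p-bico}, linear-time SPQR-tree computation, local enumeration of the constant-size sets~$\mathcal F_\mu$ of acyclic embeddings satisfying \ref{c1}--\ref{c4}, and resolution of the global Condition~\ref{c5} by a maximum independent set in the bounded-treewidth auxiliary graph~$H$, with correctness closed out by \cref{thm:spqr-outer-1-planar}, \cref{le:acyclic-skeletons-implies-upwardG}, and \cref{th:characterization}. The only substantive difference is that the paper's algorithm explicitly verifies up front that every R-node skeleton is a $K_4$ and every P-node skeleton has at most four virtual edges (rejecting otherwise), a validity check your version assumes implicitly; your added detail on the bottom-up orientation sweep for virtual edges is a correct elaboration of a step the paper leaves implicit.
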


\begin{proof}By \Cref{lem:o1p-bico} we may assume that~$G$ is
  biconnected.  We first compute the SPQR-tree~$\mathcal T$ of~$G$ in
  linear time~\citep{DBLP:conf/gd/GutwengerM00}.  Next, we check that
  the skeleton of each R-node of~$\mathcal T$ is a $K_4$, and that the
  skeleton of each P-node of~$\mathcal T$ has at most four virtual
  edges.  If this fails, we can reject the instance as it does not
  have an outer-1-planar
  embedding~\citep{DBLP:journals/algorithmica/AuerBBGHNR16}.  Next, we
  compute for each node~$\mu$ of~$\mathcal T$ the set~$\mathcal F_\mu$
  that contains all acyclic embeddings of~$\skel(\mu)$ that satisfy
  conditions~\ref{c1}--\ref{c4}.  Since skeletons of S-nodes have a
  unique planar embedding (\ref{c2}) and the skeletons of P- and
  R-nodes have bounded size, this can be done in total linear time.
  It then remains to consistently choose these embeddings so that also
  Condition~\ref{c5} is satisfied.  To this end, we construct the
  auxiliary graph~$H$ and compute a maximum independent set, which
  takes linear time as well~\citep{DBLP:books/sp/CyganFKLMPPS15}.  If
  the size of the maximum independent set is smaller than the number
  of nodes of~$\mathcal T$, there is no consistent choice of
  embeddings and we reject the instance.  Otherwise, by \cref{thm:spqr-outer-1-planar}
  this choice defines an outer-1-planar embedding~$\mathcal E$.  By
  \cref{le:acyclic-skeletons-implies-upwardG} the
  embedding~$\mathcal E$ has an acyclic planarization and is hence
  upward outer-1-planar by \cref{th:characterization}.  We note that,
  in the positive case, the embedding~$\mathcal E$ can also be
  constructed in linear time.
\end{proof}

\section{Conclusion}
\label{se:conclusion}

In this paper we initiated the study of upward~$k$-planar drawings, that is, upward drawings of directed acyclic graphs such that every edge is crossed at most~$k$ times for a given constant~$k$. We first gave upper and lower bounds for the upward local crossing number of various graph families, i.e., the minimum~$k$ such that every graph from the respective family admits an  upward~$k$-planar drawing. 
We strengthen these combinatorial results by proving that testing a DAG for upward~$k$-planarity is NP-complete even for $k=1$. On the positive side, testing upward outer-1-planarity for  single source digraphs can be done in linear time.
We conclude the paper by listing some open problems that may stimulate further research.
\begin{enumerate}
	\item Is there a directed outerpath that does not admit an upward 1-planar drawing?  
	
	\item Consider the class $\mathcal{O}_\Delta$ of outerplanar graphs (or even 2-trees) of maximum degree $\Delta$.  Is there a function~$f$ such that every graph in $\mathcal{O}_\Delta$ admits an $f(\Delta)$-planar upward drawing?   
	
	\item In light of the lower bounds in \cref{sec:lower-bounds}, it is natural to consider graphs with a special
	structure, in order to prove sublinear upper bounds on their
	(upward) local crossing number. 
	For example, Wood and Telle~\cite[Corollary 8.3]{wt-pdcng-NYJM07} 
	show that every (undirected) graph of maximum
	degree~$\Delta$ and treewidth~$\tau$ admits a (straight-line) drawing
	in which every edge crosses
	$\mathcal O(\Delta^2\tau)$
	other edges.  Can the {\em upward} local crossing number be bounded similarly by a function in~$\Delta$ and~$\tau$?
	
	\item Do planar graphs of maximum degree $\Delta$ have upward local crossing number $\mathcal O(f(\Delta) n^{1-\epsilon})$ for some function $f$ and some constant $\epsilon > 0$?
	
	\item Can upward outer-1-planarity be efficiently tested for multi-source and multi-sink DAGs? 
	
	\item Investigate parameterized approaches to testing upward $1$-planarity. 
	
\end{enumerate}

\renewcommand{\emph}[1]{{\color{black}\em #1}}

\nocite{DBLP:conf/gd/AngeliniCBDFKS09,DBLP:conf/gd/BhoreLMN21,DBLP:conf/gd/BekosLFGMR22}

\pagebreak

\bibliographystyle{plainnat}
\bibliography{upwards}

\end{document}